\pgfplotsset{compat=newest}
\newcommand{\R}{\mathbb{R}}
\newtheoremstyle{nonitalic}
{3pt}
{3pt}
{}
{}
{\bfseries}
{.}
{.5em}
{}
\theoremstyle{plain}
\newtheorem{theorem}{Theorem}[section] 
\theoremstyle{nonitalic}
\newtheorem{definition}[theorem]{Definition}
\newtheorem{example}[theorem]{Example}
\newtheorem{remark}[theorem]{Remark}
\theoremstyle{plain}
\newtheorem{lemma}[theorem]{Lemma}
\newtheorem{proposition}[theorem]{Proposition}
\theoremstyle{plain} 
\newtheorem*{theorem*}{Theorem}
\newtheorem*{conjecture*}{Conjecture}
\title{Constructing Interlocking Assemblies with Crystallographic Symmetries}
\author{Tom Goertzen\footnote{RWTH Aachen University, Email: tom.goertzen@rwth-aachen.de}}
\date{}
\begin{document}
\pagestyle{plain}

\maketitle

\begin{abstract}
   This work presents a construction method for interlocking assemblies based on planar crystallographic symmetries. Planar crystallographic groups, also known as wallpaper groups, correspond to tessellations of the plane with a tile, called a fundamental domain, such that the action of the group can be used to tessellate the plane with the given tile. The main idea of this method is to extend the action of a wallpaper group so that it acts on three-dimensional space and places two fundamental domains into parallel planes. Next, we interpolate between these domains to obtain a block that serves as a candidate for interlocking assemblies. We show that the resulting blocks can be triangulated, and we can also approximate blocks with smooth surfaces using this approach. Finally, we show that there exists a family of blocks derived from this construction that can be tiled in multiple ways, characterised by generalised Truchet tiles. The assemblies of one block in this family, which we call RhomBlock, correspond to tessellations with lozenges.
\end{abstract}

\section{Introduction}

Interlocking assemblies consist of a set of geometric bodies, called \emph{blocks}, such that restraining a subset of these blocks from moving leads to the immovability of all other blocks. In certain applications such as in civil engineering these assemblies have many advantageous properties such as incorporating modular design approaches meaning that damaged parts can be easily replaced and the assembly can be disassembled after use, see \cite{EstrinDesignReview}. In the literature, many known examples of (topological) interlocking assemblies admit \emph{planar crystallographic symmetries}, also known as \emph{wallpaper symmetries}. These symmetries correspond to doubly-periodic tilings, such that the neighbours of each tile are arranged the same. Well-known examples include interlocking assemblies with Platonic solids \cite{a_v_dyskin_topological_2003} that can be generated using the method presented in \cite{EstDysArcPasBelKanPogodaevConvex}. Wallpaper symmetries are omnipresent in nature, the arts, and engineering applications. For instance, M.C. Escher incorporated wallpaper symmetries into many of his artworks. His approach was to approximate a given shape by a tile, such that the resulting tile gives rise to a \emph{fundamental domain} of a wallpaper group that tiles the Euclidean plane. The approach of obtaining new fundamental domains is also studied by Heesch and Kienzle in \cite{HeeschFlaechenschluss}. We call this approach of deforming fundamental domains the \emph{Escher Trick}. For more background on the general theory of periodic tilings, we refer to \cite{DressTilings,grunbaum1987tilings}. In \cite{GoertzenFIB}, the authors present a formal definition of interlocking assemblies and describe the outline of a method for constructing candidates for interlocking assemblies based on the Escher Trick. In the following, we detail the construction, show how this method can be generalised and present an interesting family of blocks coming from this construction.

To be more specific, we define a \emph{block} $X$ as a non-empty connected, compact set (in the standard Euclidean topology) with $\overline{\mathring{X}}=X$, i.e.\ $X$ equals the closure of its interior. Moreover, we restrict to the case where the boundary $\partial X$ of a block $X$ is polyhedral. 
We say that for a non-empty countable index set $I$, a family of blocks $(X_i)_{i\in I}$  is called an \emph{assembly} if for all  $i,j\in I$ with $i\neq j$, we have $$X_i \cap X_j = \partial X_i \cap \partial X_j.$$ In the context of this paper, the index set is given by the group elements of a wallpaper group $G$ whose natural action on the Euclidean plane $\R^2$ is extended to the Euclidean space $\R^3$ such that there exists a block $X$ with $X_g=g(X)$, for all $g\in G$. We can obtain finite assemblies from such assemblies by taking subsets of $H\subset G$ such that the union of blocks $\bigcup_{h\in H}X_h$ is connected.
The immovability of blocks within an interlocking assembly can be encoded using the notion of a \emph{continuous motion}, which is a map of the form
$$\gamma:[0,1]\to \mathrm{SE}(3) \text{ with }\gamma(0)=\mathbb{I},$$
where $\mathrm{SE}(3)$ is the group of rigid Euclidean motions composed of rotations and translation and $\mathbb{I}$ is the \emph{trivial rigid motion} corresponding to the unit element of $\mathrm{SE}(3)$. 
\begin{definition}[\cite{GoertzenFIB}]\label{def:interlocking}
   An \emph{interlocking assembly} is an assembly of blocks $(X_i)_{i\in I}$ together with a subset $J\subset I$, called \emph{frame}, such that for all finite non-empty sets $\emptyset \neq T \subset I\setminus J$ and for all non-trivial continuous motions $(\gamma_i)_{i\in T}$ there exists $t\in [0,1]$ and $i,j\in I$ (set $\gamma_\ell\equiv \mathbb{I}$ if $\ell\notin T$ for $\ell\in \{i,j \}$) with 
\begin{equation}\label{eq:interlockingCondition}
    \gamma_i(t)(X_i) \cap \gamma_j(t)(X_j) \neq \partial \gamma_i(t)(X_i) \cap \partial \gamma_j(t)(X_j).
\end{equation}
\end{definition}

In Section \ref{sec:construcion}, we detail a construction method for assemblies with wallpaper symmetries based on the initial construction given in \cite{GoertzenFIB}. This method based on the Escher Trick yields blocks with triangulated boundary and possible candidates for interlocking assemblies. Moreover, we show that this method can be iterated to approximate blocks with a piecewise-smooth boundary.

In Section \ref{sec:VersaTiles}, we introduce a special family of blocks coming from a construction of tiles, called \emph{VersaTiles}, that lead to non-unique tessellations. This family of blocks contains the Versatile Block, first introduced in \cite{GoertzenFIB} and described in \cite{GoertzenBridges}, and the RhomBlock, whose assemblies correspond to tessellations with lozenges. 

In a forthcoming paper \cite{goertzen2024mathematical}, we show how to prove the interlocking property using infinitesimal motions and verify that the assemblies constructed in Section \ref{sec:construcion} indeed give rise to interlocking assemblies under certain assumptions.

\section{Constructing Assemblies with Wallpaper Symmetries}\label{sec:construcion}

In this section, we explore the construction of candidates for interlocking assemblies with wallpaper symmetries using an "Escher-like approach." We begin with an example to illustrate this construction, followed by a review of the basic definitions of planar crystallographic groups and fundamental domains. We then explain the details of the Escher Trick, which allows us to derive a new fundamental domain, $F'$, from an initial one, $F$, by deforming its edges. Subsequently, we introduce the method for constructing interlocking blocks based on this approach. Finally, we extend this method to create additional blocks.

\begin{figure}[H]
\centering
\begin{minipage}{.2\textwidth}
  \centering
  \includegraphics[height=5cm]{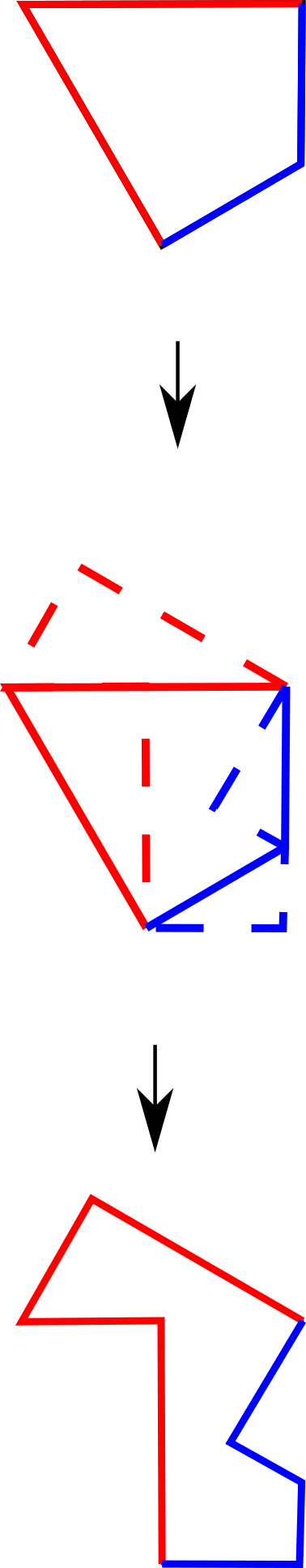}
  \subcaption{}
    \label{fig:p6EscherTrick}
\end{minipage}%
\begin{minipage}{.25\textwidth}
  \centering
  \includegraphics[height=5cm]{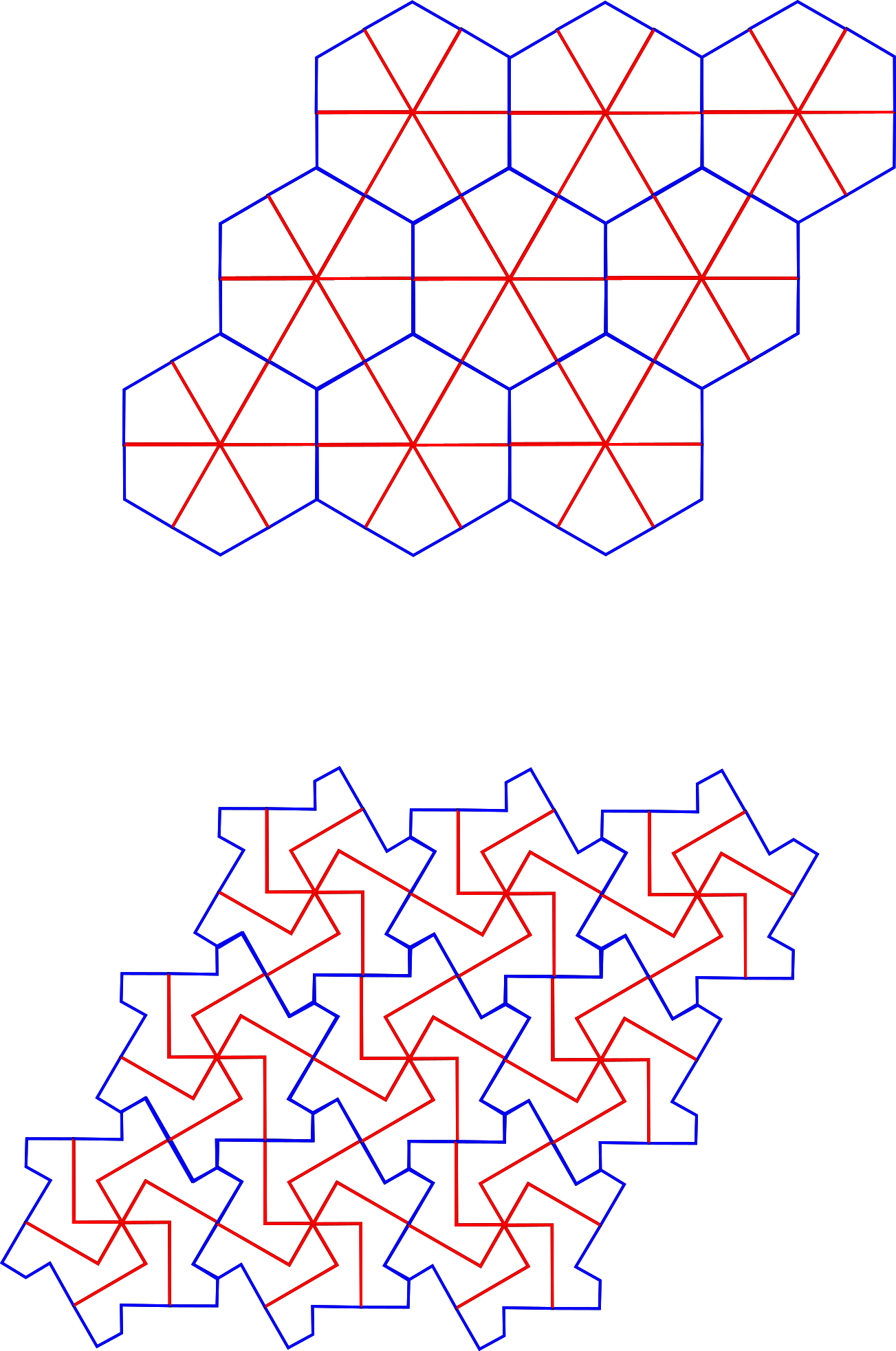}
  \subcaption{}
    \label{fig:P6Domains}
\end{minipage}
\begin{minipage}{.2\textwidth}
  \centering
  \includegraphics[height=5cm]{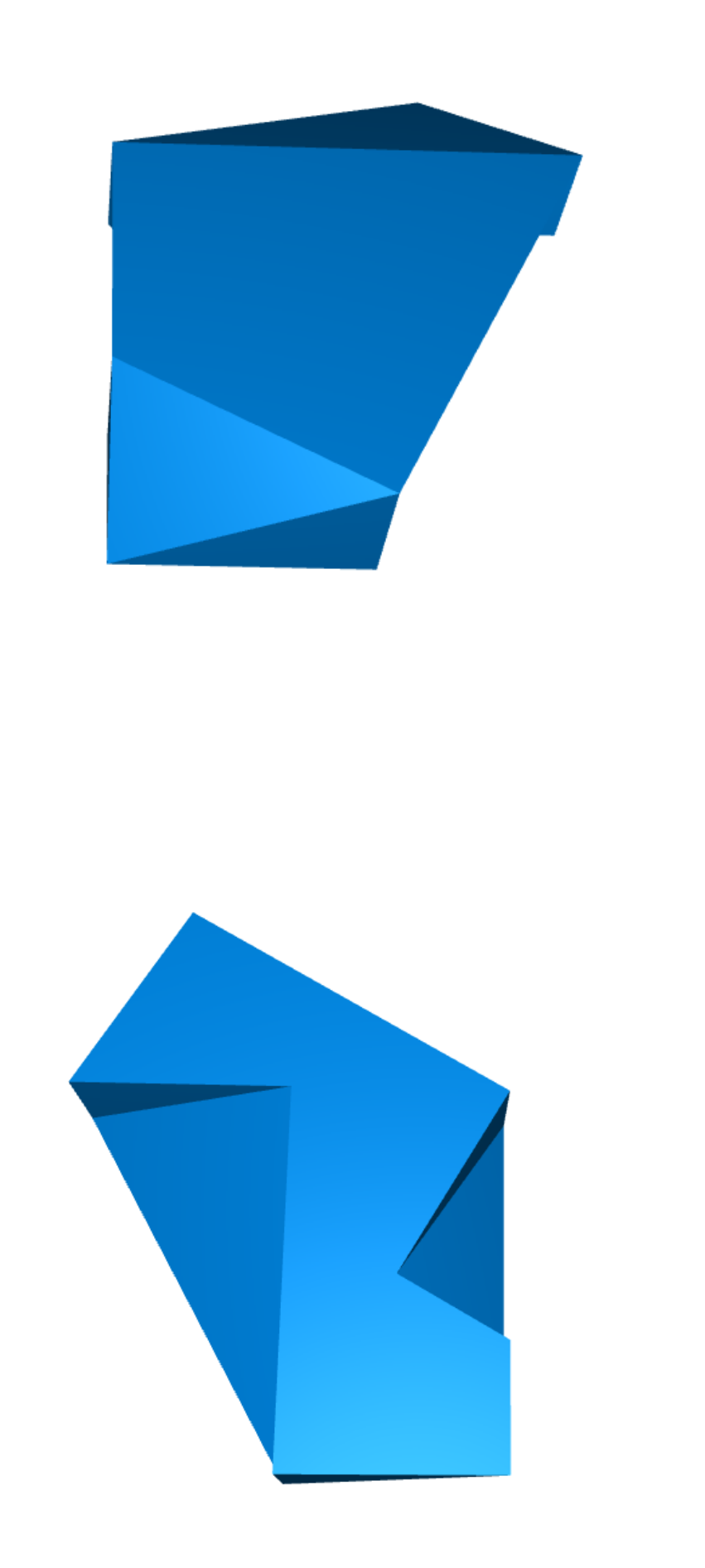}
  \subcaption{}
    \label{fig:P6Block}
\end{minipage}
\begin{minipage}{.3\textwidth}
  \centering
  \includegraphics[height=5cm]{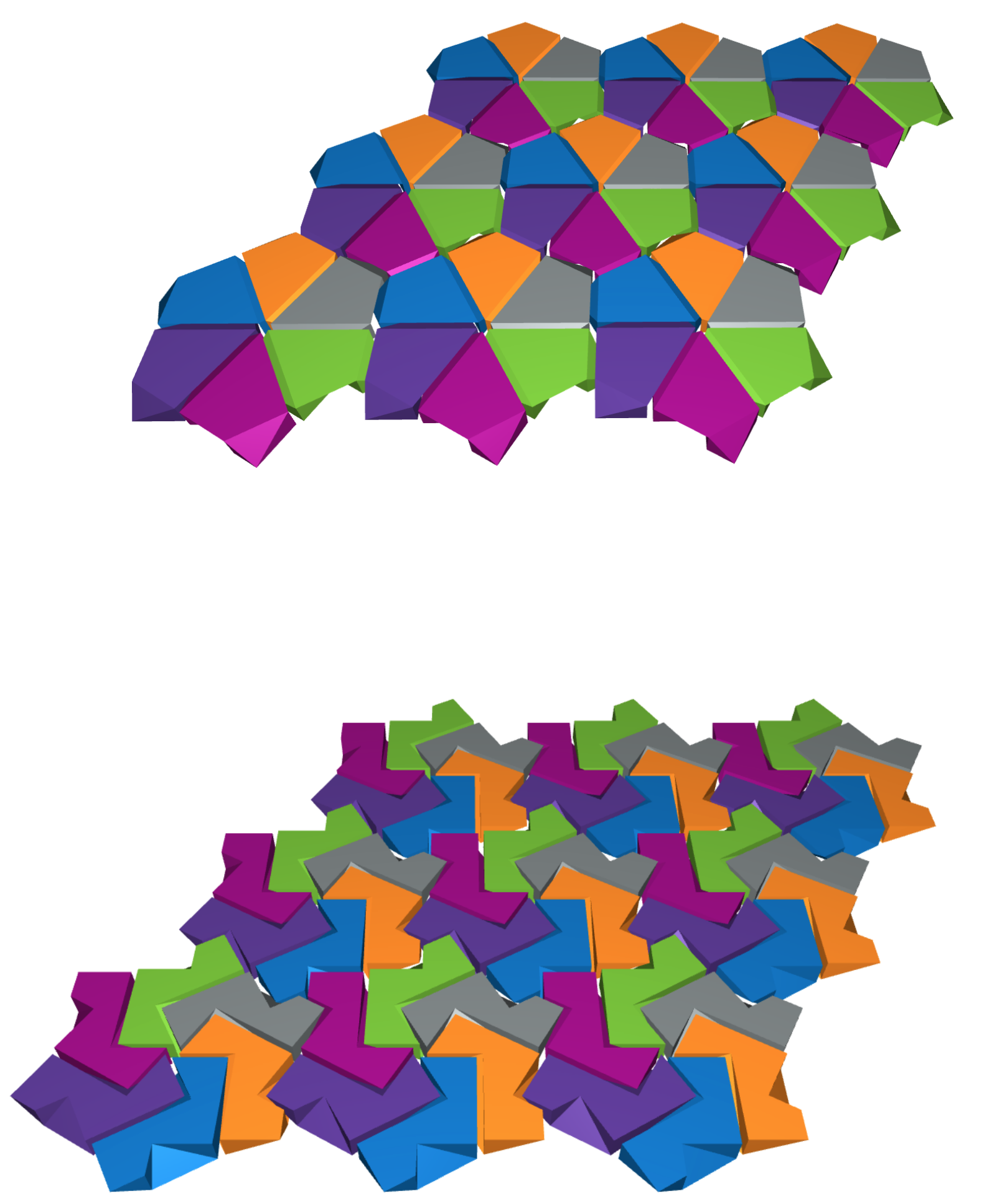}
  \subcaption{}
    \label{fig:P6Assembly}
\end{minipage}
\caption{Example of steps for generating interlocking blocks exploiting a wallpaper group of type p6. (a) Applying the \emph{Escher Trick}, i.e.\ deforming fundamental domains into each other, (b) related tessellations of the two domains, (c) block obtained by interpolating between the two domains, (d) resulting assembly with blocks coloured according to their arrangement.}
\label{fig:p6Example}
\end{figure}

\begin{example}\label{example:p6Block}
    For the example in Figure \ref{fig:p6Example}, we consider a wallpaper group $G$ of type p6 (using the international notation \cite{ITA2002}). This group is generated by the following isometries acting on the Euclidean plane $\R^2$:  a rotation matrix $$R_{60}=\begin{pmatrix}\frac{1}{2} & -\frac{\sqrt{3}}{2}\\
    \frac{\sqrt{3}}{2} & \frac{1}{2}
    \end{pmatrix}$$ rotating points by $60$ degrees around the origin  and two translations corresponding to the vectors $t_1=\left(2,0\right)^\intercal$ and $t_2=\left(1,\sqrt{3}\right)^\intercal$. A kite $F$, upper domain in Figure~\ref{fig:p6EscherTrick}, given by the points $\left(0,0\right)^\intercal,\left(1,0\right)^\intercal,\left(1,-\frac{\sqrt 3}{3}\right)^\intercal,\left(1,-\frac{\sqrt 3}{2}\right)^\intercal$, yields a fundamental domain for this group. The domain $F$ tiles the plane by repeatedly applying the rotation matrix $R_{60}$ and the translation $t_1,t_2$,  i.e.\ for $i=1,\dots,6$ the six kites $R_{60}^i(F)=R_{60\cdot i}(F)$ which are obtained by rotations around the origin  yield a hexagon which tiles the plane by applying the translations $t_1,t_2$, see the top part of Figure \ref{fig:P6Domains}. Edges of the kite that are identified under this group action are coloured the same in Figure~\ref{fig:P6Domains}, and we can choose intermediate points defining a piecewise linear path to deform these edges. This yields a new fundamental domain $F'$, shown at the bottom of Figure \ref{fig:p6EscherTrick}, and we call this method of obtaining a new domain the \emph{Escher Trick}, which is described in more detail in the following sections. Finally, we can place the two domains $F,F'$ in different parallel planes and interpolate between them to obtain a block $X$, shown in Figure \ref{fig:P6Block}. This interpolation corresponds to a triangulation of the boundary of the block $X$. The block $X$ can be assembled by extending the action of the motions $R_{60},t_1,t_2$ onto the Euclidean space $\R^3$ by acting only on the first components of each vector, see Remark \ref{rem:extended_action}. This block is first introduced in \cite{vakaliuk2022} to develop shell structures based on interlocking blocks.
\end{example}

We can generalise this construction for any given wallpaper group $G$, by considering the set of all fundamental domains of $G$, i.e.\ $$\mathcal{F}=\{F \mid F \text{ is a fundamental domain of } G \}$$ and considering a continuous (with regard to the Hausdorff metric) map $\lambda:[0,1]\to \mathcal{F}$ such that the set $$ X_\lambda=\{(x_1,x_2,t)\in \mathbb{R}^3 \mid (x_1,x_2)\in \lambda(t),t\in [0,1] \}\subset \R^3 $$gives rise to a block. Each intersection of $X_\lambda$ with planes parallel to $\{(x,y,0) \mid x,y\in \R \}$ corresponds to an element of $\mathcal{F}$. Copies of the block $X_\lambda$ can be arranged by the action of $G$ such that we obtain an assembly, see Theorem \ref{thm:assembly_of_blocks}. Note that, the map $\lambda$ can be recovered from the block $X_\lambda$ using the projection map $\pi \colon \mathbb{R}^3 \to \mathbb{R}^2, (x_1,x_2,x_3) \mapsto (x_1,x_2)$ by taking $\lambda(t)=\pi (X_\lambda \cap \{x\in \mathbb{R}^3 \mid x_3=t \})$. In this section, we restrict ourselves to blocks with piecewise linear boundary and polygonal fundamental domains. For that reason, we describe a triangulation of the surface of a block $X$, as in the example above. 

\begin{remark}
    In \cite{AKLEMAN2019156,akleman_generalized_2020} a similar construction considering only Voronoi domains is presented which can be described as follows. Given a curve $\gamma \colon [0,1]\to \mathbb{R}^2$ such that $\gamma(t)$ is a point in general position for all $t\in [0,1]$, one can obtain $X_\lambda$ as above by setting $\lambda(t)=D(\gamma(t))$, where $D(\gamma(t))$ is the Voronoi domain of the point $\gamma(t)$. This construction based on Voronoi Domains can be generalised to settings independent of wallpaper groups and leads to many blocks which are candidates for interlocking assemblies, see \cite{ebert_voronoodles_2023,mullins_voronoi_2022}. In \cite{piekarski_floor_2020},  several constructions of interlocking blocks are given which are inspired by Frézier blocks \cite{frezier_theorie_1738} that can be also constructed by using the methods presented in this section.
\end{remark}

\subsection{Planar Crystallographic Groups}

This section covers the basic concepts of planar crystallographic groups, also known as wallpaper groups, including fundamental domains. Here, a \emph{wallpaper group} is an infinite subgroup of the Euclidean group $\mathrm{E}(2)$. The  Euclidean group $\mathrm{E}(2)$ is isomorphic to a semidirect product of the orthogonal group $\mathrm{O}(2)$ and the free abelian group $\R^2$. Moreover, an element $g$ of the Euclidean group $\mathrm{E}(2)$ can be viewed as an isometric map $g:\R^2 \to \R^2$, i.e.\ $g$ is bijective and for all $x,y\in \R^2$ we have $\lVert x-y \rVert_2=\lVert g(x)-g(y) \rVert_2$, and we can write $g(x)=Rx+v,$ for all $x\in \R^2$, for some orthogonal matrix $R\in \mathrm{O}(2)$ and some vector $v\in \R^2$.

\begin{definition}\label{wallpaper_group_definition}
    A  \emph{planar crystallographic group} or \emph{wallpaper group} is an infinite group $G\leq \mathrm{E}(2)$ of Euclidean motions acting on the plane $\mathbb{R}^2$ whose orbits $$G(x):=\{g(x) \mid g \in G \}\text{ for }x\in \mathbb{R}^2$$ satisfy the following two conditions:
\begin{enumerate}
    \item The orbit of any point in $\mathbb{R}^2$ under the action of $G$ is a \emph{discrete subset} of $\mathbb{R}^2$.
    \item There exists a compact subset $F\subset \mathbb{R}^2$, called \emph{fundamental domain}, such that the plane $\mathbb{R}^2$ can be tessellated by the orbits of $F$ under $G$, i.e.
    \begin{enumerate}
        \item $\bigcup _{g\in G} g(F) = \mathbb{R}^2$ and
         \item there exists a subset $V\subset \R^2$ with $\mathring{F} \subset V\subset F$  and $V$ is a \emph{transversal} of the action of $G$ on $\R^2$, i.e.\ each orbit of a point $x \in \R^2$ intersects $V$ in exactly one point, i.e.\ $|G(x)\cap V|=1$.
    \end{enumerate}
\end{enumerate}
\end{definition}

The definition of planar crystallographic groups extends to higher dimensions by replacing all occurrences of $2$ by $n$ in the definition above, yielding the definition of \emph{crystallographic space groups}. One can prove that there always exists a finite number of isomorphism types of groups. In general, the structure of crystallographic groups can be characterised by the extension of free abelian groups corresponding to translations extended by a finite group, see \cite{PleskenKristallographischeGruppen}.

\begin{remark}
    Any wallpaper group $G$ contains a normal subgroup $T$ of translations isomorphic to $\mathbb{Z}^2$ such that the factor group $G/T$, called \textit{point group}, is finite and can be viewed as a finite group of orthogonal transformations. Each wallpaper group is generated by a finite set of matrices corresponding to a generating set of the point group and two translations spanning a lattice isomorphic to $\mathbb{Z}^2$. This yields a doubly periodic structure of the tessellation of $\R^2$ with a given fundamental domain. For a given fundamental domain $F$ of $G$ the point group acts on $F$, yielding a \emph{translational cell} that tessellates the Euclidean plane using translation only, see Example \ref{example:p6Block}.
\end{remark}

    There are 17 wallpaper groups up to isomorphism, see for instance \cite{SymmetryThingsConway}, and we refer to them by their \emph{Hermann–Mauguin notation} also known as \emph{international notation} describing certain generating elements of the underlying group, see \cite{ITA2002}. This notation refers to special elements with non-trivial representative in $G/T$, such as 3-fold rotations (in the name p3) or glide reflections (in the name pg).  Other well-known notations include the \emph{orbifold notation}, see \cite{SymmetryThingsConway}. For more on the general theory on wallpaper groups, we refer to \cite{coxeterIntroductionToGeometry}.

The following examples give the generators for one of the seventeen wallpaper groups. Generators for each of the groups can be found in \cite{ITA2002}.

\begin{example}\label{example:p3_generators}
A group wallpaper group $G$ of type p3 can be generated by the rotation
$$R = \begin{pmatrix}
-\frac{1}{2} & -\frac{\sqrt{3}}{2} \\
\frac{\sqrt{3}}{2} & -\frac{1}{2}
\end{pmatrix}\in \mathrm{E}(2)$$
and the translations given by the vectors
$$ (0,1)^\intercal,\left(\frac{\sqrt{3}}{2},-\frac{1}{2}\right)^\intercal\in \mathrm{E}(2).$$
Different fundamental domains $F$ for $G$ are illustrated in Figure \ref{fig:p3fundamentaldomains}.
\end{example}

The main goal of this section is to construct three-dimensional assemblies with wallpaper symmetries. Thus, we need to extend the canonical action of a wallpaper group $G\leq E(2)$ onto $\R^3$.

\begin{remark}\label{rem:extended_action}
    Each element in $g\in \mathrm{E}(2)$ can be written as follows: $g:\R^2 \to \R^2, x\mapsto R\cdot x + v$, where $R\in \text{O}(2)$ is an orthogonal matrix  $v \in \mathbb{R}^2$ a translation vector and $R\cdot x$ is a matrix-vector-product. 
    The element $g$ can be identified with a matrix of the form 
$\begin{pmatrix} R & v \\ 0 & 1   \end{pmatrix}\in \mathrm{GL}(3)$ acting by vector matrix-multiplication on the affine space $\mathrm{Aff}(\R^2)=\left\{ \begin{pmatrix}
    x \\ 1
\end{pmatrix} \mid x\in \R^2 \right\}$. Using this identification, we can embed such matrices into $\text{E}(3)$ via the following map
$$\mathrm{E}(2) \to \mathrm{E}(3), \begin{pmatrix} R & v \\ 0 & 1   \end{pmatrix} \mapsto \begin{pmatrix} R & 0 & v \\ 0 & 1 & 0 \\  0 & 0 & 1  \end{pmatrix}.$$ Thus, we can extend the action of a wallpaper group $G$ onto the space $\mathbb{R}^3$.
\end{remark}

\subsection{Fundamental Domains and the Escher Trick}

In this section, we give examples of fundamental domains and describe how we can obtain new fundamental domains from a given one. Before we introduce so-called \emph{Dirichlet domains}, we start by giving an example of fundamental domains for wallpaper groups of type p3.

\begin{figure}[H]
\centering
\begin{minipage}{.33\textwidth}
  \centering
  \scalebox{0.1}{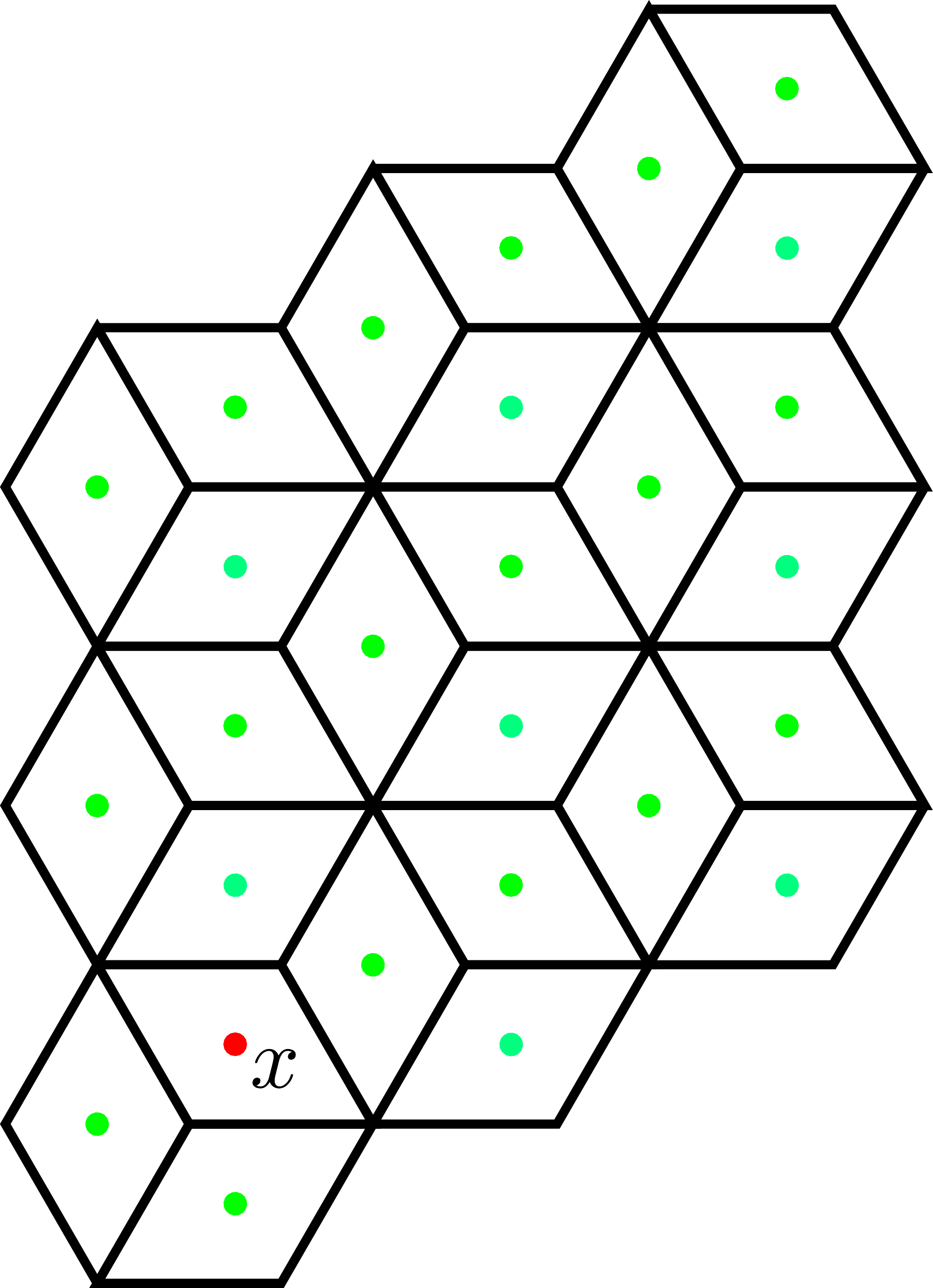}
  \subcaption{}
  \label{fig:p3Dirichlet}
  
\end{minipage}%
\begin{minipage}{.33\textwidth}
  \centering
  \scalebox{0.1}{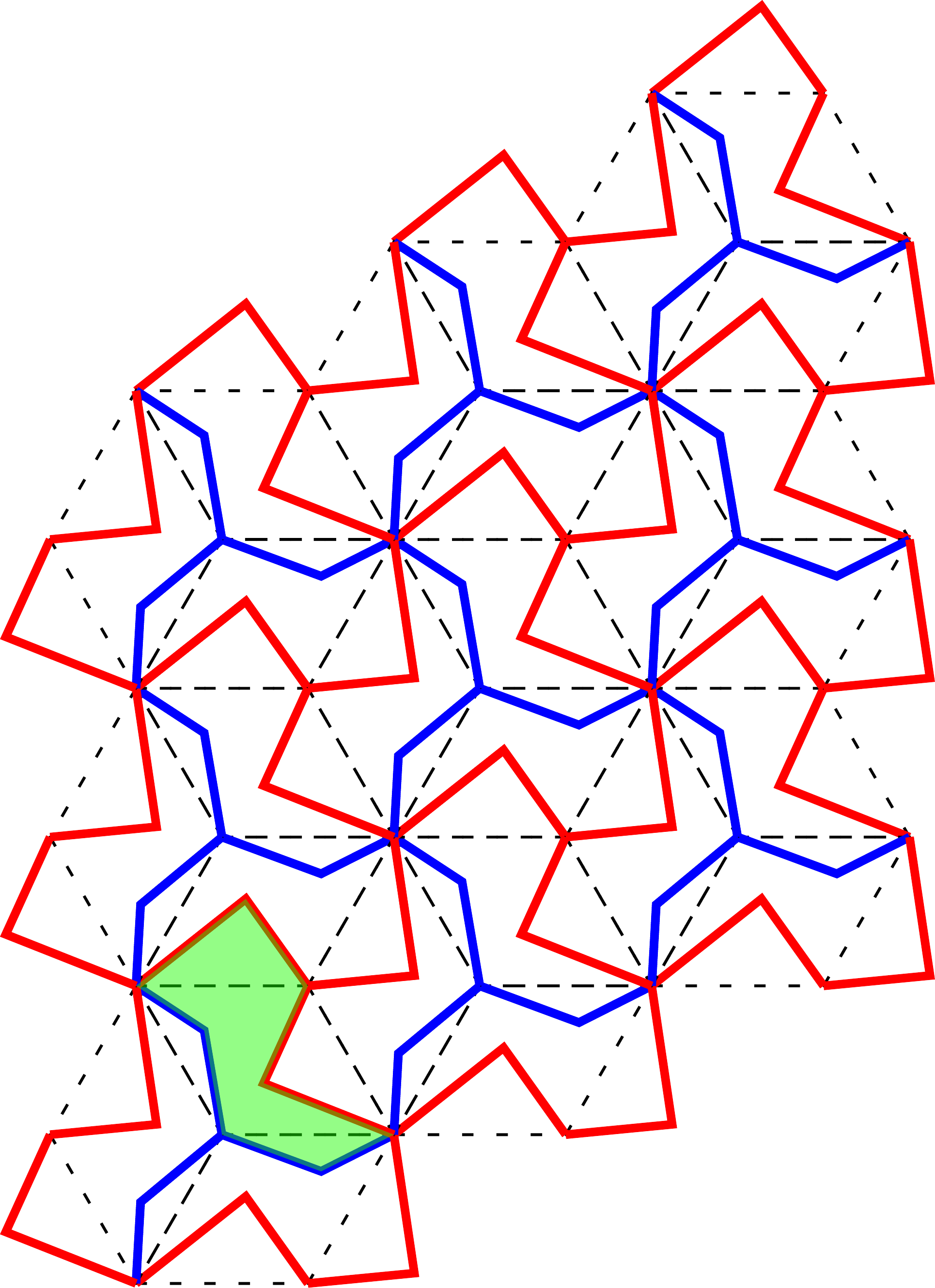}
  \subcaption{}
  \label{fig:p3discreteescher}
\end{minipage}
\begin{minipage}{.33\textwidth}
  \centering
  \scalebox{0.1}{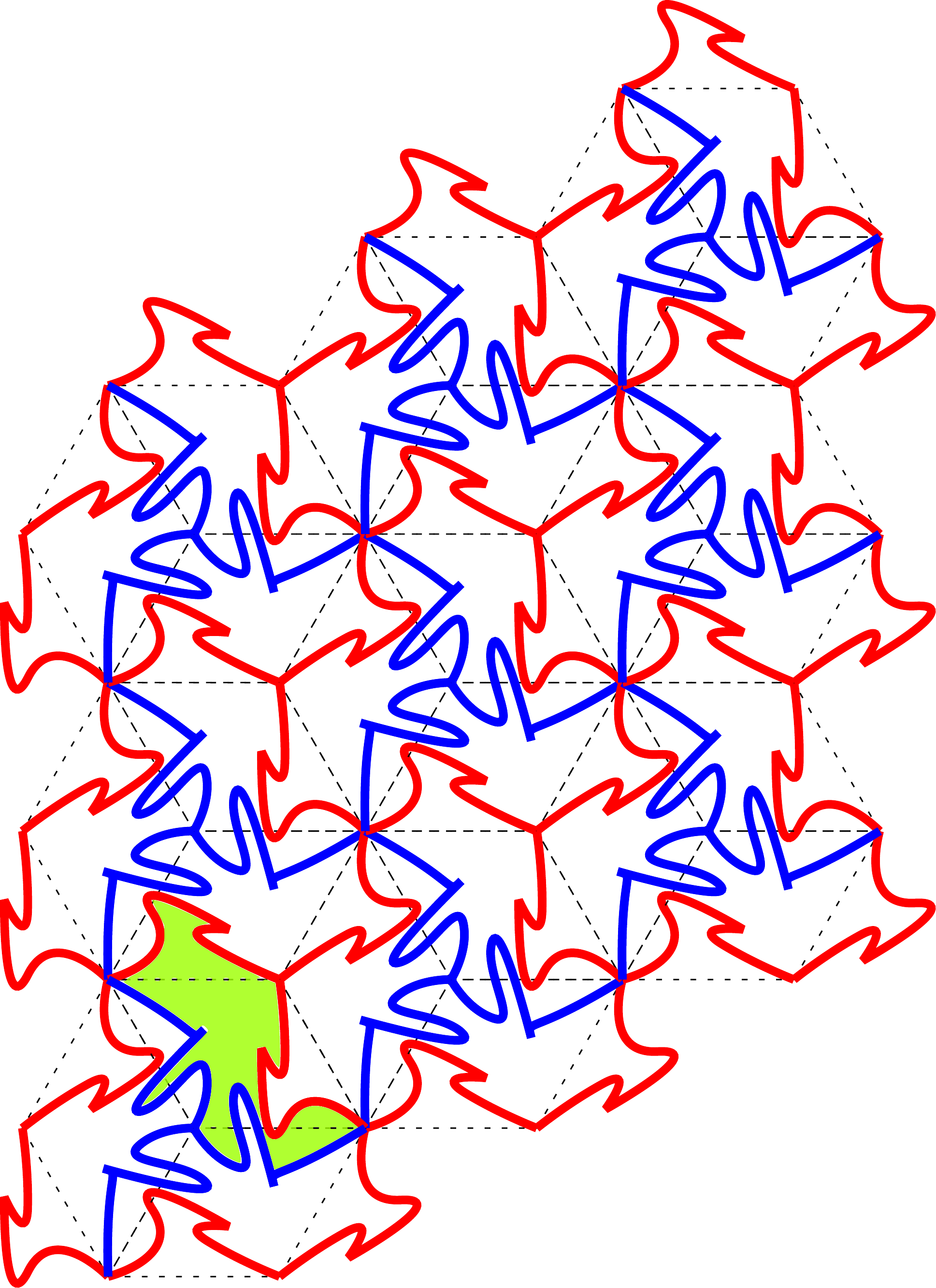}
  \subcaption{}
  \label{fig:p3smoothescher}
\end{minipage}
\caption{Several examples of fundamental domains for a wallpaper group $G$ of type p3 (see Example \ref{example:p3_generators}): (a) Dirichlet domain defined by a point $x$ with trivial stabiliser (in red)  and orbit (in green) (b,c) \emph{Escher Trick} used to obtain new fundamental domains from the domain in (a) by deforming edges. Edges that are mapped onto each other by the action of the underlying group $G$ are coloured the same.}
\label{fig:p3fundamentaldomains}
\end{figure}

\begin{example}
    Consider the wallpaper group $G$ of type p3 with generators given as in Example \ref{example:p3_generators}. In Figure \ref{fig:p3fundamentaldomains}, we see several examples of fundamental domains for this group. In Figure \ref{fig:p3Dirichlet}, the point $x$ is chosen to be the centre of the given fundamental domain. Moreover, it has the property that no non-trivial element $\mathbb{I}\neq g \in G$ fixes $x$, i.e.\ $g(x)\neq x$. The fundamental domain can then be obtained by considering all points that are closer to $x$, than any other point in the orbit $G(x)$. Such a domain is known as \emph{Dirichlet domain} or \emph{Voronoi domain} (see the definition below).
    The two domains depicted in Figure \ref{fig:p3discreteescher} and Figure \ref{fig:p3smoothescher} can be obtained from the first one, by ``deforming'' its edges. This method, called the \emph{Escher Trick} after the Dutch artist M.C. Escher, is detailed below.
\end{example}

In the following, let $G$ be a planar crystallographic group.

\begin{definition}\label{def:voronoi}
 We say that a point  $x\in \mathbb{R}^2$  is a \emph{point in general position} if $x$ satisfies $g(x)\neq x$ for all $\mathbb{I}\neq g \in G$. The \emph{Dirichlet domain} or \emph{Voronoi domain} for a point $x$ in general position is defined to be the set of all points $y\in \mathbb{R}^2$ which are closer to $x$ than to any other point in the orbit $G(x)$, i.e.
$$ D(x):=\big\{y\in \mathbb{R}^2 \mid \lVert x-y \rVert_2 \leq \lVert g(x)-y \rVert_2 \text{ for all } g\in G \big\}.$$
\end{definition}

Dirichlet domains exists for all crystallographic groups and  yield examples of fundamental domains with a polyhedral boundary.

\begin{lemma}[\cite{PleskenKristallographischeGruppen}]
    The Dirichlet domain $D(x)$, for a point $x$ in general position, is a bounded convex polyhedral fundamental domain.
\end{lemma}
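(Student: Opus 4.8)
The plan is to separate the statement into a convex-geometry part — $D(x)$ is a bounded convex polygon with nonempty interior — and a group-theoretic part — $D(x)$ is a fundamental domain in the sense of Definition~\ref{wallpaper_group_definition}.

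For the first part I would begin by squaring each defining inequality $\lVert x-y\rVert_2\le\lVert g(x)-y\rVert_2$; the quadratic term $\lVert y\rVert_2^2$ cancels, leaving an affine inequality in $y$, so $H_g:=\{y\in\R^2\mid \lVert x-y\rVert_2\le\lVert g(x)-y\rVert_2\}$ is a closed half-plane for every $g$ with $g(x)\neq x$, which by the general-position hypothesis is every $g\neq\mathbb{I}$ (and $H_{\mathbb{I}}=\R^2$). Hence $D(x)=\bigcap_{g\in G}H_g$ is closed and convex, and since $0=\lVert x-x\rVert_2<\lVert g(x)-x\rVert_2$ for all $g\neq\mathbb{I}$, the site $x$ lies in $\mathring{D(x)}$, so $D(x)$ is two-dimensional. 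For boundedness and polyhedrality I would bring in the translation subgroup $T\cong\mathbb{Z}^2\trianglelefteq G$ from the remark following Definition~\ref{wallpaper_group_definition}: the orbit $T(x)$ is a coset of a rank-two lattice, its Voronoi cell $D_T(x)=\bigcap_{t\in T}H_t$ is a bounded convex polygon containing $x$, and $D(x)\subseteq D_T(x)$ since $T\subseteq G$. Writing $\rho=\mathrm{diam}\,D_T(x)$, for $y\in D_T(x)$ one has $\lVert y-x\rVert_2\le\rho$, so $\lVert g(x)-y\rVert_2<\lVert x-y\rVert_2$ already forces $\lVert g(x)-x\rVert_2\le\lVert g(x)-y\rVert_2+\lVert y-x\rVert_2<2\rho$; consequently $D(x)=D_T(x)\cap\bigcap_{g:\,\lVert g(x)-x\rVert_2\le 2\rho}H_g$. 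By the discreteness axiom (and injectivity of $g\mapsto g(x)$, again from general position) only finitely many such $g$ exist, so $D(x)$ is a finite intersection of half-planes, i.e.\ a bounded convex polygon with nonempty interior.

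For the fundamental-domain property I would first prove the covering $\bigcup_{g\in G}g(D(x))=\R^2$: given $y\in\R^2$, discreteness of $G(x)$ makes the minimum $\min_{h\in G}\lVert h(x)-y\rVert_2$ attained, say by $h$, and applying the isometry $h^{-1}$ to all inequalities $\lVert h(x)-y\rVert_2\le\lVert g(x)-y\rVert_2$ and reindexing $g\mapsto h^{-1}g$ gives $h^{-1}(y)\in D(x)$, i.e.\ $y\in h(D(x))$. For the transversal I would show that $\mathring{D(x)}$ meets each orbit at most once: an interior point $y$ cannot lie on any bounding line $\partial H_g$, since points on the far side of that line arbitrarily close to $y$ would fail to be in $D(x)$; hence $x$ is the \emph{unique} nearest orbit point to $y$, and if $y'=g(y)\in\mathring{D(x)}$ as well, then applying $g$ shows $g(x)$ is the unique nearest orbit point to $y'$, forcing $g(x)=x$ and hence $g=\mathbb{I}$ by general position. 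Combined with the covering, every orbit meets $D(x)$ and meets $\mathring{D(x)}$ at most once, so a transversal $V$ with $\mathring{D(x)}\subseteq V\subseteq D(x)$ is obtained by adjoining, for each $G$-identified pair of boundary edges, the relative interior of exactly one of them, and, on the vertex stratum, exactly one representative from each $G$-cycle of vertices.

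The step I expect to be the main obstacle is this last one: checking that the elements of $G$ identify the finitely many edges of $\partial D(x)$ by a well-defined, fixed-point-free pairing, and that a consistent selection across edges and vertices really yields a set that meets every orbit exactly once. This is the familiar but somewhat delicate face-pairing phenomenon for Dirichlet/Poincaré fundamental domains, and it is exactly where the finiteness of the boundary structure established in the second step is used; everything else reduces to the elementary facts that perpendicular bisectors of point pairs are lines and that $G$ acts on $\R^2$ by isometries.
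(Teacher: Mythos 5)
The paper does not actually prove this lemma; it is quoted from \cite{PleskenKristallographischeGruppen} and used as a black box, so there is no internal proof to compare against. Your argument is the standard and correct one: squaring the defining inequalities to see that each $H_g$ is a closed half-plane, locating $x$ in the interior via the positive minimal distance to the rest of the discrete orbit, trapping $D(x)$ inside the Voronoi cell of the translation lattice to get boundedness, and then using the triangle inequality plus discreteness to reduce to finitely many active half-planes; the covering argument via the attained minimum and the uniqueness of the nearest orbit point for interior points are likewise sound. The one place where I would push back is your assessment of the final step: the face-pairing of boundary edges is \emph{not} needed to produce a transversal $V$ with $\mathring{D(x)}\subseteq V\subseteq D(x)$ in the sense of Definition~\ref{wallpaper_group_definition}. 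You have already shown that an orbit meeting $\mathring{D(x)}$ meets it in exactly one point, and a short computation (apply $g$ to the strict inequalities) shows such an orbit meets $D(x)$ \emph{only} in that interior point; every remaining orbit that meets $D(x)$ does so in a finite nonempty subset of the compact boundary, so choosing one representative per such orbit --- by any selection whatsoever --- already yields the required $V$. The edge-pairing structure (which your Lemma~\ref{lemma:even_edges} analogue would supply) is only needed if one insists on a geometrically tidy transversal, not for the existence asserted in the lemma. With that simplification your proposal is complete and correct.
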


We can cut one fundamental domain into several parts to obtain a new one. In Figure \ref{fig:p3fundamentalcutting}, this process is exemplified with two of the three domains given in Figure \ref{fig:p3fundamentaldomains}. Note that the two domains have the same area.

\begin{figure}[H]
\centering
\begin{minipage}{.25\textwidth}
  \centering
  \scalebox{0.3}{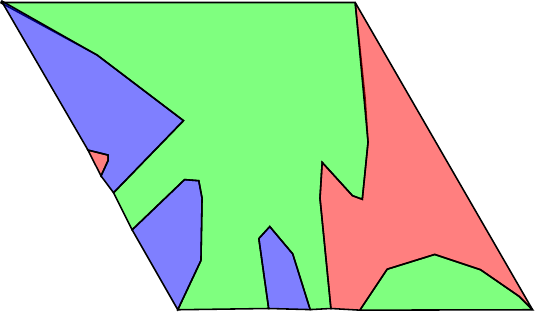}
  \subcaption{}
\end{minipage}%
\begin{minipage}{.4\textwidth}
  \centering
  \scalebox{0.3}{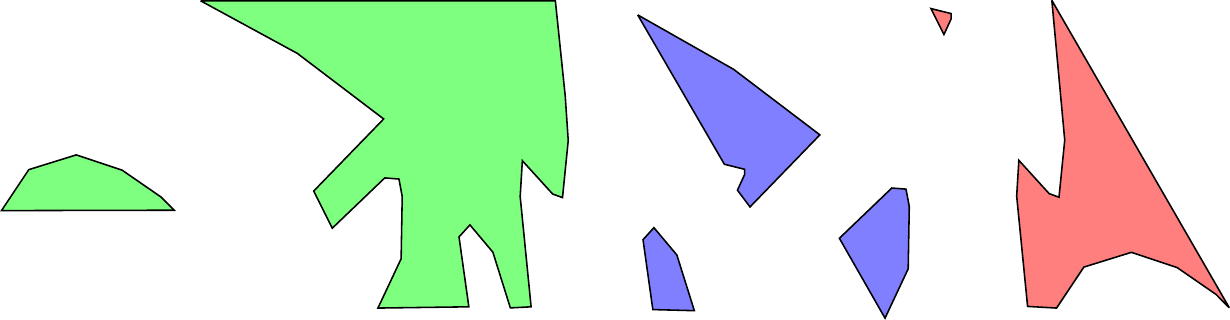}
  \subcaption{}
\end{minipage}
\begin{minipage}{.25\textwidth}
  \centering
  \scalebox{0.3}{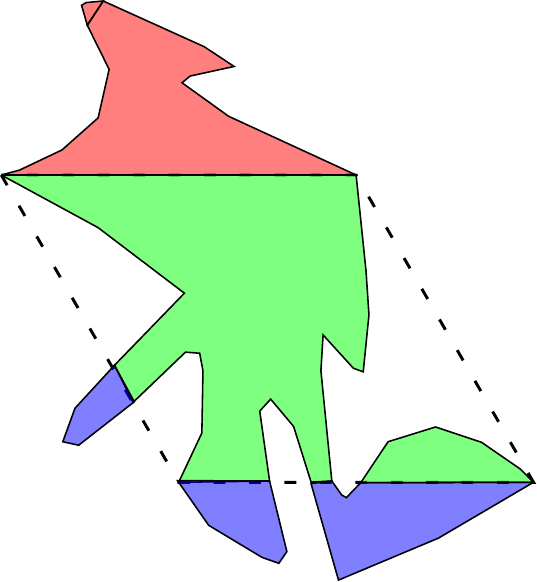}
  \subcaption{}
\end{minipage}
\caption{(a) The fundamental domain depicted in Figure \ref{fig:p3Dirichlet} can be subdivided into different coloured regions corresponding to the coloured paths in Figure \ref{fig:p3smoothescher}. (b) This process yields multiple pieces. (c) These pieces are then assembled respecting identified edges to form the fundamental domain shown in Figure \ref{fig:p3smoothescher}.
}
\label{fig:p3fundamentalcutting}
\end{figure}

In the proof of the following lemma, we see that the process of cutting up one fundamental domain to obtain another one is intuitive and can be generalised to crystallographic groups of any dimensions.

\begin{lemma}\label{lemma:volume}
    Any two fundamental domains of an $n-$dimensional crystallographic group $G$ have the same volume. 
\end{lemma}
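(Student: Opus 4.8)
The plan is to run the classical ``tiling exchange'' argument: when the translates of one fundamental domain $F_2$ partition $\R^n$ up to a null set, cutting an arbitrary second fundamental domain $F_1$ along this partition and translating each piece back by the relevant group element reassembles $F_1$ into $F_2$ (again up to a null set), and since every element of $G$ acts by a volume-preserving Euclidean isometry no volume can be created or destroyed. Throughout, let $\mu$ denote $n$-dimensional Lebesgue measure (the ``volume''). Before the main computation I would record two routine facts. First, $G$ is countable: as noted above every crystallographic group is finitely generated (an extension of $\mathbb{Z}^n$ by a finite point group), so all series indexed by $G$ below are series of nonnegative reals and may be rearranged freely. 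Second, for a fundamental domain $F$ with transversal $V$ as in Definition~\ref{wallpaper_group_definition}, the family $\{g(V)\}_{g\in G}$ is an \emph{exact} partition of $\R^n$, and since $\mathring F\subseteq V\subseteq F$ we have $g(V)\,\triangle\,g(F)\subseteq g(\partial F)$ for every $g$. In the setting of this section $\partial F$ is a finite union of lower-dimensional polytopes, hence $\mu(\partial F)=0$; it follows that $V$ is measurable with $\mu(V)=\mu(F)$, that the translates $\{g(F)\}_{g\in G}$ cover $\R^n$ with pairwise intersections of $\mu$-measure zero, and that $\mu\big(A\cap g(V)\big)=\mu\big(A\cap g(F)\big)$ for every Lebesgue-measurable $A\subseteq\R^n$ and every $g\in G$.

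With this in place, the computation is short. Let $F_1,F_2$ be two fundamental domains, with transversals $V_1,V_2$. Since $\R^n=\bigsqcup_{g\in G}g(V_2)$ and $F_1$ is compact, hence measurable, countable additivity gives
\[
  \mu(F_1)=\sum_{g\in G}\mu\big(F_1\cap g(V_2)\big)=\sum_{g\in G}\mu\big(F_1\cap g(F_2)\big).
\]
Each $g\in G$ lies in $\mathrm{E}(n)$ and therefore preserves $\mu$, so $\mu\big(F_1\cap g(F_2)\big)=\mu\big(g^{-1}(F_1)\cap F_2\big)$; substituting $h=g^{-1}$, which ranges over all of $G$, yields $\mu(F_1)=\sum_{h\in G}\mu\big(h(F_1)\cap F_2\big)$. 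Repeating the first displayed step with the roles of $F_1$ and $F_2$ exchanged (using the partition $\R^n=\bigsqcup_{h\in G}h(V_1)$ and the measurable set $F_2$) gives $\mu(F_2)=\sum_{h\in G}\mu\big(F_2\cap h(F_1)\big)$, and the two sums agree term by term. Hence $\mu(F_1)=\mu(F_2)$. I would also point out that this is precisely the rigorous counterpart of the ``cut and reassemble'' picture of Figure~\ref{fig:p3fundamentalcutting}: the pieces are the sets $F_1\cap g(F_2)$, and $g^{-1}$ carries each piece to its position inside $F_2$.

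The one point that deserves care — and the only real obstacle — is the measure-theoretic status of the boundary. The transversal condition alone does not force $\mu(\partial F)=0$: a compact set may satisfy Definition~\ref{wallpaper_group_definition} while having a boundary of positive measure (one can append a ``fat'' nowhere-dense closed set to an honest fundamental domain without disturbing the existence of a sandwiched transversal), and for such a domain the argument only shows that $\mu(\mathring F)$ equals the common covolume of $G$. For every fundamental domain appearing in this paper — Dirichlet domains, which are polyhedral, and the piecewise-linear domains produced by the Escher Trick — the boundary is a null set, so the hypothesis is automatic and the lemma holds as stated; I would simply note this restriction (or state the lemma for fundamental domains whose boundary has measure zero) rather than attempt to remove it.
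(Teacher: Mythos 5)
Your proof is correct and follows essentially the same route as the paper's: decompose one domain as $\bigcup_{g}(F_1\cap g F_2)$, use that the group elements preserve Lebesgue measure, and compare with the symmetric decomposition of $F_2$ (the paper reduces to a finite sum via compactness and discreteness, whereas you keep the countable sum over $G$, but this is immaterial). Your explicit treatment of the point that the translates only overlap in $\partial F$, which must be a null set for the additivity step to be valid, is a genuine refinement of the paper's sketch, which silently assumes this; your restriction to domains with measure-zero boundary is exactly the setting (polyhedral/piecewise-linear domains) in which the lemma is used.
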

\begin{proof}
    We sketch a simple proof of this lemma. Let $F_1,F_2$ be two fundamental domains for $G$. Then we can write $F_2=\bigcup_{g\in G} (F_2 \cap gF_1)$ and since the domain $F_2$ is compact and $G$ acts discretely there exist finitely many $g_1,\dots,g_k\in G$ with $F_2 \cap g_iF_1 \neq \emptyset$ and we have $F_2=\bigcup_{g\in G} (F_2 \cap gF_1)=\bigcup_{i=1}^k (F_2 \cap g_i F_1)$. If $\mathrm{vol}$ denotes the $n$-dimensional volume, it follows that $$\mathrm{vol}(F_2)=\sum_{i=1}^k  \mathrm{vol}(F_2\cap g_i F_1)=\sum_{i=1}^k  \mathrm{vol}(g_i^{-1} F_2\cap F_1)=\mathrm{vol}(F_1),$$ since the group elements are volume preserving.
\end{proof}

In the following, we restrict our attention to wallpaper groups $G$ such that $G$ does not fix any lines. If an edge of a fundamental domain $F$ for $G$ is fixed under the action of $G$ it would make the fundamental domain rigid, i.e.\ it is not possible to deform the domain. Therefore, using crystallographic notation (see \cite{ITA2002}), $G$ can be one of the following types: p1, p2, pg, p2gg, p3, p4 or p6 (see \cite{ITA2002} for generators of the groups).

In order to deform edges of a given fundamental domain for a wallpaper group of the above type, we identify edge pairs of fundamental domains. This is summarised in the following lemma.
\begin{lemma}\label{lemma:even_edges}
    Let $G$ be a wallpaper group of type \emph{p1, p2, pg, p2gg, p3, p4 or p6} and $F$ a polyhedral fundamental domain for $G$ such that $\overline{\mathring{F}}=F$. Then it follows that the edges of $F$ can be grouped into pairs such that the edges of each pair lie in the same orbit under the action of $G$, i.e.\ we can order the edges $e_1,\dots,e_n$  of $F$ such that we have $n=2m$ and there exist $g_1,\dots,g_m$ with $g_i(e_i)=e_{i+m}$. Here, we identify each edge $e_i$ as a set containing two vertices $v_i,w_i\in \R^2$ with $e=e_i=\{v_i,w_i \}$.
\end{lemma}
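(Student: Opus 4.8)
The plan is to analyse the edges of $F$ in terms of how the group $G$ identifies boundary points, exploiting the fact that the allowed types (p1, p2, pg, p2gg, p3, p4, p6) contain no reflections whose axes could pin an edge. First I would recall that since $F$ is a fundamental domain with $\overline{\mathring{F}}=F$, the boundary $\partial F$ is tessellated by the pieces $\partial F \cap gF$ for $g\in G$, and by discreteness only finitely many $g$ contribute. For each edge $e$ of $F$ (a maximal straight segment of $\partial F$ between two vertices), the interior points of $e$ that are not vertices of the tessellation have a well-defined ``partner'': there is an element $g\in G$, $g\neq \mathbb{I}$, such that a subsegment of $e$ is glued to a subsegment of some edge $g^{-1}(e')$; equivalently $g(e)$ shares a segment with another edge $e'$ of $F$. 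The first key step is to argue that after possibly subdividing edges at finitely many points (which the statement implicitly allows, since we are free to choose the combinatorial edge structure of the polygon $F$), each edge $e$ is matched in its entirety to exactly one other edge $e'$ by a single group element $g$ with $g(e)=e'$, so the edges come in pairs. The second key step is to rule out the degenerate case $e'=e$, i.e.\ an edge mapped to itself: if $g(e)=e$ with $g\neq\mathbb{I}$, then $g$ would be an isometry fixing the line through $e$ setwise, hence either a reflection in that line, a point reflection through its midpoint, or a half-turn about a point of $e$ — and in the first case $G$ contains a reflection (excluded for all seven listed types, which have no mirror lines), while in the latter cases $g$ fixes the midpoint of $e$ and reverses the edge, which would force the two halves of $e$ to be identified, contradicting that a transversal $V$ with $\mathring F\subset V\subset F$ exists (interior-adjacent boundary points would be in the same orbit). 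Hence every edge is paired with a distinct edge, giving $n=2m$ and elements $g_1,\dots,g_m$ with $g_i(e_i)=e_{i+m}$ after relabelling.

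The main obstacle I anticipate is the first step: making precise the claim that edges can be subdivided so that edge identifications are ``edge-to-whole-edge'' rather than partial overlaps of segments. Concretely, the set of identification breakpoints on $\partial F$ — points where the identifying group element changes, or where a vertex of $gF$ for some contributing $g$ lands — is finite by discreteness and compactness of $F$; I would declare these (together with the original vertices of the convex/polyhedral $F$) to be the vertices of the refined polygon, so that on each resulting edge $e$ the identification is constant, given by a single $g_e\in G$ with $g_e(e)$ a full edge of $F$. One then needs $g_e$ to be uniquely determined and the relation ``$e\sim e'$ if $g_e(e)=e'$'' to be an involution on the edge set without fixed points; uniqueness follows because two distinct elements identifying the same edge-segment would differ by an element fixing a segment pointwise, forcing it to be the identity (isometries of $\R^2$ fixing two points fix the line, and one more point off the line forces triviality — and a non-identity isometry fixing a whole segment is a reflection in its line, again excluded).

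I would write the argument in the order: (1) refine the polygon $F$ so identifications are edge-wise, citing discreteness of orbits and compactness of $F$ as in the proof of Lemma~\ref{lemma:volume}; (2) for each edge $e$ extract the unique nontrivial $g_e$ with $g_e(e)$ an edge of $F$, using that $\overline{\mathring F}=F$ so that boundary edges genuinely separate $\mathring F$ from a translate $g_e^{-1}(\mathring F)$; (3) show $g_e(e)\neq e$ via the case analysis on isometries fixing a line, invoking that none of p1, p2, pg, p2gg, p3, p4, p6 contains a reflection (so no mirror axis) and that a half-turn fixing the midpoint of $e$ would identify interior-adjacent points, contradicting existence of the transversal $V$; (4) conclude $\partial F$ has evenly many edges $e_1,\dots,e_{2m}$ which pair up, relabel so that $g_i(e_i)=e_{i+m}$, and note each $g_i$ maps the two vertices $\{v_i,w_i\}$ of $e_i$ to the two vertices of $e_{i+m}$, matching the stated conclusion.
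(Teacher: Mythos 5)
Your overall strategy matches the paper's: use a small ball around an interior point $p$ of an edge $e$, split by $e$ into $B\subset F$ and $B'$ outside, to show that exactly one non-trivial $g\in G$ glues $B'$ back into $F$, and invoke the absence of reflections in the listed group types. Your step (1), making the identification edge-to-edge by refining the vertex set, is a reasonable elaboration of what the paper leaves implicit. The problem is step (3).

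Your argument for excluding a self-paired edge is wrong. If $g$ is a half-turn about the midpoint $q$ of $e$ with $g(e)=e$, the two halves of $e$ are identified in the quotient — but this does \emph{not} contradict the existence of a transversal $V$ with $\mathring F\subset V\subset F$: the transversal is only required to contain the interior of $F$, and may omit boundary points, so two boundary points lying in one orbit is exactly the generic situation for a fundamental domain (every edge pair produces such orbits). Worse, the case you are trying to exclude actually occurs: all of p2, p2gg, p4 and p6 contain half-turns, and Example~\ref{example:p6triangle} exhibits an equilateral triangle as a fundamental domain for p6 in which one edge is carried to itself by the half-turn about its midpoint (a $2$-fold rotation centre of $G$). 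Note also that your refinement in step (1) would not rescue you here, since along such an edge the identifying group element is the same half-turn throughout and no vertex of the adjacent copy $g(F)$ lands in the edge's interior, so your breakpoint criterion introduces no new vertex at $q$. The correct treatment — the one the paper uses — is to declare boundary points fixed by non-trivial elements of $G$ to be vertices, so that the triangle is read as a quadrilateral and the two halves of the split edge become an honest edge pair $(e_i,e_{i+m})$ with $g_i$ the half-turn. With that amendment (and with your case analysis of isometries preserving a segment also including the reflection in the perpendicular bisector, likewise excluded for these group types), the rest of your argument goes through and coincides with the paper's proof of Lemma~\ref{lemma:even_edges}.
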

\begin{proof}
    Since $G$ does not fix any line, it follows that no edge of $F$ is left invariant under $G$. Now, assume that for a given edge $e=\{v,w\}$ of $F$  there exist $\mathbb{I}\neq g',g''\in G$  and $e',e''$  edges of $F$ such that $e'=g'(e),e''=g''(e)$. Take a point $p$  on $\mathrm{conv}(e)=\{\lambda v + (1-\lambda)w \mid \lambda \in [0,1]\}$ such that there exists $\varepsilon>0$  with $B_\varepsilon(p)$ is divided by $e$ into two non-empty sets $B,B'$  such that $B\subset F$. Such a point $p$ exists by the assumption  $\overline{\mathring{F}}=F$ and the fact that $G$ acts discretely. Then it follows that $g'(B'),g''(B')\subset F$ since $h(B)\not \subset F$ for all $ h\in G$ with $g\neq h$ and thus $g'=g''$. It follows that the edges of $F$ can be paired under the action of $G$.
\end{proof}
The existence of Dirichlet domains for any wallpaper group allows us to start with a given fundamental domain and deform the edges in such a way that we obtain a new fundamental domain. This method can be also used to approximate a given set by a fundamental domain, see \cite{EscherizationKaplan}.  For a visualisation of wallpaper groups and this method, we refer to the software \cite{Escher}. 

\begin{remark}
    The Escher Trick can be summarised in the following steps:
    \begin{enumerate}
    \item Start with a given fundamental domain $F$ for a wallpaper group $G$, for instance a Dirichlet domain.
    \item Identify edge pairs of $F$, i.e.\ edges that are identified under the action of the underlying group.
    \item For each edge pair $(e,e')$, choose an edge $e$ and an injective curve $\gamma_e$ with values in $\R^2$ and same endpoints as $e$ such that the orbit under the action of $G$ of all curves $\gamma_e$ do not ``cross'' but are allowed to ``touch'' (this condition is precisely stated for piecewise linear paths defined by three points in Definition \ref{def:crossing_piecewise}).
    \item We obtain a new fundamental domain with boundary given by the orbit of the curves $\gamma_e$.
    \end{enumerate}
\end{remark}

In Figure \ref{fig:escher_p1}, we see an example of the Escher Trick. Here, we exploit the symmetries of p1 to create a free-form fundamental domain resembling a bird. The initial domain is given by a rhomb, i.e.\ a quadrilateral with all side lengths the same, which is a Dirichlet domain (choose its centre as defining point as in Figure \ref{fig:p3Dirichlet}). Edges that are identified are coloured the same, and we can define a curve for each such edge.

\begin{figure}
    \centering
    \begin{minipage}{\textwidth}
        \centering
        \includegraphics[height=1.5cm]{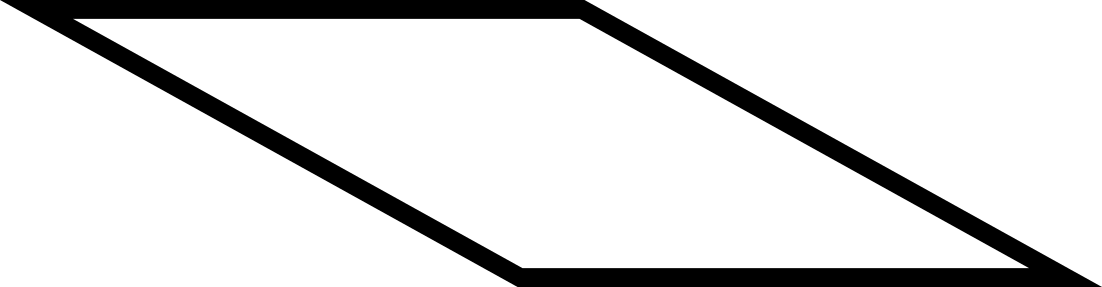}
        \includegraphics[height=1.5cm]{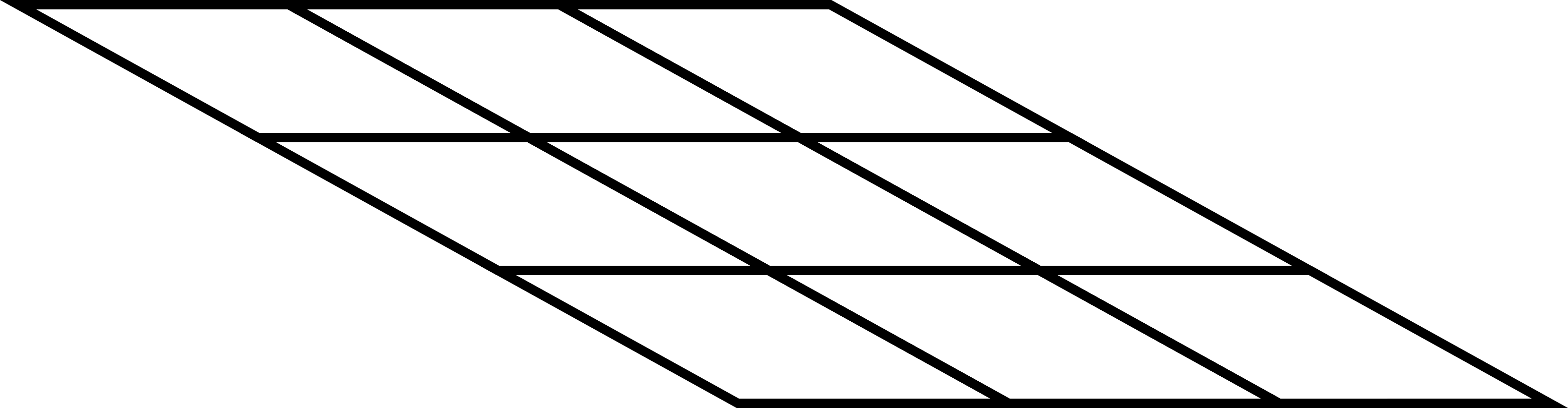}
        \subcaption{Dirichlet domain and corresponding tiling for wallpaper group of type p1.}
        \label{fig:p1_tiling}
    \end{minipage}
    
    \begin{minipage}{\textwidth}
        \centering
        \resizebox{!}{2.5cm}{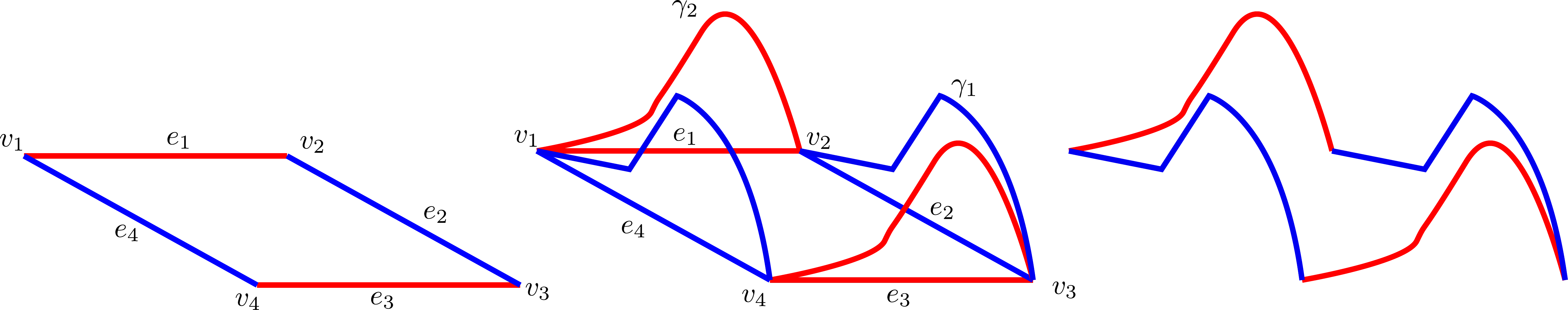}
        \subcaption{Steps of the Escher Trick for wallpaper group of type p1 using non-linear deformations.}
    \label{fig:p1_deformation}
    \end{minipage}
    
    \begin{minipage}{\textwidth}
        \centering
        \resizebox{!}{4cm}{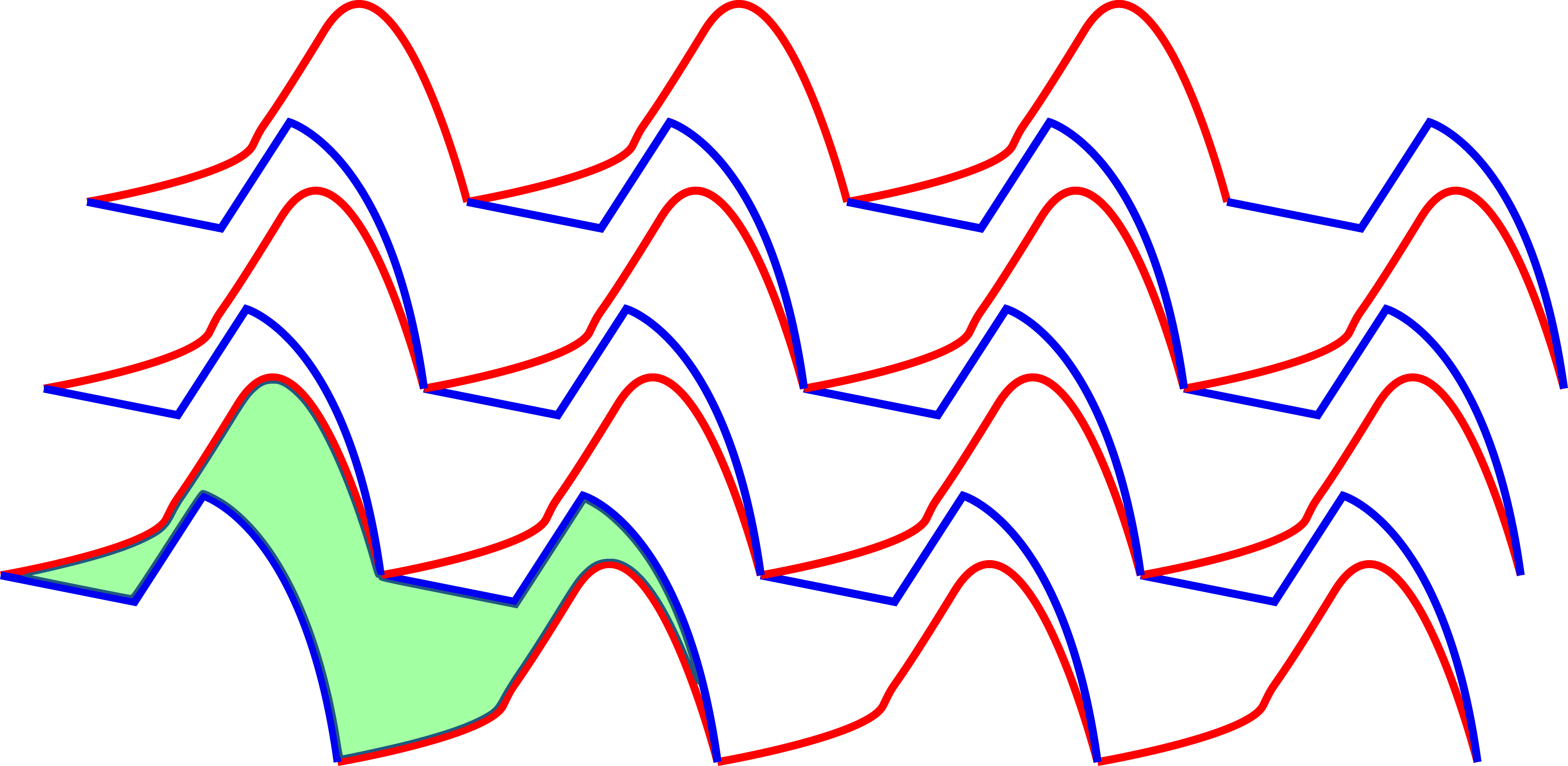}
        \subcaption{Resulting tiling from Figure (b) with one tile highlighted in green.}
    \label{fig:p1_tiling_deformed}
    \end{minipage}
    \caption{Example for the construction steps of the Escher Trick for wallpaper group of type p1: (a) We start with a given fundamental domain; (b) Identify edge pairs and deform edges using paths to obtain a new fundamental domain. (c) The acquired domain also gives rise to a tesselation of the plane.}
    \label{fig:escher_p1}
\end{figure}

Since, we enforce that the curves start and end at the defining points of the edges, we obtain a closed curve by concatenating all curves. Due to Jordan's curve theorem, proved by Jordan in \cite{jordan1887cours}, we obtain a bounded set, yielding a fundamental domain for the underlying wallpaper group $G$. 

The condition in step 3. above that the paths do not cross is enforced in order to obtain a unique fundamental domain. For two differentiable curves, $\gamma,\gamma'$, we can define crossing points for $t,t'$ with $\gamma(t)=\gamma'(t')$ by checking if $\Dot{\gamma}(t)\neq\Dot{\gamma'}(t')$ holds. Since we neither want to restrict ourselves to differentiable curves nor want to allow all curves, we reformulate this condition for piecewise linear curves defined by three points, two of which are the endpoints of a given edge. First, we define piecewise linear paths.

\begin{definition}
    For $v_1,v_2\in \R^2$, we define the \emph{piecewise linear path} or \emph{line segment} connecting $v_1,v_2$ as the continuous map $\gamma_{v_1,v_2}:[0,1]\to \R^2, \mapsto v_1+t\cdot (v_2-v_1)$. Moreover, we define the \emph{piecewise linear path} defined for $n\geq 3$ points $v_1,\dots ,v_n$ as the continuous map $\gamma_{v_1,\dots,v_n}:[0,1]\to \R^2$ defined inductively via $$\gamma_{v_{1},\dots,v_{n}}\left(t\right)=\begin{cases}
\gamma_{v_{1},\dots,v_{n-1}}\left(2\cdot t\right), & t\in[0,\frac{1}{2}],\\
\gamma_{v_{n-1},v_{n}}\left(2\cdot\left(t-\frac{1}{2}\right)\right), & t\in[\frac{1}{2},1]
\end{cases}.$$
\end{definition}

In Figure \ref{fig:crossing_paths}, we see several examples of crossing and non-crossing piecewise linear paths defined on three points. In the following definition, we formalise these observations.

\begin{figure}
    \centering
    \begin{minipage}{\textwidth}
        \centering
        \resizebox{!}{2cm}{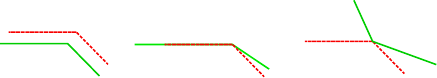}
        \subcaption{}
    \end{minipage}
    
     \begin{minipage}{\textwidth}
        \centering
        \resizebox{!}{2cm}{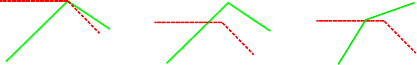}
        \subcaption{}
    \end{minipage}
    \caption{(a) Non-crossing or touching paths, (b) crossing paths.}
    \label{fig:crossing_paths}
\end{figure}

\begin{definition}\label{def:crossing_piecewise}
Let $v_1,p_1,v_2,v_3,p_2,v_4\in \R^2$ and consider the piecewise linear paths $\gamma_1=\gamma_{v_1,p_1,v_2},\gamma_2=\gamma_{v_3,p_2,v_4}$.
We say that the two piecewise linear maps defined by three points cross if they share a common point other than their endpoints, and for the line segment $L$ of the first path $\gamma_1$ that intersects the second path $\gamma_2$, we have that $\gamma_2$ lies on both sides of $L$. This can be formalised as follows:
\begin{enumerate}
    \item there exist $t_1,t_2\in (0,1)$ with $\gamma_1(t_1)=\gamma_2(t_2)$ (shared point);
    \item there exist $t,t' \in (0,1)$ such that $\gamma_2(t)$ can be written as $\gamma_2(t)=p_1+a\cdot (v_1-p_1)+b \cdot (v_2-p_1)$ with $a,b\geq0, a+b>0$ and $\gamma_2(t)$ can be written as $\gamma_2(t')=p_1+a'\cdot (v_1-p_1)+b' \cdot (v_2-p_1)$ with  $a<0$ or $b<0$.
\end{enumerate}
\end{definition}

In the following section, we see that this approach is consistent with the goal of obtaining triangulated surfaces.

\begin{lemma}\label{lemma:new_domain}
    Let $F$ be a polyhedral fundamental domain for a wallpaper group $G$ of type p1, p2, pg, p2gg, p3, p4 or p6 with $\overline{\mathring{F}}=F$ (as in Lemma \ref{lemma:even_edges}). Let  $v_1,\dots,v_n$ and  $e_1,\dots,e_{n}$ be the vertices and edges of $F$, respectively, such that $e_i$ is incident to the vertices $v_i$ and $v_{i+1}$ for $i=1,\dots,n-1$ and $e_n$ is incident to the vertices $v_n$ and $v_1$. Then we have that $n=2m$ is even, and we can reorder the edges such that all edges can be obtained from the edge representatives  $e_1,\dots,e_m$ under the action of $G$. Choose points $p_1,\dots,p_m$ such that the paths $G(\{\gamma_{v_i,p_i,v_{i+1}} \mid i=1,\dots,m \})$ do not cross (Definition \ref{def:crossing_piecewise}). Then we obtain a new fundamental domain $F'$ with boundary given by the path $\gamma_{v_1,p_1,v_2,\dots,v_n,p_n,v_1}$.
\end{lemma}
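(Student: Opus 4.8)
The plan is to reduce the lemma to the single assertion that the $G$-translates of the region bounded by the deformed boundary curve tile the plane, and to prove that by exhibiting a $G$-equivariant homeomorphism of $\mathbb{R}^2$ carrying the tessellation by copies of $F$ to the tessellation by copies of $F'$. The bookkeeping comes first. By Lemma~\ref{lemma:even_edges} we may take $n=2m$ and relabel so that there are $g_1,\dots,g_m\in G$ with $g_i(e_i)=e_{i+m}$, hence $g_i(\{v_i,v_{i+1}\})=\{v_{i+m},v_{i+m+1}\}$ (indices cyclic). Put $p_{i+m}:=g_i(p_i)$, so that as subsets of $\mathbb{R}^2$ the arc $\gamma_{v_{i+m},p_{i+m},v_{i+m+1}}$ equals $g_i(\gamma_{v_i,p_i,v_{i+1}})$; in particular all $n$ deformed edges lie in the orbit $G(\{\gamma_{v_i,p_i,v_{i+1}}\mid i\le m\})$. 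Concatenating them in cyclic order yields the closed path $\gamma:=\gamma_{v_1,p_1,v_2,\dots,v_n,p_n,v_1}$ with $\gamma(0)=v_1=\gamma(1)$.

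Next I would show that $\gamma$ is a Jordan curve. Each arc $\gamma_{v_i,p_i,v_{i+1}}$ is a simple segment-pair; two cyclically consecutive arcs meet only in the shared vertex $v_{i+1}$; and any two non-consecutive arcs are members of the orbit on which the non-crossing hypothesis (Definition~\ref{def:crossing_piecewise}) forbids a transverse intersection, hence are disjoint --- except for the ``touching'' case the definition tolerates, which I would either rule out by a genericity assumption on the $p_i$ or dispatch by a perturbation/limit argument, touching configurations being nowhere dense and the region constructed below being unchanged in the limit. Granting this, Jordan's theorem~\cite{jordan1887cours} produces a bounded complementary component $U$ of $\mathbb{R}^2\setminus\gamma([0,1])$; set $F':=\overline{U}$. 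Then $F'$ is compact, connected, a topological disk with polyhedral boundary $\partial F'=\gamma([0,1])$, and $\overline{\mathring{F'}}=F'$ since $U=\mathring{F'}$.

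The heart of the proof is that $F'$ is a fundamental domain. Since $F$ is a polygonal fundamental domain with $\overline{\mathring F}=F$ whose boundary is a single simple polygon, $F$ is a topological disk and $\{g(F)\}_{g\in G}$ is a tessellation (closures cover $\mathbb{R}^2$ by Definition~\ref{wallpaper_group_definition}, and the interiors $g(\mathring F)$ are pairwise disjoint via the transversal). Define a homeomorphism $\varphi\colon\partial F\to\partial F'$ by mapping, for $i\le m$, the edge $e_i$ onto $\gamma_{v_i,p_i,v_{i+1}}$ while fixing $v_i,v_{i+1}$, and on $e_{i+m}$ by $\varphi:=g_i\circ\varphi\circ g_i^{-1}$; this is consistent because edges overlap only at the $v_k$, which are fixed, and it satisfies $\varphi(g_i(x))=g_i(\varphi(x))$ for $x\in e_i$. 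By the Schoenflies theorem $\varphi$ extends to a homeomorphism $\Phi_0\colon F\to F'$ fixing the $v_k$, and the relation $\Phi_0(g_i(x))=g_i(\Phi_0(x))$ on $e_i$ forces the rule $\Phi(g(x)):=g(\Phi_0(x))$, $x\in F$, $g\in G$, to be well defined on $\mathbb{R}^2=\bigcup_g g(F)$: the only ambiguities occur on a shared edge of adjacent tiles $g(F)$ and $gg_i^{\pm1}(F)$, where the two defining formulas agree exactly by that relation, and at tiling vertices, where everything maps to the vertex. As the tiling is locally finite, $\Phi$ and its analogously built inverse are continuous, so $\Phi$ is a $G$-equivariant homeomorphism with $\Phi(F)=F'$. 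Then $\{g(F')\}_{g\in G}=\{\Phi(g(F))\}_{g\in G}$ is again a tessellation: $\bigcup_g g(F')=\Phi(\mathbb{R}^2)=\mathbb{R}^2$, the interiors $\Phi(g(\mathring F))$ are pairwise disjoint, and if $V$ is a transversal of $F$ then $\mathring{F'}\subset\Phi(V)\subset F'$ and $\Phi(V)$ meets each orbit once because $\Phi^{-1}(G(x))=G(\Phi^{-1}(x))$. Hence $F'$ is a polyhedral fundamental domain for $G$ with boundary $\gamma_{v_1,p_1,v_2,\dots,v_n,p_n,v_1}$.

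The main obstacle is precisely the third paragraph: making rigorous the intuitively clear statement that one may deform every edge of the tessellation $G$-consistently. Packaging the deformation as one $G$-equivariant ambient homeomorphism $\Phi$ is what makes the covering, the disjointness of interiors, and the transversal condition fall out together, and the role of the non-crossing hypothesis is exactly to guarantee that $\gamma$ --- and hence, via $\Phi$, every translate --- is embedded, so that $\Phi$ exists at all. The only remaining nuisance is the ``touching'' situation permitted by Definition~\ref{def:crossing_piecewise}, where $\gamma$ is merely weakly simple; I would handle it by strengthening the hypothesis to genuine non-touching, or by the nowhere-density/limit remark indicated above.
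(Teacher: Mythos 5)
Your proof is correct in substance but follows a genuinely different and considerably more detailed route than the paper. The paper's own proof is a three-sentence sketch: it invokes Jordan's curve theorem to obtain a bounded region from the concatenated closed path, notes that a touching configuration can be split into two closed subpaths each bounding a region, and then simply asserts that non-crossing implies the result is a fundamental domain. You instead supply the argument that the paper leaves implicit: after establishing that the concatenation is a Jordan curve, you build a $G$-equivariant homeomorphism of $\R^2$ by extending the edge-by-edge boundary deformation over $F$ via Schoenflies and propagating it by the rule $\Phi(g(x))=g(\Phi_0(x))$, checking well-definedness on shared edges (where it holds precisely because $\varphi$ on $e_{i+m}$ is conjugated by $g_i$) and at vertices (which are fixed). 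This makes the covering, the disjointness of interiors, and the transversal condition for $F'$ all fall out of $\Phi$ being an ambient equivariant homeomorphism, which is the cleanest rigorous justification of the step the paper asserts; an alternative closer to the paper's spirit would be the cut-and-paste/volume argument of Lemma \ref{lemma:volume} and Figure \ref{fig:p3fundamentalcutting}. The one point where you and the paper diverge on hypotheses is the touching case: Definition \ref{def:crossing_piecewise} and the lemma statement explicitly tolerate touching, the paper handles it by splitting the weakly simple curve, while you propose either to strengthen the hypothesis to genuine non-touching or to use a perturbation/limit argument; the latter needs a little care (the Schoenflies step requires an honest Jordan curve, and a pinch point degenerates the disk structure), but your treatment is no less rigorous than the paper's own one-sentence disposal of that case.
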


In the following, we often refer to the points $p_1,\dots,p_m$ above as \emph{intermediate points}, as they define piecewise linear paths of the form $\gamma_{v_i,p_i,v_{i+1}}$. Moreover, we say that an edge $e$ is \emph{deformed} by an intermediate point $p$, if $p$ is not contained in the convex hull of $e$, i.e.\ does not lie on the edge.  

\begin{proof}
    Jordan's curve theorem states that for a closed continuous curve in $\R^2$, $\gamma:[0,1]\to \R^2$ with $\gamma(0)=\gamma(1)$ and $\gamma_{[0,1)}$ injective, we obtain a unique bounded domain. If the concatenation of the paths has this property, we are done. If two paths touch, we can split the resulting path into two disjoint paths that both define a bounded region. Since no paths cross, this domain is a fundamental domain.
\end{proof}

Before we go to the main goal of this section, defining three-dimensional interlocking blocks, we show that we can interpolate between the domains $F$  and $F'$  defined in the previous lemma.

\begin{lemma}\label{lemma:interpolation}
    Consider the two domains $F,F'$ as defined in Lemma \ref{lemma:new_domain}.
    We can interpolate between the boundaries of $F$ and $F'$ with closed paths such that every path in between is the boundary of a fundamental domain, i.e.\ there exists a continuous map $\lambda:[0,1]\times [0,1]\to \R^2$ with $\lambda(0,[0,1])=\partial F$ and $\lambda(1,[0,1])=\partial F'$, and for every $z\in [0,1]$ we have that $\lambda(z,[0,1])$ is the boundary of a fundamental domain.
\end{lemma}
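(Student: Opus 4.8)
The plan is to build $\lambda$ by moving the intermediate points of Lemma~\ref{lemma:new_domain} affinely, in a way that commutes with the $G$-action, and then to apply Lemma~\ref{lemma:new_domain} separately to each parameter value. Concretely: for $i=1,\dots,m$ let $q_i$ be the midpoint of $e_i$, and for $i=m+1,\dots,n$ set $q_i:=g_{i-m}(q_{i-m})$ (which is again the midpoint of $e_i$, since $g_{i-m}$ is an isometry carrying $e_{i-m}$ onto $e_i$). For $z\in[0,1]$ put
$$p_i(z):=(1-z)\,q_i+z\,p_i\ \ (1\le i\le m),\qquad p_i(z):=g_{i-m}\!\bigl(p_{i-m}(z)\bigr)\ \ (m<i\le n),$$
and define $\lambda(z,\cdot):=\gamma_{v_1,p_1(z),v_2,\dots,v_n,p_n(z),v_1}$. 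Since the $p_i(z)$ are affine in $z$ and $\gamma_\bullet$ depends continuously on its defining points, $\lambda$ is jointly continuous on $[0,1]^2$. As $q_i\in e_i$, the path $\gamma_{v_i,q_i,v_{i+1}}$ has image $e_i$, so $\lambda(0,[0,1])=\bigcup_i e_i=\partial F$; and $\lambda(1,\cdot)=\gamma_{v_1,p_1,\dots,v_n,p_n,v_1}=\partial F'$ by the construction of $F'$. Hence, by Lemma~\ref{lemma:new_domain}, everything reduces to showing that for every $z\in[0,1]$ the orbit $G(\{\gamma_{v_i,p_i(z),v_{i+1}}\mid 1\le i\le m\})$ does not cross.

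For that step I would exploit monotonicity. For $z'\le z$ one has $p_i(z')\in[q_i,p_i(z)]\subseteq\mathrm{conv}(v_i,p_i(z),v_{i+1})$, hence
$$\gamma_{v_i,p_i(z'),v_{i+1}}([0,1])\ \subseteq\ \mathrm{conv}(v_i,p_i(z'),v_{i+1})\ \subseteq\ \mathrm{conv}(v_i,p_i(z),v_{i+1})\ \subseteq\ \mathrm{conv}(v_i,p_i,v_{i+1})=:T_i,$$
so the level-$z$ boundary always lies in the union $\bigcup_i T_i$ of the ``bump triangles'' of $F'$, and these triangles shrink monotonically onto subsets of the edges as $z\downarrow 0$. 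Consequently two translated level-$z$ paths lying in disjoint triangles $T_i$ and $g(T_j)$ cannot cross; and when $T_i\cap g(T_j)\neq\emptyset$, I would argue from the fact that $F'$ is a fundamental domain that the level-$1$ arcs $\gamma_{v_i,p_i,v_{i+1}}$ and $g\,\gamma_{v_j,p_j,v_{j+1}}$ — which together form part of $\partial F'$ and $\partial g(F')$ inside the overlap — are either identical, disjoint, or merely tangent, and that shrinking both bumps within the same pair of triangles keeps them on the same side of one another, so no transversal intersection can appear. This shows that the non-crossing property at $z=1$ propagates to all $z$; equivalently, the set of admissible tuples of intermediate points is star-shaped about the tuple $(q_1,\dots,q_m)$ coming from $F$.

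An alternative, and perhaps cleaner, route to the same conclusion is to produce a $G$-equivariant homeomorphism $\Phi\colon\R^2\to\R^2$ with $\Phi(\partial F)=\partial F'$, obtained from a homeomorphism $\psi\colon F\to F'$ that restricts on each $e_i$ to the affine reparametrisation $v_i\mapsto v_i,\ q_i\mapsto p_i,\ v_{i+1}\mapsto v_{i+1}$ onto $\gamma_{v_i,p_i,v_{i+1}}$ (extended into the interior by the Alexander--Schoenflies trick), by setting $\Phi|_{g(F)}:=g\circ\psi\circ g^{-1}$; this is well defined because $g_i$ is affine and $p_{i+m}=g_i(p_i)$, $q_{i+m}=g_i(q_i)$ give the edge-compatibility $\psi\circ g_i=g_i\circ\psi$ on $e_i$. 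Scaling the deformation (via the $p_i(z)$ above) gives a $G$-equivariant isotopy $(\Phi_z)_{z\in[0,1]}$ from $\mathrm{id}$ to $\Phi$, and a $G$-equivariant homeomorphism carries a fundamental domain to a fundamental domain (both conditions of Definition~\ref{wallpaper_group_definition} transport: $\bigcup_g g(\Phi_z(F))=\Phi_z(\bigcup_g g(F))=\R^2$, and $\Phi_z(V)$ is again a transversal), so $\lambda(z,\cdot):=\Phi_z\circ(\text{a fixed parametrisation of }\partial F)$ has all the required properties.

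I expect the one real obstacle to be this propagation step — equivalently, the star-shapedness of the admissible tuples, or the claim that $\Phi$ is equivariantly isotopic to the identity. The delicate situation is an overlap of bump triangles of two distinct edges, or of an edge and a $G$-translate of another, where one must use that such an overlap is exactly the interlocking of the complementary ``added'' and ``removed'' bumps forced by $F'$ being a fundamental domain, and that contracting bumps can only turn a tangency into another tangency, never into a transversal crossing. The remaining points — joint continuity of $\lambda$, the boundary values at $z=0$ and $z=1$, and the deduction that each intermediate curve bounds a fundamental domain — are routine given Lemma~\ref{lemma:new_domain} and Definition~\ref{wallpaper_group_definition}.
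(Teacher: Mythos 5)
Your overall strategy---write down an explicit one-parameter family of intermediate curves, check joint continuity and the boundary values at $z=0,1$, and then invoke Lemma~\ref{lemma:new_domain} level by level so that everything reduces to non-crossing of the $G$-orbit at each $z$---is exactly the paper's strategy. The one substantive difference is the choice of homotopy. You move the apex linearly from the midpoint $q_i$ to $p_i$, so the level-$z$ curve is the three-point path $\gamma_{v_i,p_i(z),v_{i+1}}$. The paper instead slides two points from the endpoints towards $p_i$, so its level-$z$ curve is a reparametrisation of the four-point path $\gamma_{v_i,\,v_i+z(p_i-v_i),\,v_{i+1}+z(p_i-v_{i+1}),\,v_{i+1}}$. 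That choice is not arbitrary: these four-point paths are precisely the intersections of the triangulation $X_{F,F'}$ of Definition~\ref{def:triangulation} (one tilted face, two vertical faces) with the planes $P_z$, which is what the later claim $\partial X_\lambda=X_{F,F'}$ rests on. Lifted to $\R^3$ (placing the level-$z$ curve at height $zc$), your family sweeps out a ruled but non-polyhedral lateral surface, since the four points $(v_i,0),(q_i,0),(v_i,c),(p_i,c)$ are coplanar only when $p_i$ lies on the line through $e_i$. So your $\lambda$ proves the present lemma but is not the $\lambda$ the rest of Section~2 uses.

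On the step you flag as the real obstacle---that non-crossing at $z=0$ and $z=1$ propagates to all intermediate $z$---you are in essentially the same position as the paper, which disposes of it with ``it follows similarly as in the proof of Lemma~\ref{lemma:new_domain}'' and likewise gives no argument for the case of overlapping bump triangles. Your containment $\gamma_{v_i,p_i(z),v_{i+1}}([0,1])\subseteq\mathrm{conv}(v_i,p_i,v_{i+1})$ correctly localises the problem to pairs of overlapping triangles in the orbit, and the star-shapedness formulation is the right target, but as you say yourself the interlocking case is left open; note also that your alternative equivariant-isotopy route has the same unproven core (each $\Phi_z$ being a global homeomorphism is equivalent to the level-$z$ orbit being non-crossing, and $F'$ need not be a topological disk when paths merely touch, so the Schoenflies extension needs care there). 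In short: same approach, a legitimately different homotopy for this lemma taken in isolation, and the one acknowledged gap is a gap the paper's own proof shares.
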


\begin{proof}
    Before we define the map $\lambda$, we define a continuous map $\Gamma_i:[0,1]\times [0,1] \to \R^2$ for each path, $\gamma_i=\gamma_{v_i,p_i,v_{i+1}}$, such that $\Gamma_i(0,[0,1])=e_i$ $\Gamma_i(1,[0,1])=\gamma_i$ for all $i=1,\dots,m$. 
    For a piecewise linear path $\gamma_{v_1,p,v_2}$ defined by three points $v_1,p,v_2 \in \R^2$, the map $\Gamma=\Gamma_{v_1,p,v_2}$ is given as follows:
    \begin{align*} & \Gamma_{v_1,p,v_2}:\left(0,1\right)\times[0,1]\to\mathbb{R}^{2},\\
     & (z,t)\mapsto\begin{cases}
    v_{1}+\left(\frac{2t}{z}\right)z\left(p-v_{1}\right), & t\in[0,\frac{z}{2}],\\
    z\left(p-v_{1}\right)+v_{1}+\left(\left(t-\frac{z}{2}\right)\frac{1}{1-z}\right)\left(z\left(v_{1}-v_{2}\right)+v_{2}-v_{1}\right), & t\in[\frac{z}{2},1-\frac{z}{2}],\\
    z\left(p-v_{2}\right)+v_{2}-\left(\left(t-\left(1-\frac{z}{2}\right)\right)\frac{2}{z}\right)z\left(p-v_{2}\right), & t\in[1-\frac{z}{2},1].
    \end{cases}
    \end{align*}
    For each $z\in (0,1)$, the map $t\mapsto \Gamma(z,t)$ can be viewed as a reparametrisation of the piecewise linear path $\gamma_{v_1,v_1+z(p-v_1),v_1+z(p-v_1),v_2}.$ We can extend $\Gamma$ continuously to a map on $[0,1]\times [0,1]$ by setting $$\Gamma(0,t)= v_1 + t \cdot (v_2-v_1)$$ and $$\Gamma\left(1,t\right)=\begin{cases}
    v_{1}+2t\left(p-v_{1}\right), & t\in[0,\frac{1}{2}],\\
    p+\left(2t-1\right)\left(v_{2}-p\right), & t\in[\frac{1}{2},1].
    \end{cases}$$
    We define $\lambda$ as the concatenation of all paths $\Gamma_i$. Next, it follows similarly as in the proof of Lemma \ref{lemma:new_domain}, that the maps $\Gamma_i(t)$ lead to a fundamental domain for all $t\in [0,1]$ such that the boundary is transformed continuously.
\end{proof}

Instead of considering piecewise linear paths with a single intermediate point, we can generalise the approach by also considering other types of non-crossing paths. For this, let $F$ be a fundamental domain with edge representatives given by $e_1,\dots,e_m$. We consider paths $\gamma_i$ for $i=1,\dots,m$ with the following properties: assume that for each edge $e_i$ the path $\gamma_i$ is obtained by a map $\Gamma_i \colon [0,1]\times [0,1]\to \mathbb{R}^2$ such that
\begin{enumerate}
    \item $\Gamma_i$ is continuous,
    \item for all $x\in [0,1]$, the map $\Gamma_i(x) \colon [0,1]\to \mathbb{R}^2$ is injective,
    \item $\Gamma_i(0)$ parameterises $e_i$ and $\Gamma_i(1)=\gamma_i$,
    \item $\{ \Gamma_i(x)(0),\Gamma_i(x)(1) \} =e_i$ for all $x\in [0,1]$ and
    \item for all $x\in [0,1]$ any paths in $G(\{\Gamma_i(x)\mid i=1,\dots,m\})$ do not cross,
\end{enumerate} 
where for $x\in [0,1]$, $\Gamma_i(x)$ is defined as the map $[0,1]\to \R^2, t\mapsto \Gamma_i(x,t)$.
Then we can define for each $x \in [0,1]$ a fundamental domain $F_x$ with boundary given by the orbit $G(\{\Gamma_i(0)\})$. Since the maps $\Gamma_i$ are continuous, we obtain a three-dimensional manifold $\bigcup_{x\in [0,1]}F_x$.

\subsection{Assemblies with Wallpaper Symmetries}

Using the map $\lambda$ as defined in Lemma \ref{lemma:interpolation}, we can associate a fundamental domain $F_z$ for each $z\in [0,1]$ to $\lambda(z)$ and define the following set
\begin{equation}\label{eq:X_lambda}
    X_\lambda = \{(x_1,x_2,x_3\cdot c)^\intercal \in \R^3 \mid (x_1,x_2)\in F_{x_3} \text{ for }x_3\in [0,1] \},
\end{equation}
for some $c\in \R_{>0}.$ In this section, we show that we can triangulate the boundary of this block $X_\lambda$, i.e.\ $\partial X_\lambda$ is a polyhedron. For this, we define the surface of a block by a triangulation $X$ such that its intersection with planes of the form $P_z=\left \{\left(x,y,z \right)^\intercal \mid (x,y)^\intercal\in \R^2 \right \}$ is the boundary of a fundamental domain for $z\in [0,c]$ for some $c>0$. It turns out that the boundary of the fundamental domains $X\cap P_0$ and $X\cap P_c$ corresponds to the boundary of $F$ and $F'$, respectively, as defined in Lemma \ref{lemma:new_domain}.

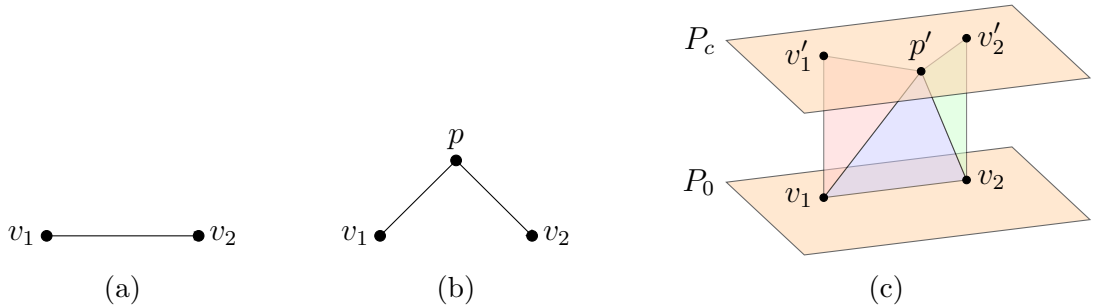
\begin{figure}[H]
    \centering
     \begin{minipage}[b]{.25\textwidth}
        \centering
        \begin{tikzpicture}
            \coordinate (v1) at (0,0);
            \coordinate (v2) at (2,0);
    
            \draw[->] (v1) -- (v2);
    
            \foreach \point in {v1,v2}
                \draw[fill=black] (\point) circle (2pt);
    
            \node[anchor=east] at (v1) {$v_1$};
            \node[anchor=west] at (v2) {$v_2$};
         \end{tikzpicture}
        \subcaption{}
    \end{minipage}
     \begin{minipage}[b]{.25\textwidth}
        \centering
        \begin{tikzpicture}
            \coordinate (v1) at (0,0);
            \coordinate (p) at (1,1);
            \coordinate (v2) at (2,0);
    
            \draw[->] (v1) -- (p) -- (v2);
    
            \foreach \point in {v1,p,v2}
                \draw[fill=black] (\point) circle (2pt);
    
            \node[anchor=east] at (v1) {$v_1$};
            \node[anchor=south] at (p) {$p$};
            \node[anchor=west] at (v2) {$v_2$};
         \end{tikzpicture}
        \subcaption{}
    \end{minipage}
    \begin{minipage}[b]{.4\textwidth}
        \centering
        \tdplotsetmaincoords{110}{20} 
    \begin{tikzpicture}[tdplot_main_coords]
    
      \coordinate (v1) at (0,0,0);
      \coordinate (v2) at (2,0,0);
      \coordinate (p') at (1,1,2);
      \coordinate (v1') at (0,0,2);
      \coordinate (v2') at (2,0,2);

      \draw[fill=orange!30,opacity=0.6] (-1,-1,0) -- (3,-1,0) -- (3,2,0) -- (-1,2,0) -- cycle;

      \draw[fill=blue!20,opacity=0.5] (v1) -- (v2) -- (p') -- cycle; 
      \draw[fill=red!20,opacity=0.5] (v1) -- (v1') -- (p') -- cycle; 
      \draw[fill=green!20,opacity=0.5] (v2) -- (v2') -- (p') -- cycle; 

      \draw[fill=orange!30,opacity=0.6] (-1,-1,2) -- (3,-1,2) -- (3,2,2) -- (-1,2,2) -- cycle;
      
      \foreach \point in {v1,v2,p',v1',v2'}
        \draw[fill=black] (\point) circle (0.5mm);
    
      \node[anchor=east] at (v1) {$v_1$};
      \node[anchor=west] at (v2) {$v_2$};
      \node[anchor=south] at (p') {$p'$};
      \node[anchor=east] at (v1') {$v'_1$};
      \node[anchor=west] at (v2') {$v'_2$};

       \node[anchor=east] at (-1,-1,0) {$P_0$};
       \node[anchor=east] at (-1,-1,2) {$P_c$};
    \end{tikzpicture}
        \subcaption{}
        \label{fig:schematic_triangulation}
    \end{minipage}
    \caption{Deforming an edge and a corresponding triangulation: (a) initial edge with vertices $v_1,v_2$, (b) introducing intermediate point $p$ (as in Lemma \ref{lemma:new_domain}) resulting in two edges with vertices $v_1,p$ and $v_2,p$, (c) interpolating between edges by setting $v_1'=v_1+(0,0,c)^\intercal,v_2'=v_2+(0,0,c)^\intercal,p'=p+(0,0,c)^\intercal$ for some $c>0$. Note that the points $p',v_1',v_2'$  and the points $v_1,v_2,p$ lie in the planes $P_c$ and $P_0$, respectively.
    }
    \label{fig:deforming_paths}
\end{figure}
The points $p$ in the construction shown in Figure \ref{fig:schematic_triangulation} can be chosen in the same way as in Lemma \ref{lemma:new_domain} in order to obtain a triangulation given below.
\begin{definition}\label{def:triangulation}
    Let $F,F'$ be as in Lemma \ref{lemma:new_domain}. We place $F$ and $F'$ in parallel planes and define $X_{F,F'}$ as a triangulation with embedded triangles of the following type:
    For each edge $e$ of $F$ with vertices $v_1,v_2$ and corresponding intermediate point $p\in \R^2$ we have the triangle $\{v_1,v_2,p'\}$ called \emph{tilted face} and the triangles $\{v_1,v_1',p' \},\{v_2,v_2',p' \}$ called \emph{vertical faces}. Additionally, we add a fixed triangulation of the domains $F,F'$.
\end{definition}

This triangulation can be viewed as an embedded simplicial surface described by the embedded vertices $\{v_1,\dots,v_n \} \cup \{v_1',\dots,v_n' \} \cup \{p_1',\dots,p_m' \}\subset \R^3$ and faces given by the triangulation $X_{F,F'}$ as defined above. We have \emph{vertical faces} of the form $\{v_i,p_i' ,v_i' \},\{v_{i+1},p_i' ,v_{i+1}' \}$ and \emph{tilted faces} of the form $\{v_i,p_i' ,v_{i+1} \}$. By omitting the triangulation of $F$ and $F'$, the vertical and tilted faces yield an embedded simplex ring.

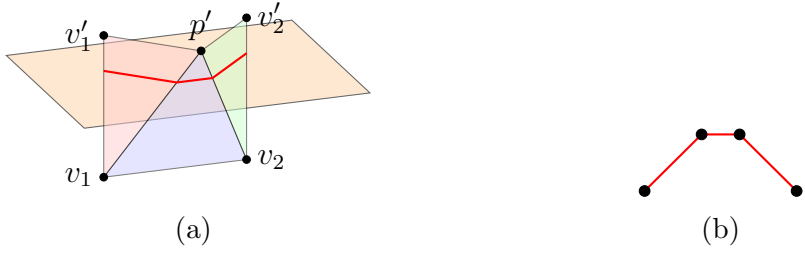
\begin{figure}[H]
    \centering
    \begin{minipage}[b]{.4\textwidth}
        \centering
\tdplotsetmaincoords{110}{20}

\begin{tikzpicture}[tdplot_main_coords]

  \coordinate (v1) at (0,0,0);
  \coordinate (v2) at (2,0,0);
  \coordinate (p') at (1,1,2);
  \coordinate (v1') at (0,0,2);
  \coordinate (v2') at (2,0,2);

  \coordinate (p2) at (0.75, 0.75, 1.5);
  \coordinate (p3) at (1.25, 0.75,1.5);
  \coordinate (p4) at (2, 0,1.5);

  \draw[fill=orange!30,opacity=0.6] (-1,-1,1.5) -- (3,-1,1.5) -- (3,2,1.5) -- (-1,2,1.5) -- cycle;

  \draw[fill=blue!20,opacity=0.5] (v1) -- (v2) -- (p') -- cycle; 
  \draw[fill=red!20,opacity=0.5] (v1) -- (v1') -- (p') -- cycle; 
  \draw[fill=green!20,opacity=0.5] (v2) -- (v2') -- (p') -- cycle; 

  \foreach \point in {v1,v2,p',v1',v2'}
    \draw[fill=black] (\point) circle (0.5mm);

  \node[anchor=east] at (v1) {$v_1$};
  \node[anchor=west] at (v2) {$v_2$};
  \node[anchor=south] at (p') {$p'$};
  \node[anchor=east] at (v1') {$v'_1$};
  \node[anchor=west] at (v2') {$v'_2$};

  \draw[red, thick] (v1)+(0,0,1.5) -- (p2) -- (p3) -- (p4);

\end{tikzpicture}
        \subcaption{}
    \end{minipage}
     \begin{minipage}[b]{.4\textwidth}
        \centering
        \begin{tikzpicture}
        \coordinate (v1) at (0,0,1.5);
        \coordinate (p2) at (0.75, 0.75, 1.5);
        \coordinate (p3) at (1.25, 0.75,1.5);
        \coordinate (p4) at (2, 0,1.5);
    
        \draw[red, thick] (v1) -- (p2) -- (p3) -- (p4);
    
        \foreach \point in {v1,p2,p3,p4}
            \draw[fill=black] (\point) circle (2pt);
        \end{tikzpicture}
        \subcaption{}
    \end{minipage}
    \caption{(a) Intersection of the triangulation with the plane $P_z$. (b) Intersection is given by a piecewise linear path given by $\Gamma_{v_1,p,v_2}(z)$ as defined in the proof of Lemma \ref{lemma:interpolation}}
    \label{fig:deforming_paths_plane}
\end{figure}

\begin{lemma}
     The boundary of the block $X_\lambda$ is given by the triangulation $X_{F,F'}$.
\end{lemma}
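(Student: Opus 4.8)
The plan is to show that $X_\lambda$ and $X_{F,F'}$ are the same subset of $\R^3$, and that the triangulation in Definition \ref{def:triangulation} is an embedded simplicial surface bounding that set. The key observation linking the two descriptions is already implicit in Figure \ref{fig:deforming_paths_plane}: for each fixed height $z\in[0,c]$, the slice of the triangulated simplex ring by the plane $P_z$ is exactly the piecewise linear path $\Gamma_{v_1,p,v_2}(z/c)$ from the proof of Lemma \ref{lemma:interpolation}, concatenated over all edge representatives. So the first step is to compute this slice explicitly. A tilted face $\{v_i, v_{i+1}, p_i'\}$ has vertices at heights $0,0,c$, so its intersection with $P_z$ is a segment parallel to $e_i$ at relative height $z/c$; a vertical face $\{v_i, v_i', p_i'\}$ has vertices at heights $0,c,c$, so its intersection with $P_z$ is a segment from the point $v_i + \tfrac{z}{c}(p_i - v_i)$ (lying on the edge $\{v_i,p_i'\}$) to the point on $\{v_i', p_i'\}$ at the same height. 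Matching endpoints across the three faces of one edge gap shows the union of these three slice-segments is precisely $\Gamma_{v_i,p_i,v_{i+1}}(z/c)$ (after the obvious affine reparametrisation of $[0,1]$ in the height coordinate $x_3\mapsto x_3/c$). Concatenating over $i=1,\dots,m$ and applying the group action gives that the slice of the simplex ring by $P_z$ is $\lambda(z/c,[0,1])$, the boundary of the fundamental domain $F_{z/c}$.

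Given this, the second step is to assemble the global conclusion. By Lemma \ref{lemma:interpolation}, $\lambda(z/c,[0,1])$ bounds a fundamental domain, and by the Jordan curve theorem the bounded region it encloses is $F_{z/c}$ (this is how $F_z$ was defined). Hence the region of $P_z$ enclosed by the slice of the simplex ring is exactly $\{(x_1,x_2): (x_1,x_2)\in F_{z/c}\}$, which by \eqref{eq:X_lambda} is precisely the slice $X_\lambda \cap P_z$. Taking the union over $z\in[0,c]$ identifies the solid bounded by the simplex ring together with the two cap-triangulations of $F$ and $F'$ (the slices at $z=0$ and $z=c$) with $X_\lambda$. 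Since $X_{F,F'}$ is by construction a closed simplicial surface — the simplex ring closed off at top and bottom by fixed triangulations of $F'$ and $F$ — it is the topological boundary of the solid it encloses, i.e.\ $\partial X_\lambda = X_{F,F'}$.

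The third step, which I expect to be the main technical obstacle, is verifying that $X_{F,F'}$ is genuinely an \emph{embedded} simplicial surface, i.e.\ that the listed triangles meet only along shared vertices and edges and that the surface has no self-intersections. The non-crossing hypothesis on the orbit $G(\{\gamma_{v_i,p_i,v_{i+1}}\})$ (Definition \ref{def:crossing_piecewise}) is exactly what is needed here: it guarantees that for each $z$ the slice path is a simple closed curve, so distinct tilted or vertical faces cannot intersect in their interiors, and it also rules out the degenerate case where a deformed edge $\gamma_{v_i,p_i,v_{i+1}}$ would be pushed out of the region so far that adjacent edges of $\partial F'$ overlap. One must also check the faces incident to a common vertex $v_i$ of $F$ fit together consistently — the two vertical faces $\{v_{i-1}, p_{i-1}', v_{i-1}'\}$-side and $\{v_i, p_i', v_i'\}$ share the vertical edge $\{v_i, v_i'\}$, and the tilted faces $\{v_{i-1}, p_{i-1}', v_i\}$ and $\{v_i, p_i', v_{i+1}\}$ meet $\{v_i,v_i'\}$ only at $v_i$ — which is a routine but slightly tedious bookkeeping using the edge-pairing of Lemma \ref{lemma:even_edges}. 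I would handle the embeddedness by reducing it to the planar non-crossing condition via the slicing argument of step one: if $X_{F,F'}$ had a self-intersection it would produce a self-intersection in some planar slice $\lambda(z/c,[0,1])$, contradicting that each slice bounds a fundamental domain. The cap triangulations of $F$ and $F'$ are embedded in the planes $P_0, P_c$ by standard polygon-triangulation facts and are disjoint from the open simplex ring since the latter lives strictly between the two planes in the $x_3$-coordinate.
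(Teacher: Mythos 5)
Your proposal is correct and takes essentially the same route as the paper, whose entire proof is the single observation that the intersection of the triangles of $X_{F,F'}$ with the plane $P_z$ is $\lambda(z)$ --- precisely the slicing computation you carry out in your first step, with the remaining steps making explicit what the paper leaves implicit. (One trivial slip: the second endpoint of the slice of a vertical face lies on the vertical edge $\{v_i,v_i'\}$, not on $\{v_i',p_i'\}$, which is contained in $P_c$ and hence misses $P_z$ for $z<c$; this does not affect the argument.)
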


\begin{proof}
    This follows immediately from the fact, that the intersection of the triangles of $X_{F,F'}$ with the plane $P_z$ are given by $\lambda(z)$.
\end{proof}

\begin{remark}\label{rem:construction_steps}
    We can summarise the steps to construct a block $X_\lambda$ with boundary $X_{F,F'}$ based on a wallpaper group $G$ as follows:
    \begin{enumerate}
        \item Start with a polygonal fundamental domain $F$ of $G$, for instance, a Dirichlet domain.
        \item Identify the edge pairs, i.e.\ edges of $F$ that are mapped onto each other by group elements of $G$.
        \item For each edge pair, choose an ``intermediate'' point satisfying certain conditions leading to a piecewise linear path.
        \item Merge these paths to obtain a new fundamental domain $F'$.
        \item Place the two domains $F,F'\subset \R^2$ in parallel planes in $\R^3$ and interpolate between them with a function $\lambda$ such that the boundary of the resulting block $X_\lambda$ is given by a triangulation $X_{F,F'}$.
    \end{enumerate}
\end{remark}

\begin{theorem}\label{thm:assembly_of_blocks}
    We can act with $G$ on the blocks $X=X_\lambda$ constructed as described in Remark \ref{rem:construction_steps} to obtain an infinite assembly with symmetry group given by $G$.
\end{theorem}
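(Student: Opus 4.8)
The plan is to show two things: first, that the family $(g(X))_{g\in G}$ is an assembly in the sense of the paper's definition, i.e.\ for $g\neq h$ in $G$ we have $g(X)\cap h(X) = \partial g(X)\cap \partial h(X)$; and second, that $G$ is the full symmetry group of the resulting arrangement. Throughout I would use the extended action of $G$ on $\R^3$ from Remark~\ref{rem:extended_action}, which acts only on the first two coordinates and leaves $x_3$ fixed, so that each $g\in G$ maps every horizontal plane $P_z$ to itself, and $g(X)\cap P_z = g(\lambda(z))$ where $\lambda(z)=F_z$ is the fundamental domain at height $z$ given by Lemma~\ref{lemma:interpolation}.

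\textbf{Step 1 (slice-wise reduction).} Since the extended action preserves each plane $P_z$ and $X = \bigcup_{z\in[0,1]} (F_z \times \{z\})$ (rescaled by $c$ in the last coordinate, which is irrelevant), it suffices to prove the intersection identity plane by plane: for every $z$ and every $g\neq h$ in $G$, $g(F_z)\cap h(F_z) = \partial(g(F_z))\cap\partial(h(F_z))$ inside $\R^2$. Indeed $g(X)\cap h(X) = \bigcup_z\big(g(F_z)\cap h(F_z)\big)\times\{z\}$, and $\partial g(X)$ intersected with the open slab is $\bigcup_z \partial(g(F_z))\times\{z\}$ together with the top and bottom faces, so the three-dimensional identity follows from the two-dimensional one on each slice (the top/bottom faces $z=0,c$ contribute $F$ and $F'$, which are themselves fundamental domains and handled by the same argument).

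\textbf{Step 2 (each slice is a fundamental domain).} By Lemma~\ref{lemma:interpolation}, for each $z$ the set $F_z$ is a fundamental domain of $G$ with $\overline{\mathring{F_z}}=F_z$. Now fix $g\neq h$; replacing the pair by $(\mathbb{I}, g^{-1}h)$ via applying $g^{-1}$ (an isometry, hence boundary- and interior-preserving), it is enough to show $F_z \cap k(F_z) = \partial F_z \cap \partial k(F_z)$ for $\mathbb{I}\neq k\in G$. The inclusion $\supseteq$ is because whenever two closed sets meet only boundary-to-boundary the intersection is contained in both boundaries; the real content is $\subseteq$: a point $y\in \mathring{F_z}\cap k(F_z)$ (or $y\in F_z\cap k(\mathring{F_z})$) cannot exist. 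This is exactly the transversal condition in Definition~\ref{wallpaper_group_definition}: there is $V$ with $\mathring{F_z}\subset V\subset F_z$ meeting each orbit once; if $y\in\mathring{F_z}$ and $y\in k(F_z)$ with $k\neq\mathbb{I}$, then pushing $y$ slightly (using $\overline{\mathring{F_z}}=F_z$ for $k(F_z)$ as well) produces a point with two preimages in $V$, a contradiction. I would write this as a short lemma: \emph{two translates $k_1(F_z), k_2(F_z)$ with $k_1\neq k_2$ have disjoint interiors and hence meet only along boundaries.} Combined with Step~1 this gives the assembly property; infiniteness and connectedness of appropriate finite sub-unions are immediate since $G$ is infinite and the tiling $\bigcup_{g}g(F_z)=\R^2$ is edge-to-edge-connected.

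\textbf{Step 3 (the symmetry group is exactly $G$).} One inclusion is clear: every $g\in G$ permutes the blocks $\{h(X)\}$ by left multiplication, so $G$ is contained in the symmetry group $S$ of the assembly. For the reverse, let $\sigma\in S\le \mathrm{E}(3)$. Since $\sigma$ permutes the blocks and each block meets $P_0$ in a copy of $F$ and $P_{c}$ in a copy of $F'$ (and $F\neq F'$ as soon as at least one edge is genuinely deformed — the degenerate case where $F=F'$ and $X$ is a prism can be excluded or treated separately), $\sigma$ must map the plane $P_0$ to $P_0$ and $P_c$ to $P_c$, hence $\sigma$ preserves the $x_3$-coordinate up to the reflection swapping the two planes; but that reflection swaps $F$ with $F'$, which is only a symmetry in degenerate cases, so $\sigma$ fixes $x_3$ and restricts to an isometry $\bar\sigma$ of $\R^2$ permuting the tiles $\{h(F)\}_{h\in G}$. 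A standard fact about crystallographic tilings then forces $\bar\sigma\in G$: $\bar\sigma$ sends the fundamental domain $F$ to some tile $h(F)$, and $h^{-1}\bar\sigma$ fixes $F$ setwise; since $F$ has trivial setwise stabiliser in $G$ generically (the Dirichlet domain of a point in general position has trivial stabiliser, and the Escher-deformed $F$ inherits this), $h^{-1}\bar\sigma=\mathbb{I}$, so $\bar\sigma = h\in G$.

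\textbf{Main obstacle.} The genuine difficulty is Step~3, the ``no extra symmetries'' direction, which is really only true under mild genericity hypotheses on the construction (non-degenerate deformation so that $F\neq F'$, and a fundamental domain with trivial setwise stabiliser); the cleanest route is probably to phrase the theorem as producing an assembly \emph{invariant under} $G$ (which is unconditional, Steps~1--2) and then remark that for generic choices of intermediate points the symmetry group is exactly $G$. Steps~1 and~2 are the robust core and I expect them to go through with only the transversal/interior-disjointness bookkeeping; the slice-wise reduction in Step~1 must be stated carefully because $\partial X$ in $\R^3$ includes the top and bottom fundamental-domain faces, which do not come from the two-dimensional boundaries but are still handled by the same interior-disjointness lemma applied in the planes $P_0$ and $P_c$.
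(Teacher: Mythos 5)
Your Steps~1 and~2 are exactly the paper's argument: the proof in the paper reduces to the slices $P_z$, notes that each slice of $X_\lambda$ is a fundamental domain, and concludes that $g(X)\cap g'(X)=\partial g(X)\cap \partial g'(X)$ from the defining property of a planar crystallographic group (Definition~\ref{wallpaper_group_definition}); you have simply filled in the transversal/interior-disjointness bookkeeping that the paper leaves implicit. Your Step~3 goes beyond what the paper proves: its proof never addresses the claim that the symmetry group is \emph{exactly} $G$, and in effect reads the statement as ``$G$-invariant assembly,'' which is the unconditional part. Your observation that the ``no extra symmetries'' direction needs genericity hypotheses (a genuinely deformed $F\neq F'$ and a trivial setwise stabiliser of the fundamental domain) is correct and is a real caveat the paper glosses over, so your suggested rephrasing --- assembly invariant under $G$, with exactness of the symmetry group holding generically --- is the honest form of the statement.
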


\begin{proof}
    This follows immediately from the construction of $X_\lambda$, since ``slices'' with planes $P_z$ correspond to fundamental domains whose images under $G$ only meet at their boundary.  We need to check that for two group elements $g,g'$ we have that $g(X) \cap g'(X)=\partial g(X) \cap \partial g'(X)$. Since the boundary of the block $X$ is characterised by the intermediate layers which are fundamental domains of $G$, this follows immediately from the definition of planar crystallographic groups, see Definition \ref{wallpaper_group_definition}. 
\end{proof}

We can compute the volume of $X_\lambda$ as described in the following remark.

\begin{remark}
    The volume of $X_\lambda$ is given by $c\cdot \mathrm{vol}(F)$, where $F$ is the initial fundamental domain and $c$ is the height of $X_\lambda$. This follows immediately from the facts that each fundamental domain has the same volume (Lemma \ref{lemma:volume}), the block $X_\lambda$ is a polyhedron and Cavalieri's principle.
\end{remark}

Next, we give several examples of the construction presented so far.

\begin{example}\label{example:p6triangle}
    An equilateral triangle gives a fundamental domain for the wallpaper group $G$ of type p6. However, we view this triangle as a quadrilateral since one edge is split as it contains a point fixed by $G$. In Figure \ref{fig:EquilateralInterlocking}, we see how we construct an interlocking block based on this domain.

\begin{figure}[H]
\centering
\begin{minipage}{.33\textwidth}
  \centering
  \includegraphics[height=3cm]{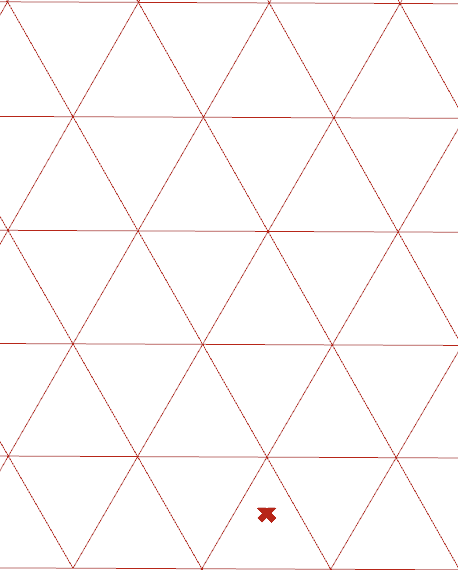}
  \subcaption{}
    \label{fig:EquilateralDomain}
\end{minipage}%
\begin{minipage}{.33\textwidth}
  \centering
  \includegraphics[height=3cm]{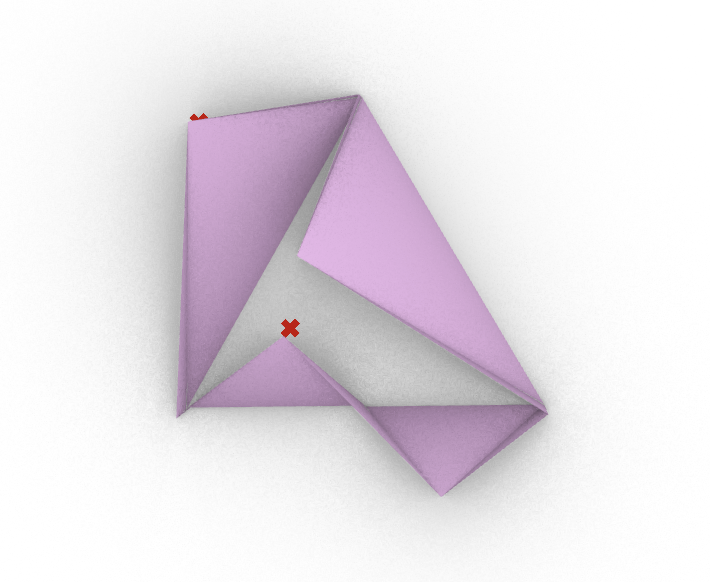}
  \subcaption{}
    \label{fig:EquilateralBlock}
\end{minipage}
\begin{minipage}{.33\textwidth}
  \centering
  \includegraphics[height=3cm]{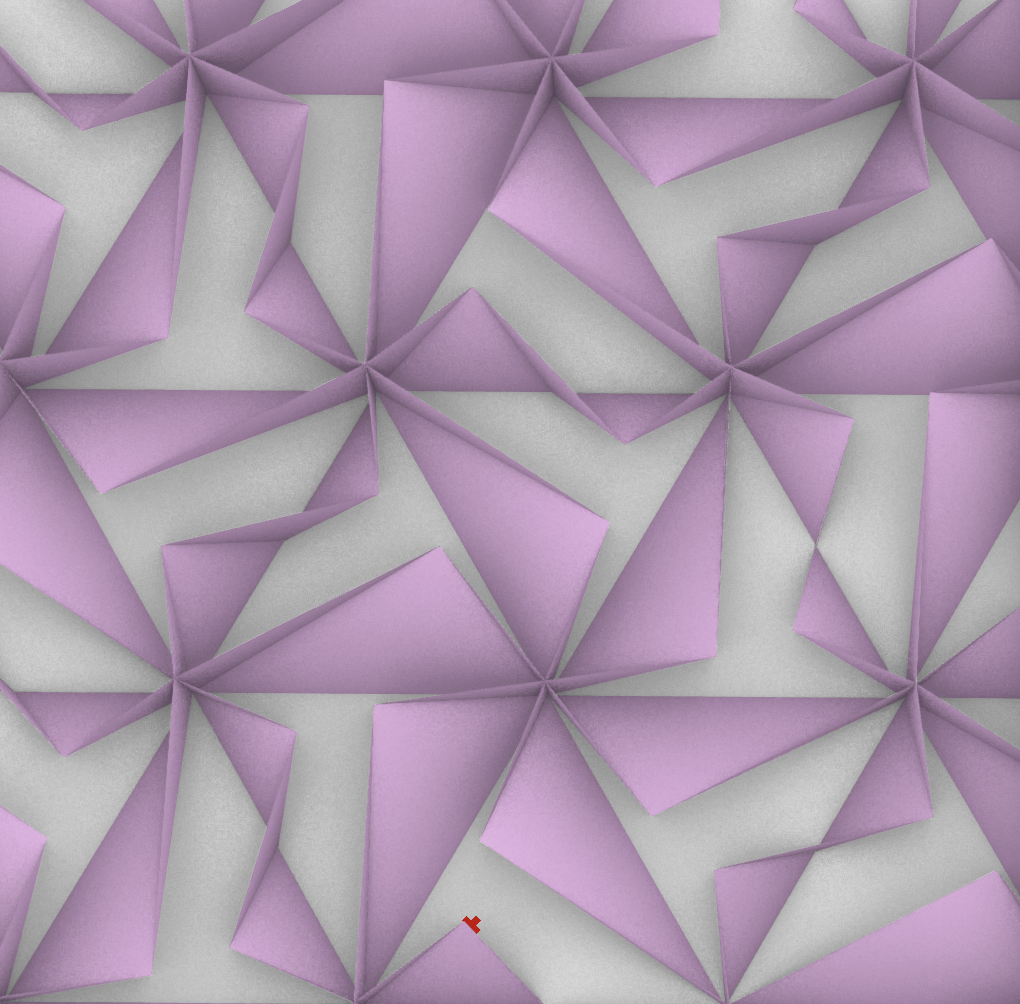}
  \subcaption{}
    \label{fig:EquilateralAssembly}
\end{minipage}
\caption{(a) An equilateral triangle can be viewed as a fundamental domain of a wallpaper group $G$ of type p6.  (b) One edge of the triangle is split into two, and we can construct a block using intermediate points. (c) The resulting assembly (without top and bottom faces) is obtained by using the action of $G$.}
\label{fig:EquilateralInterlocking}
\end{figure}
\end{example}

\begin{example}\label{example:p3_block}
    In this example, we start with a tiling of the plane by fundamental domains of a wallpaper group $G$ of type p3.
    \begin{figure}[H]
    \centering
    \begin{minipage}{.15\textwidth}
      \centering
      \resizebox{!}{2cm}{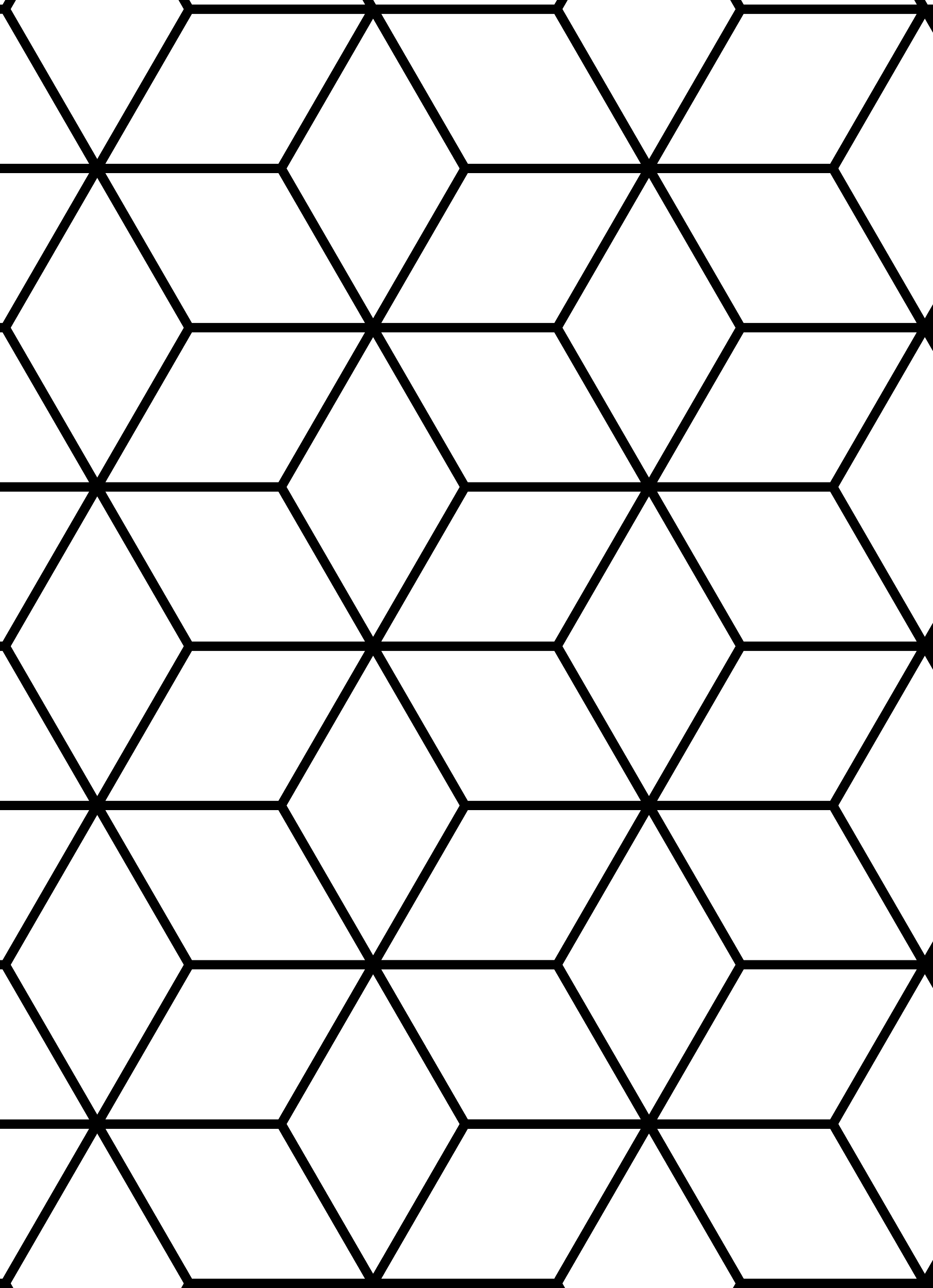}
      \subcaption{}
        \label{fig:p3Domain}
    \end{minipage}%
    \begin{minipage}{.2\textwidth}
      \centering
      \includegraphics[height=2cm]{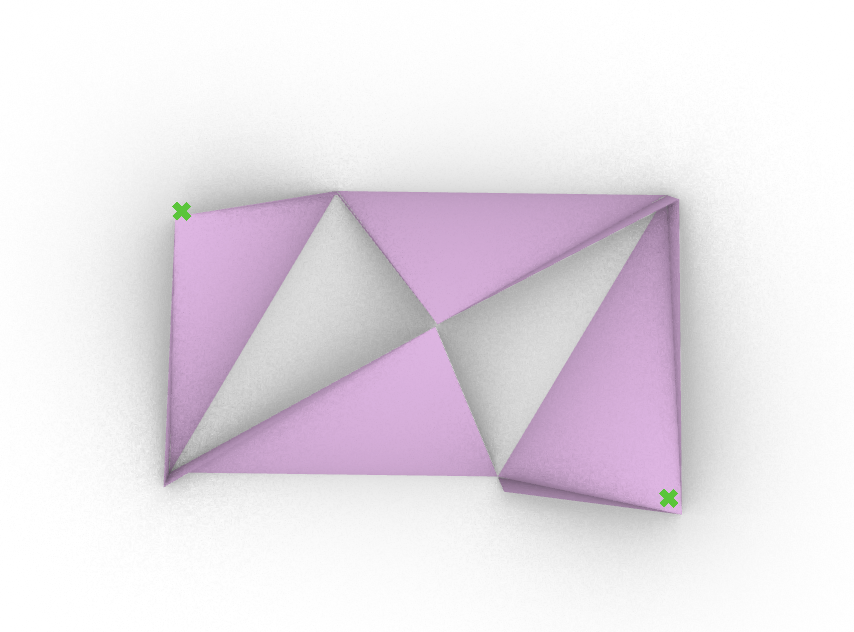}
      \subcaption{}
        \label{fig:P3BlockIntermediate}
    \end{minipage}
    \begin{minipage}{.3\textwidth}
      \centering
      \includegraphics[height=2cm]{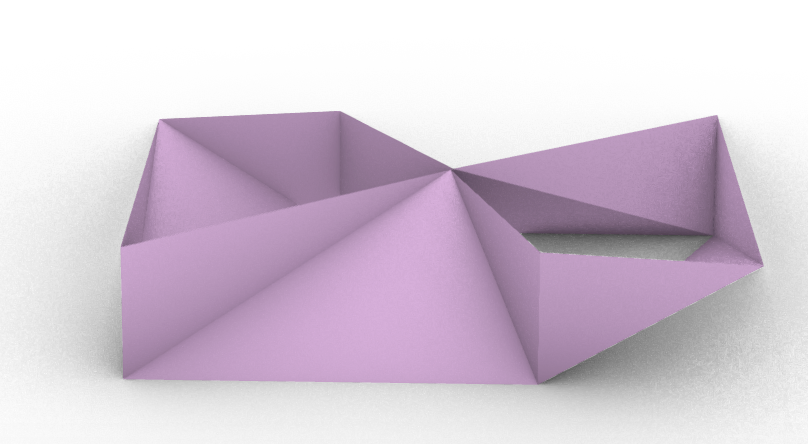}
      \subcaption{}
        \label{fig:P3Block}
    \end{minipage}
    \begin{minipage}{.3\textwidth}
      \centering
      \includegraphics[height=2cm]{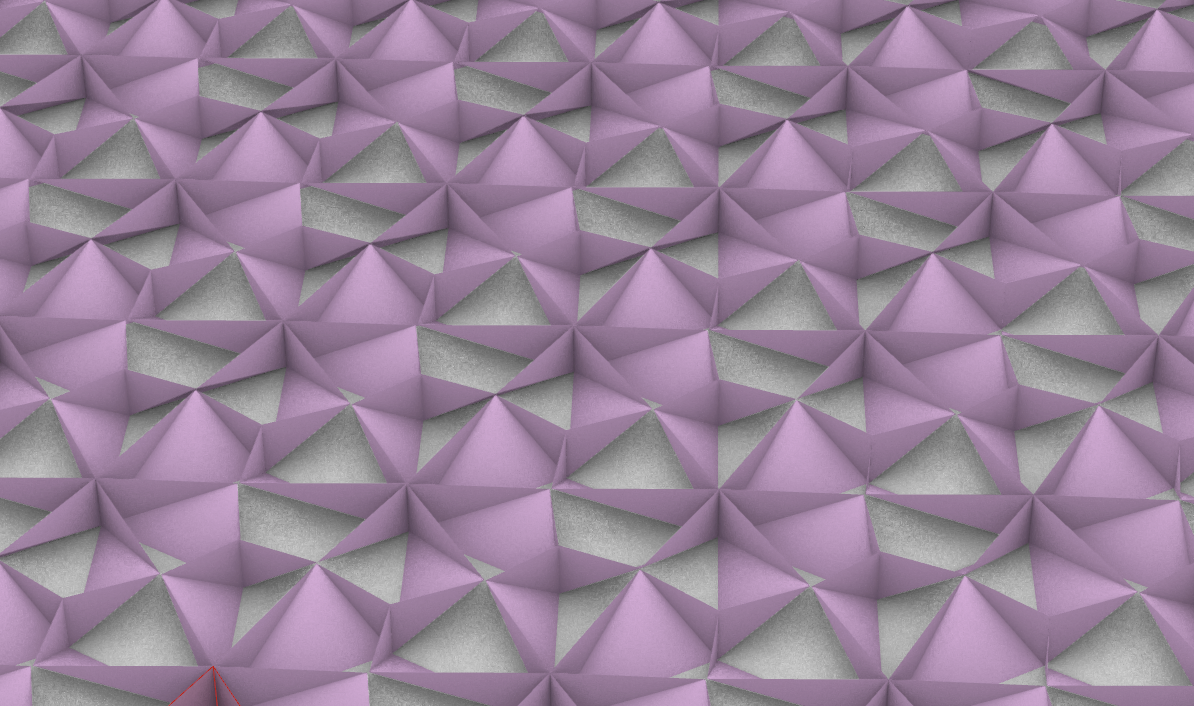}
      \subcaption{}
        \label{fig:P3Assembly}
    \end{minipage}
    \caption{(a) Fundamental domain of wallpaper group $G$ of type p3. (b,c) Views of constructed block. (d) Assembly based on this block using the action of $G$.}
    \label{fig:p3Interlocking}
    \end{figure}
\end{example}

As a final step in processing the geometry of \(X_\lambda\) and its triangulation, we focus on eliminating any irregularities or non-manifold elements. These are features that could interfere with subsequent analyses or computations. To accomplish this, we apply a set-theoretic approach by considering the closure of the interior of \(X_\lambda\), expressed as \(\overline{\mathring{X_\lambda}}\). This method ensures that every point defined as part of \(X_\lambda\) also encompasses those on its boundary, effectively removing problematic areas without changing the essential structure of the shape.
For certain choices of intermediate points, we can do this directly as given below.

\begin{remark}\label{rem:block_triangulation}
     If $p_i=p_j$ for some $i,j$ we identify the corresponding vertices and if $p_i=p_{i+1}$ we omit the faces that appear two times.
    In Section \ref{sec:VersaTiles}, we see several examples of the latter type: one being the \emph{Versatile Block} and one being the \emph{RhomBlock}.
   
\end{remark}

The other case, when an intermediate point equals the starting point of another edge can be treated using the method presented in \cite{amend2023framework}. 
Below, we give an example of this case.

\begin{example}\label{example:p4_freak}
    Consider a wallpaper group $G$ of type p4 generated by the rotation matrix $\begin{pmatrix}0 & 1\\
-1 & 0
\end{pmatrix}$ that rotates a point by 90 degrees  and translations given by the vectors $(2,0)^\intercal,(0,2)^\intercal$. A square with coordinates given by $\left(0,0\right)^\intercal,\left(0,-1\right)^\intercal,\left(1,-1\right)^\intercal,\left(1,0\right)^\intercal$ is a fundamental domain $F$ for $G$. The edge pair representatives are given by $e_1=\{\left(0,0\right)^\intercal,\left(0,-1\right)^\intercal \}$ and $e_2=\{\left(0,-1\right)^\intercal,\left(1,-1\right)^\intercal \}$. For $e_1$ and $e_2$ we choose $\left(1,-1\right)^\intercal$ and $\left(0,-2\right)^\intercal$ as intermediate points, respectively. We obtain a new fundamental domain $F'$ according to Lemma \ref{lemma:new_domain} with height $c=1$. The resulting block $X$ interpolating between $F$ and $F'$ is shown in Figure \ref{fig:p4FreakBlock}. Note that, it contains a triangle that vanishes when considering the closure of the interior of $X$, shown in Figure \ref{fig:p4FreakBlockOperated}. In Figure \ref{fig:p4FreakBlockAssembly}, we see an assembly of the blocks, where all blocks are shifted away from the centre of the assembly.
\end{example}

\begin{figure}[H]
\centering
\begin{minipage}{.3\textwidth}
  \centering
  \includegraphics[height=2cm]{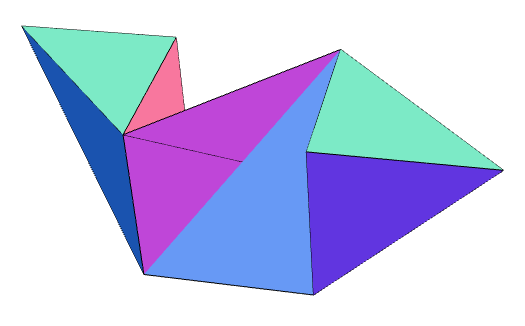}
  \subcaption{}
  \label{fig:p4FreakBlock}
\end{minipage}%
\begin{minipage}{.3\textwidth}
  \centering
  \includegraphics[height=2cm]{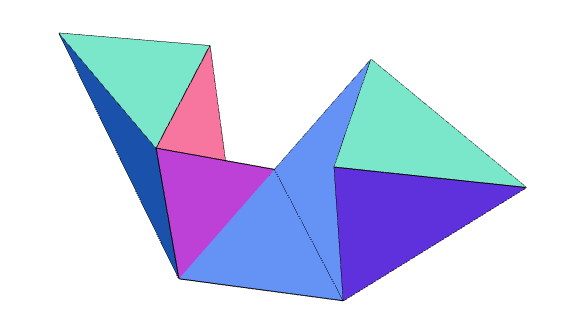}
  \subcaption{}
  \label{fig:p4FreakBlockOperated}
\end{minipage}
\begin{minipage}{.3\textwidth}
  \centering
  \includegraphics[height=2cm]{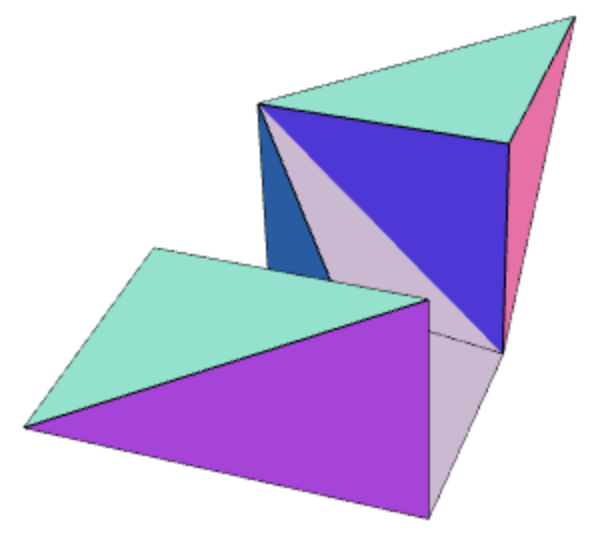}
  \subcaption{}
  \label{fig:p4FreakBlockOperatedTOP}
\end{minipage}

\begin{minipage}{.4\textwidth}
  \centering
  \includegraphics[height=2.5cm]{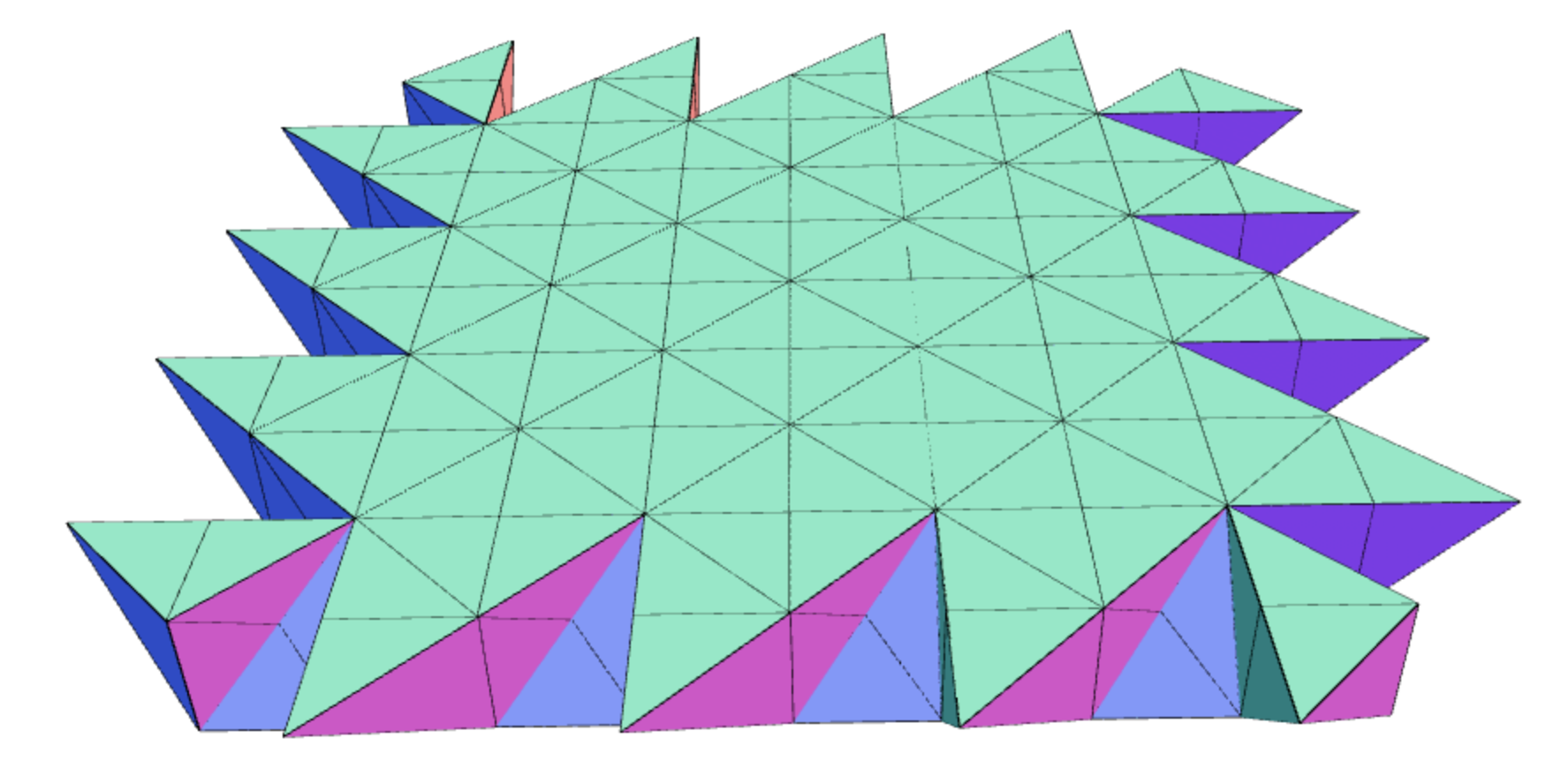}
  \subcaption{}
  \label{fig:p4FreakBlockAssemblyNonExploded}
\end{minipage}
\begin{minipage}{.4\textwidth}
  \centering
  \includegraphics[height=2.5cm]{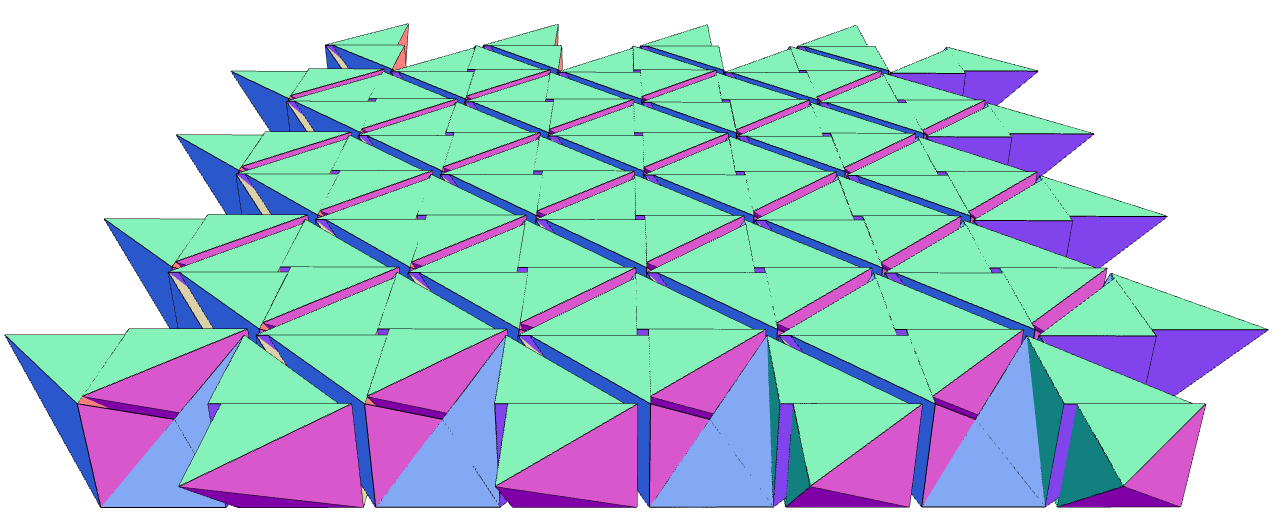}
  \subcaption{}
  \label{fig:p4FreakBlockAssembly}
\end{minipage}
\caption{(a) Block based on construction with wallpaper group of type p4, (b,c) different views after removing artefacts, (d) view of the assembly, (e) exploded view of the assembly of the resulting block.}
\end{figure}

As demonstrated in the following remark, the block \(X_\lambda\) acts as a fundamental domain for a three-dimensional crystallographic group $\tilde{G}$. Consequently, this structure enables the construction of space-filling assemblies, indicating that multiple instances of \(X_\lambda\), when appropriately arranged according to an action given by $\tilde{G}$, can completely fill three-dimensional space without gaps or overlaps.

\begin{remark}
   The block $X_\lambda$ is a fundamental domain for the three-dimensional crystallographic group generated by the embedding of $G$ into $\text{SE}(3)$, as given in Remark \ref{rem:extended_action}, together with the translation $(0,0,2c)^\intercal$ and the element $\begin{pmatrix}1 & 0 & 0\\
0 & 1 & 0\\
0 & 0 & -1
\end{pmatrix}\in \text{O}(3)$.
\end{remark}

\begin{proof}
   This follows from the fact that each planar crystallographic group  can be extended to a space group in this way, \cite{ITA2002}.
\end{proof}

The generation of an assembly $(g(X))_{g\in G}$ with wallpaper symmetry leads to the question if it always yields an interlocking assembly?
   
In a forthcoming paper, we show that as long as for each edge representative $e$ the piecewise linear path $\gamma_e$ mentioned above is not a line, the block created above yields a translational interlocking assembly if $G$ is of type p1 as all the edges of the original fundamental domain are deformed. In general, for constructions of blocks based on other wallpaper groups, we can rule out certain motions corresponding to the kernel of the infinitesimal interlocking matrix.

\subsection{Extensions of Block Constructions}

We can obtain several new blocks by iterating the approach, mirroring at planes or deforming the initial domain $F$ which is placed in the plane $P_0$ into both positive and negative direction by setting $c>0$ and $c<0$, respectively.

\paragraph{Iterating and Mirroring}

For instance, instead of assuming that we have a map $\Gamma$ which deforms a straight edge $e$ into a path $\gamma$. We can approximate arbitrary paths $\gamma$ by piecewise linear paths and iterate this construction to obtain an approximation $\Tilde{\gamma}$ of $\gamma$. Moreover, we can extend the method from the previous section in several ways to form new assemblies, e.g.\ by
\begin{enumerate}
    \item iterating the steps of deforming edges of fundamental domains or
    \item mirroring at the bottom or top plane.
\end{enumerate}

These processes can be formalised as follows:
Given $\lambda_1 \colon [0,1]\to \mathcal{F},\lambda_2 \colon [0,1]\to \mathcal{F}$ such that $\lambda_1(1)=\lambda_2(0)$ we can define the following interpolation functions:
\begin{enumerate}
    \item iterating: $\lambda \colon [0,1]\to \mathcal{F}, \lambda(t)=\begin{cases} 
          \lambda_1(2\cdot t) & t\in [0,1/2], \\
          \lambda_2(2 \cdot (t-1/2)) & t \in [1/2,1]
       \end{cases}$ and
    \item mirroring: $\lambda \colon [0,1]\to \mathcal{F}, \lambda(t)=\begin{cases} 
          \lambda_1(1-2\cdot t) & t\in [0,1/2], \\
          \lambda_1(2 \cdot (t-1/2)) & t \in [1/2,1]
       \end{cases}.$
\end{enumerate}

\begin{figure}[H]
\centering
\begin{minipage}[b]{.33\textwidth}
  \centering
  \resizebox{!}{3cm}{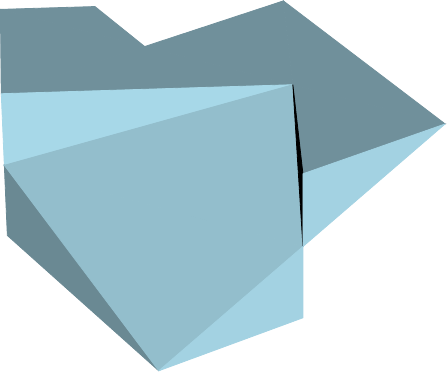}
\end{minipage}%
\begin{minipage}[b]{.33\textwidth}
  \centering
  \resizebox{!}{3cm}{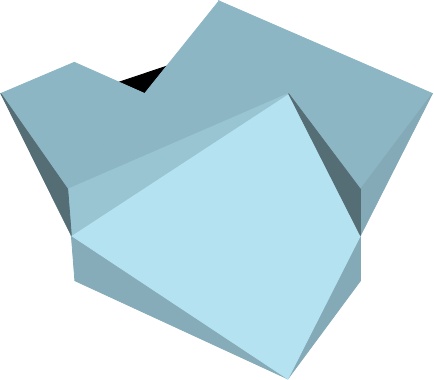}
\end{minipage}
\begin{minipage}[b]{.33\textwidth}
  \centering
  \resizebox{!}{3cm}{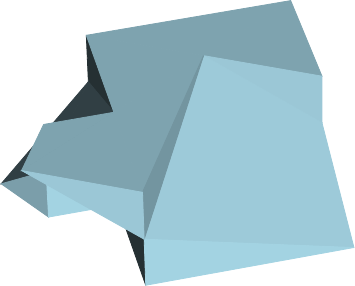}
\end{minipage}
\caption{We can modify the block given in Figure \ref{fig:P6Block} by iterating the construction starting with the mid-section given by a kite and deforming it in both positive and negative direction.}
\label{fig:p6BlockBothSides}
\end{figure}

\paragraph{Subsets of Fundamental Domains}

\begin{figure}[H]
\centering
\begin{minipage}{0.9\textwidth}
  \centering
  \resizebox{!}{2.5cm}{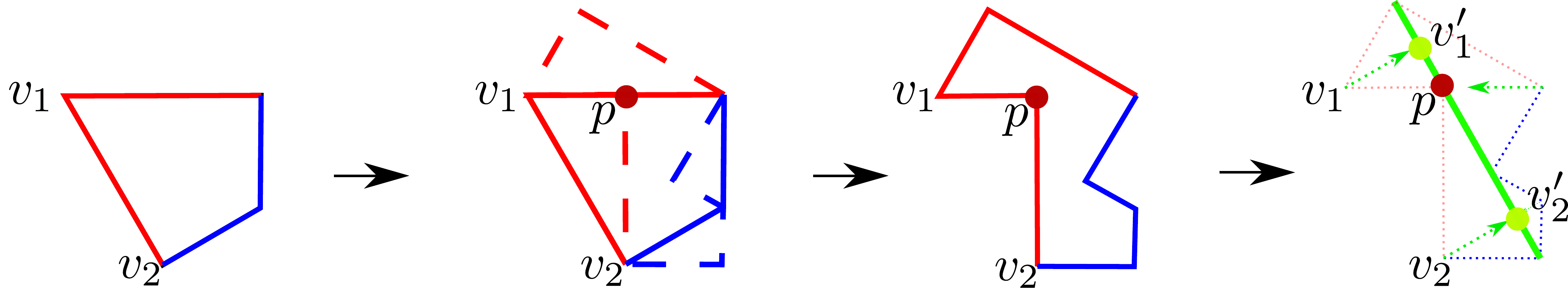}
\end{minipage}%
\caption{Schematic illustration of pushing points towards the interior: we use the Escher Trick to obtain a new fundamental domain $F'$ from a fundamental domain $F$. The intermediate points are chosen in a way such that they lie on a line. We can modify $F'$ by replacing its vertices $v_i$ by vertices $v'_i$ on this line.}
\label{fig:p6EdgeDegneration}
\end{figure}

Another approach is to deform blocks along the $z$-axis by means of a \emph{growth-function}, i.e.\ a bijective increasing continuous map $f \colon [0,1] \to [0,1]$, and considering the block given by $X_{\lambda \circ f}$. We can also generalise our construction by allowing certain geometrically defined subsets $\tilde{\lambda}(t)$ of fundamental domains, i.e.\ instead of considering the set of all fundamental domains $\mathcal{F}$, we consider the set of all subsets of fundamental domains $\Tilde{\mathcal{F}}$. Deforming a fundamental domain into a subset of another fundamental domain can be achieved, by modifying the points $v_1',v_2'$ of the triangulation given in Definition \ref{def:triangulation} by replacing them by points inside the fundamental domain $F'$ placed in the plane $P_c$. This generalisation leads to more possible candidates for interlocking assemblies which can be also obtained using other methods for creating assemblies, such as given in  \cite{EstDysArcPasBelKanPogodaevConvex}, can be constructed this way.

\begin{example}
    Consider the block in Figure \ref{fig:p6BlockBothSides}. In the previous section, we deformed an edge with endpoints $v_1,v_2$ of a fundamental domain $F$ using an intermediate point $p$, leading to a new fundamental domain $F'$ containing edges with points $v_1,p$ and $p,v_2$. Instead, we can also consider subsets of fundamental domains that can be obtained by shifting the points $v_1,v_2$ in the $F'$ towards the interior such that the subset still contains the intermediate points, see Figure \ref{fig:p6EdgeDegneration}. In this example, the intermediate points are chosen in such a way such that we can choose a line as a subset containing the intermediate points. We can keep the triangulation of the underlying block and obtain a new block with smaller surface area and still sharing ``tilted faces'' with their neighbours, see Figure \ref{fig:p6EdgeDegenerationResult}.
\end{example}

In \cite{piekarski_floor_2020,GoertzenFIB}, similar constructions for several examples are presented. Note that, we can obtain several well-known assemblies in this way. For instance, the tetrahedra assembly presented in \cite{glickman_g-block_1984,dyskin_new_2001} can be obtained by a block presented by Frézier in \cite{frezier_theorie_1738}.

\begin{figure}[H]
\centering
\begin{minipage}{.3\textwidth}
  \centering
  \includegraphics[height=3cm]{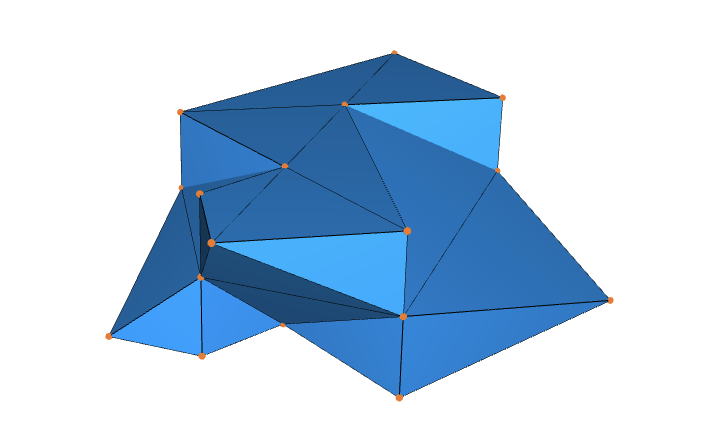}
  \subcaption{}
\end{minipage}%
\begin{minipage}{.33\textwidth}
  \centering
  \includegraphics[height=3cm]{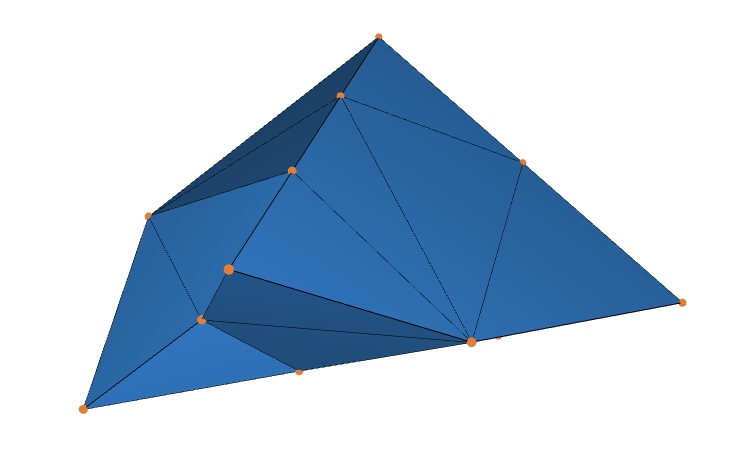}
  \subcaption{}
\end{minipage}%
\begin{minipage}{.37\textwidth}
  \centering
  \includegraphics[height=3cm]{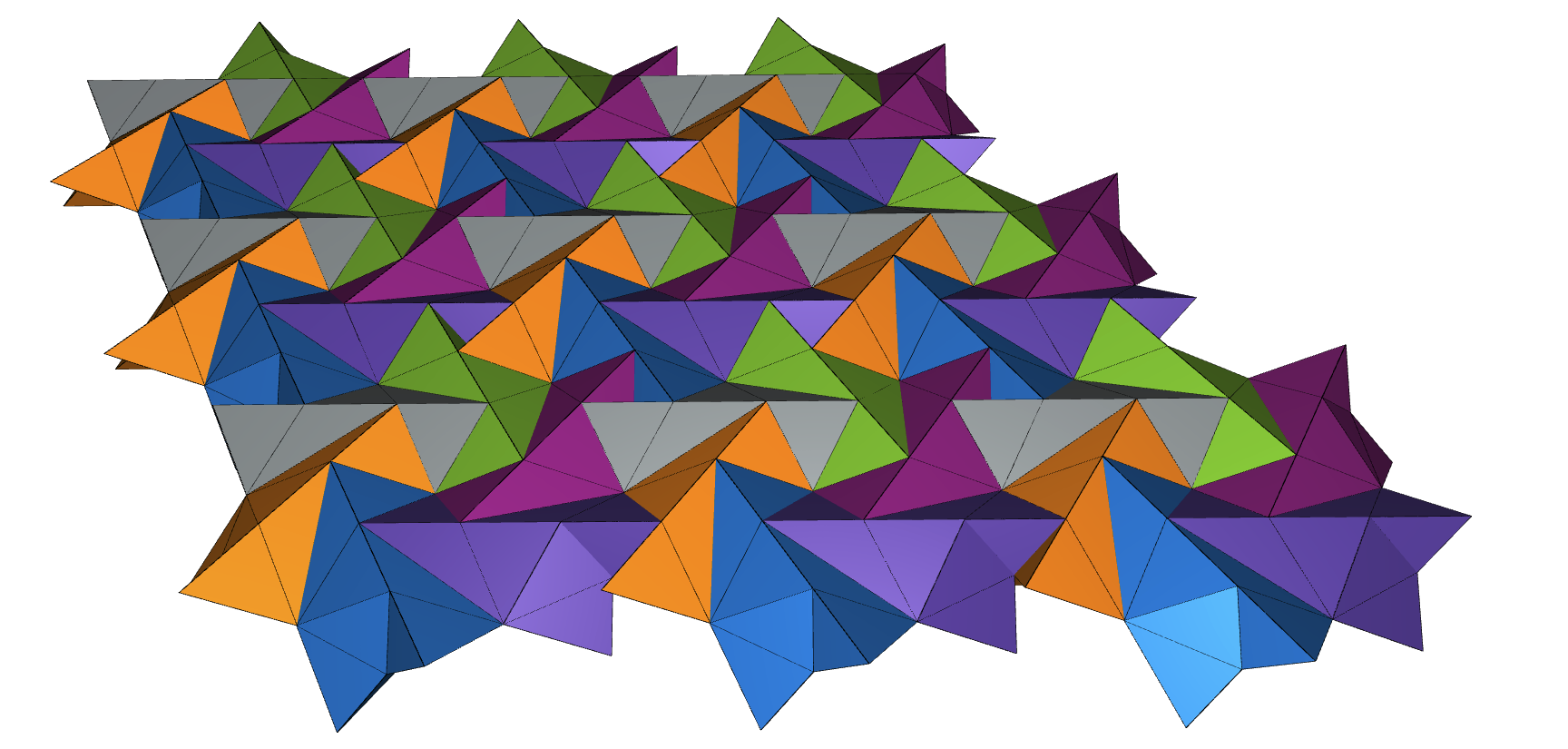}
  \subcaption{}
\end{minipage}
\caption{(a) The block shown in Figure \ref{fig:p6BlockBothSides} can be further modified  by taking subsets to obtain another block. (b) Resulting block by shifting points towards the interior. The block shown in (b) is completely contained inside the block (a). In (c) we see an assembly using the block in (b).}
\label{fig:p6EdgeDegenerationResult}
\end{figure}

As illustrated above, we can classify and construct assemblies admitting a wallpaper symmetry in numerous ways, by also considering subsets of fundamental domains. 

Next, we show how to approximate fundamental domains with smooth boundary curves.

\paragraph{Approximating Curves and Smooth Surfaces}

We can approximate any curve by piecewise linear paths. For instance, consider the curve given by the function $f \colon [0,1]\to \R^2,t\mapsto (t,\sin(2\pi t))$, approximated by piecewise linear paths in Figure \ref{fig:SinusApproximation}.

\begin{figure}[H]
\centering
\input{plot_sinus_paths}
\caption{Approximating $f\colon [0,1]\to \R^2,t\mapsto (t,\sin(2\pi t))$ by piecewise linear paths, see Definition \ref{def:approximation}.}
\label{fig:SinusApproximation}
\end{figure}

We can iterate the Escher Trick $n$-times while simultaneously modifying the method slightly with the goal of approximating smooth surfaces. The aim is to define a homotopy deforming the function $f$ above, into the curve defining the line segment $[0,1]\times \{0\}$ and to use this as a parametrisation of piecewise linear curves with the Escher Trick.

In the following, we observe one way of approximating a continuous curve $f \colon [0,1]\to \R^2$ with piecewise linear function in a compatible way with the Escher Trick.

\begin{definition}\label{def:approximation}
    For a given continuous and injective function $f\colon [0,1]\to \R^2$ with $f(0)=(0,0)^\intercal$ and $f(1)=(1,0)^\intercal$, we define the following piecewise functions $f_0,\dots,f_n$ for given $a,N\in \mathbb{N}$:
    let $n=1,\dots,N$ then $f^N_n:[0,1]\to \R^2$ is the piecewise linear path interpolating between the points $f(0),(\frac{n}{N})^a\cdot (f(\frac{i}{2^n})-(\frac{i}{2^n},0))+(\frac{i}{2^n},0),f(1)$ for $i=1,\dots,2^n-1$.
\end{definition}

These piecewise linear paths approximate $f$ as shown in the following lemma. 

\begin{lemma}\label{lemma:curve_approximation}
    For a function $f$ as given above, the functions $f^N_N$ converge uniformly  to $f$.
\end{lemma}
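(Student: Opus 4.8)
The plan is to reduce the assertion to the classical fact that the piecewise linear interpolant of a uniformly continuous curve converges uniformly to it, the only new ingredient being a careful reading of Definition~\ref{def:approximation}. First I would observe that the scaling factor appearing there is $(n/N)^a$, which equals $1$ precisely when $n=N$; hence $f^N_N$ is nothing but the piecewise linear path through the nodes $f(0),f(1/2^N),f(2/2^N),\dots,f(1-1/2^N),f(1)$, i.e.\ the piecewise linear interpolant of $f$ at the dyadic partition points $t_i=i/2^N$, $i=0,\dots,2^N$. Since $[0,1]$ is compact and $f$ is continuous, $f$ is uniformly continuous, so its modulus of continuity
$$\omega_f(\delta):=\sup\{\lVert f(s)-f(t)\rVert_2 \mid s,t\in[0,1],\ |s-t|\le\delta\}$$
is finite and tends to $0$ as $\delta\to 0^{+}$; every estimate below will be controlled by $\omega_f(2^{-N})$.

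The key step I would then carry out is a pointwise bound. Fix $N$ and a parameter at which $f^N_N$ runs along the segment $[f(t_i),f(t_{i+1})]$; there $f^N_N$ takes a value of the form $\theta f(t_i)+(1-\theta)f(t_{i+1})$ with $\theta\in[0,1]$, so it lies within $\lVert f(t_i)-f(t_{i+1})\rVert_2\le\omega_f(2^{-N})$ of $f(t_i)$. The value of $f$ against which it is compared is $f(s)$ for some $s\in[t_i,t_{i+1}]$, and $\lVert f(s)-f(t_i)\rVert_2\le\omega_f(2^{-N})$; by the triangle inequality the two differ by at most $2\,\omega_f(2^{-N})$. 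Taking the supremum over all parameters and letting $N\to\infty$ gives $\sup\lVert f^N_N-f\rVert_2\le 2\,\omega_f(2^{-N})\to 0$, which is the claimed uniform convergence. Essentially the same computation also yields the Hausdorff bound $d_H(f^N_N([0,1]),f([0,1]))\le\omega_f(2^{-N})$, which is the form in which the statement is used later to approximate fundamental domains with smooth boundary.

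The one step that I expect to need real care is the choice of parametrisation. The piecewise linear path $\gamma_{v_1,\dots,v_n}$ introduced earlier is reparametrised by nested dyadic halving, so its final segment alone occupies the parameter interval $[1/2,1]$; with that literal parametrisation one has $f^N_N(1/2)=f(1-2^{-N})\to f(1)$, and $\sup_t\lVert f^N_N(t)-f(t)\rVert_2$ need not go to $0$. The clean way around this is to read ``the piecewise linear path interpolating between the points'' in Definition~\ref{def:approximation} with the proportional parametrisation — $f^N_N$ affine on each $[t_i,t_{i+1}]$ with $f^N_N(t_i)=f(t_i)$ — in which case the parameter $s$ used above is literally $t$ and the bound of the previous paragraph is immediate; alternatively one keeps the $\gamma$-parametrisation and states the convergence up to a reparametrising homeomorphism of $[0,1]$, which is exactly the Hausdorff statement and is all that the smooth-approximation application requires. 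In either reading the substantive content is only the modulus-of-continuity estimate, the remainder being bookkeeping about which parametrisation one has fixed.
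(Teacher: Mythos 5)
Your proposal is correct and follows essentially the same route as the paper, whose entire proof is the one-line observation that the claim "follows immediately from the fact that $f$ is uniformly continuous on the compact interval $[0,1]$ and the way $f^N_N$ is defined"; you simply make the modulus-of-continuity estimate explicit. Your caveat about the nested dyadic parametrisation of $\gamma_{v_1,\dots,v_n}$ is a genuine subtlety the paper glosses over, and your resolution (proportional reparametrisation, or convergence in Hausdorff distance) is the right way to make the statement literally true.
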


\begin{proof}
    This follows immediately from the fact that $f$ is uniformly continuous on the compact interval $[0,1]$ and the way $f^N_N$ is defined.
\end{proof}

\begin{remark}
    The series of functions are chosen in a way to be compatible with iterating the Escher Trick. The setup is as follows: we deform the edge with points $(0,0)^\intercal,(1,0)^\intercal\in \R^2$ by applying the Escher Trick iteratively with functions $f^N_n$. The outline of the fundamental domain in the $n$th step is then given by $f^N_n$. For the resulting block, we set the height of each layer to $(\frac{n}{N})^a$ for a fixed $a\in \mathbb{N}$. Such ways of approximating curves are known as polygonal approximation in the literature, and other ways of approximating curves with piecewise linear functions can be found, for instance  in \cite{CurvePolygonalApproximations}.

\end{remark}

\begin{example}\label{example:sinus_block}
    In this example, we give a construction of a block for a wallpaper group $G$ of type p4 by deforming straight edges of a square into segments of the sine function.
    We start with a square given by $(0,0)^\intercal,(1,0)^\intercal,(1,1)^\intercal,(0,1)^\intercal \in \R^2$ such that the edges given by the points $(0,0)^\intercal,(1,0)^\intercal$ and $(0,0)^\intercal,(0,1)^\intercal$, respectively $(1,0)^\intercal,(1,1)^\intercal$ and $(1,1)^\intercal,(0,1)^\intercal$ are identified under the action of $G$. We deform the edge $(0,0)^\intercal,(1,0)^\intercal$ using the function $f(t)=(t,\frac{\sin(2\pi \cdot t)}{\pi})^\intercal$ and obtain the block in Figure \ref{fig:SinusBlock} as described in Remark 3.3 by setting $N=10,a=3$. In this way, we obtain a block whose outer perimeter approximates a piecewise smooth surface.
\end{example}

\begin{figure}[H]
\centering
\begin{minipage}{.33\textwidth}
  \centering
  \includegraphics[height=3cm]{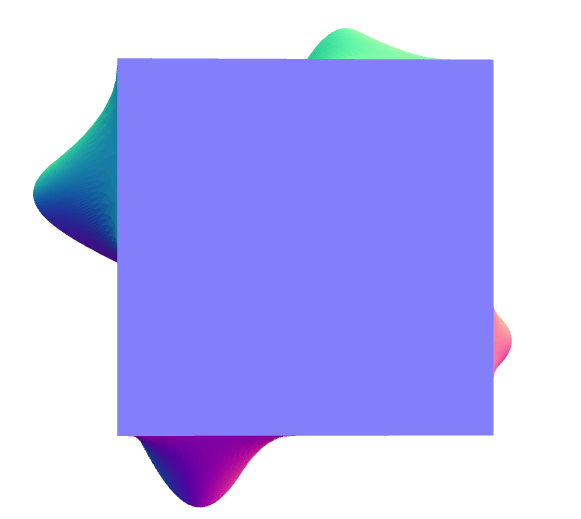}
  \subcaption{}
\end{minipage}%
\begin{minipage}{.33\textwidth}
  \centering
  \includegraphics[height=3cm]{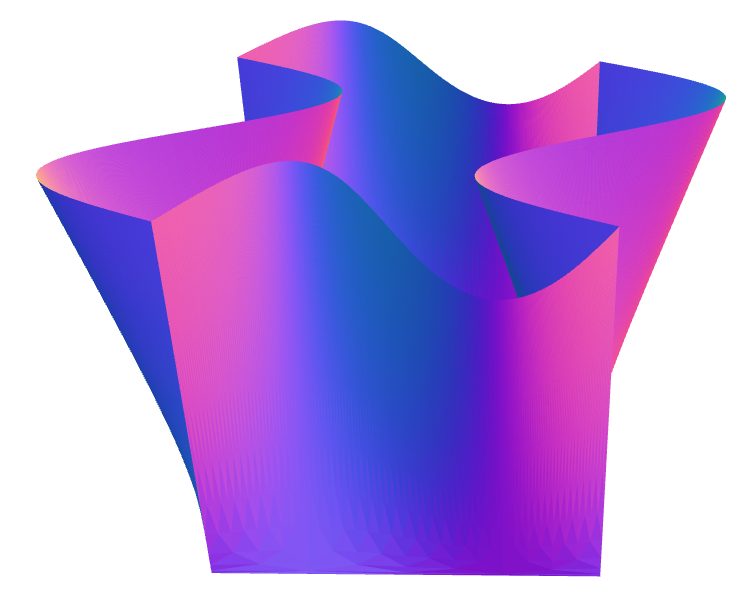}
  \subcaption{}
\end{minipage}
\begin{minipage}{.33\textwidth}
  \centering
  \includegraphics[height=3cm]{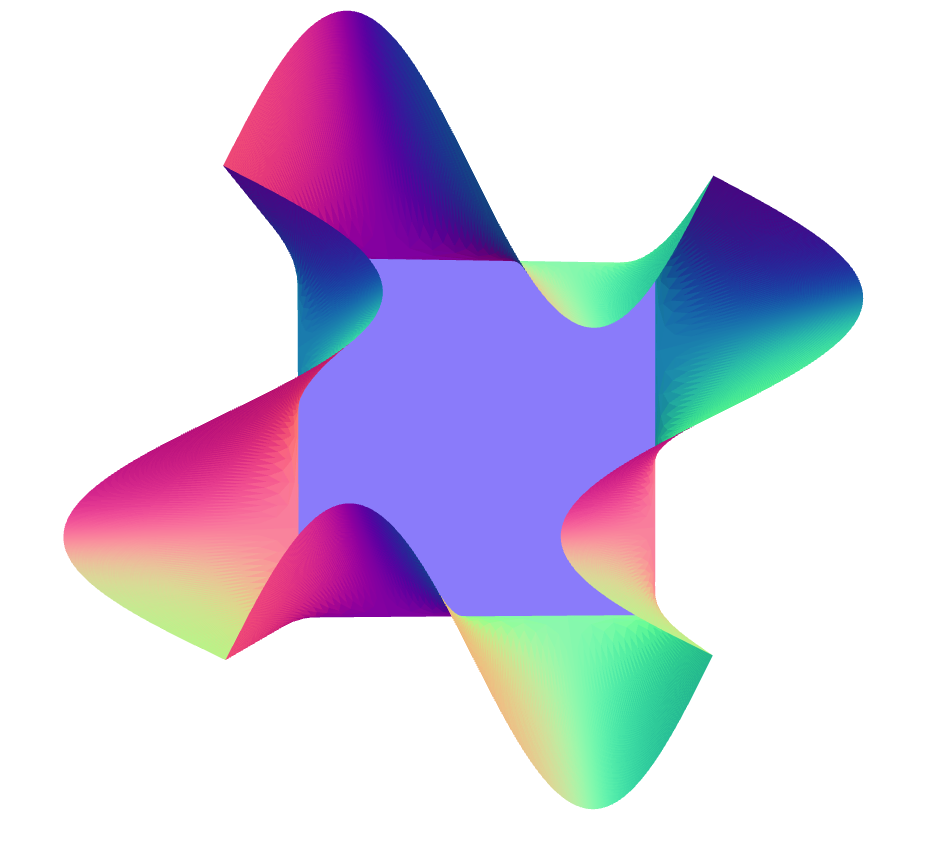}
  \subcaption{}
\end{minipage}
\caption{Views of the constructed block in Example \ref{example:sinus_block} from (a) below, (b) front and (c) above.}
\label{fig:SinusBlock}
\end{figure}

\section{VersaTiles - Versatile Block and RhomBlock} \label{sec:VersaTiles}

In this section, we introduce a construction method for tiles and interlocking blocks, characterised by generalised \emph{Truchet tiles}, which can be assembled in numerous ways. We employ the theory of wallpaper groups and fundamental domains (see Section \ref{sec:construcion}) to derive the tile construction method. The deformation of tiles leads to a family of tiles with combinatorially equivalent tiling rules. 

A block arising from this construction is the \emph{Versatile Block}, initially introduced in \cite{GoertzenFIB}. This block can be assembled in various planar and non-planar configurations, as first discussed in \cite{GoertzenBridges}.

As an illustrative example, we demonstrate the continuous deformation of a \emph{lozenge} into a hexagon, resulting in a polyhedron in three-dimensional space. This polyhedron gives rise to interlocking assemblies corresponding to tilings with lozenges, and we refer to this block as the \emph{RhomBlock}, given its rhombic bottom shape.

\subsection{Wallpaper Groups with equilateral quadrilateral Fundamental Domains}

In the following, we primarily focus on the following wallpaper groups: p1, pg, p3 and p4 (see Section \ref{sec:construcion}). In general, there is no canonical choice of a fundamental domain for a given wallpaper group $G$. Convex constructions of fundamental domains include Dirichlet domains, also known as Voronoi domains, leading to a fundamental domain for any point $x\in \mathbb{R}^2$ in general position, i.e.\ $x$ has a trivial stabiliser in $G$, see Definition \ref{def:voronoi}.

For a given wallpaper group $G$ of type p1$=\langle t_1,t_2\rangle \cong \mathbb{Z}^2$, it is necessary to give a lattice basis. If we choose $t_1=(0,1)^\intercal$ and $t_2=(\cos\alpha,\sin\alpha )^\intercal$, for $0<\alpha<\pi$, there exists a fundamental domain in the form of a rhombus (all side lengths are the same) displayed in Figure \ref{fig:fundamental_angles}. For different angles $\alpha$, the rhombus yields an example of a fundamental domain for several wallpaper groups, as described in the following.

\begin{figure}[H]
\centering
\begin{minipage}{\textwidth}
  \centering
  \includegraphics[height=3cm]{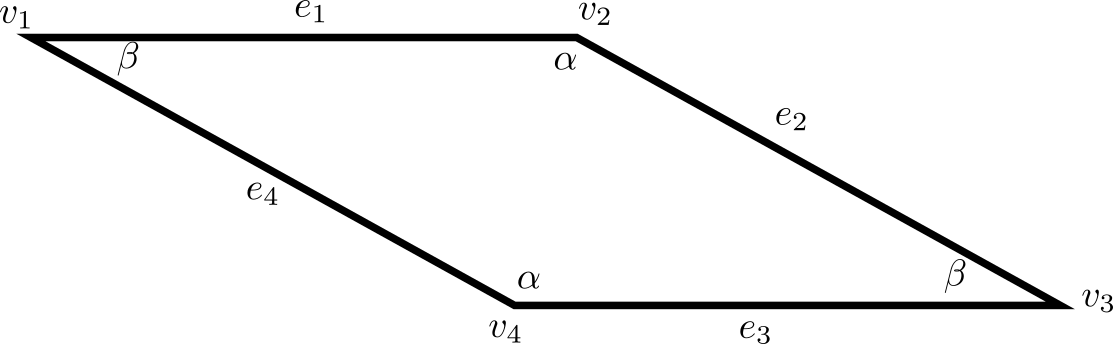}
\end{minipage}%
\caption{Fundamental domain for p1 in the form of a rhombus with $\alpha+\beta=\pi$.}
\label{fig:fundamental_angles}
\end{figure}

In general, the fundamental domain in Figure \ref{fig:fundamental_angles} also gives a fundamental domain for the group pg, generated by the glide reflection defined by first applying the reflection matrix $\begin{pmatrix}-1 & 0\\
0 & 1
\end{pmatrix}$ and the translations defined by the vectors $(1,0)^\intercal,(0,\sin\alpha)^\intercal$. Hence, the group pg can be generated by the translations $t_1,2\cdot t_2$ and the glide reflection above. The two groups lead to two periodic ways of assembling a rhombus, as shown in Figure \ref{lozenges_p1} and Figure \ref{lozenges_pg}. For the special choices $\alpha=\beta=\pi/2$ respectively $\alpha=\pi/3,\beta=2\pi/3$, we obtain two more periodic tilings, where the rhombi have the form of squares respectively lozenges and the underlying wallpaper groups have a point group of 4-fold respectively 3-fold rotations, see Figure \ref{p3_p4_tiling}. These choices give rise to fundamental domains for p1 and pg as well as for p4 respectively p3.

\begin{figure}[H]
\centering
\begin{minipage}{.24\textwidth}
  \centering
  \includegraphics[height=3cm]{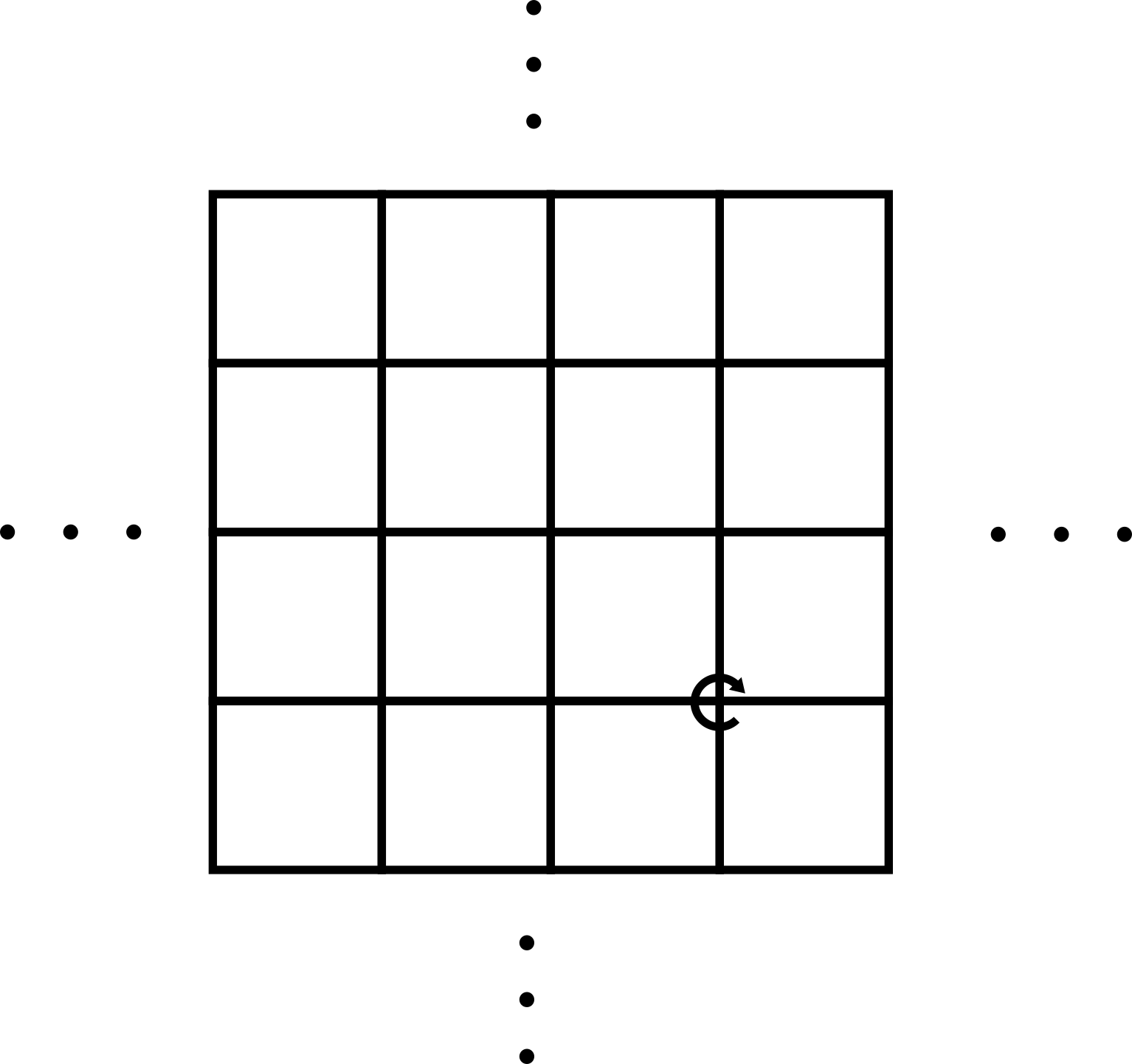}
  \subcaption{}
    \label{fig:p4_tiling_simple}
\end{minipage}%
\begin{minipage}{.24\textwidth}
  \centering
  \includegraphics[height=3cm]{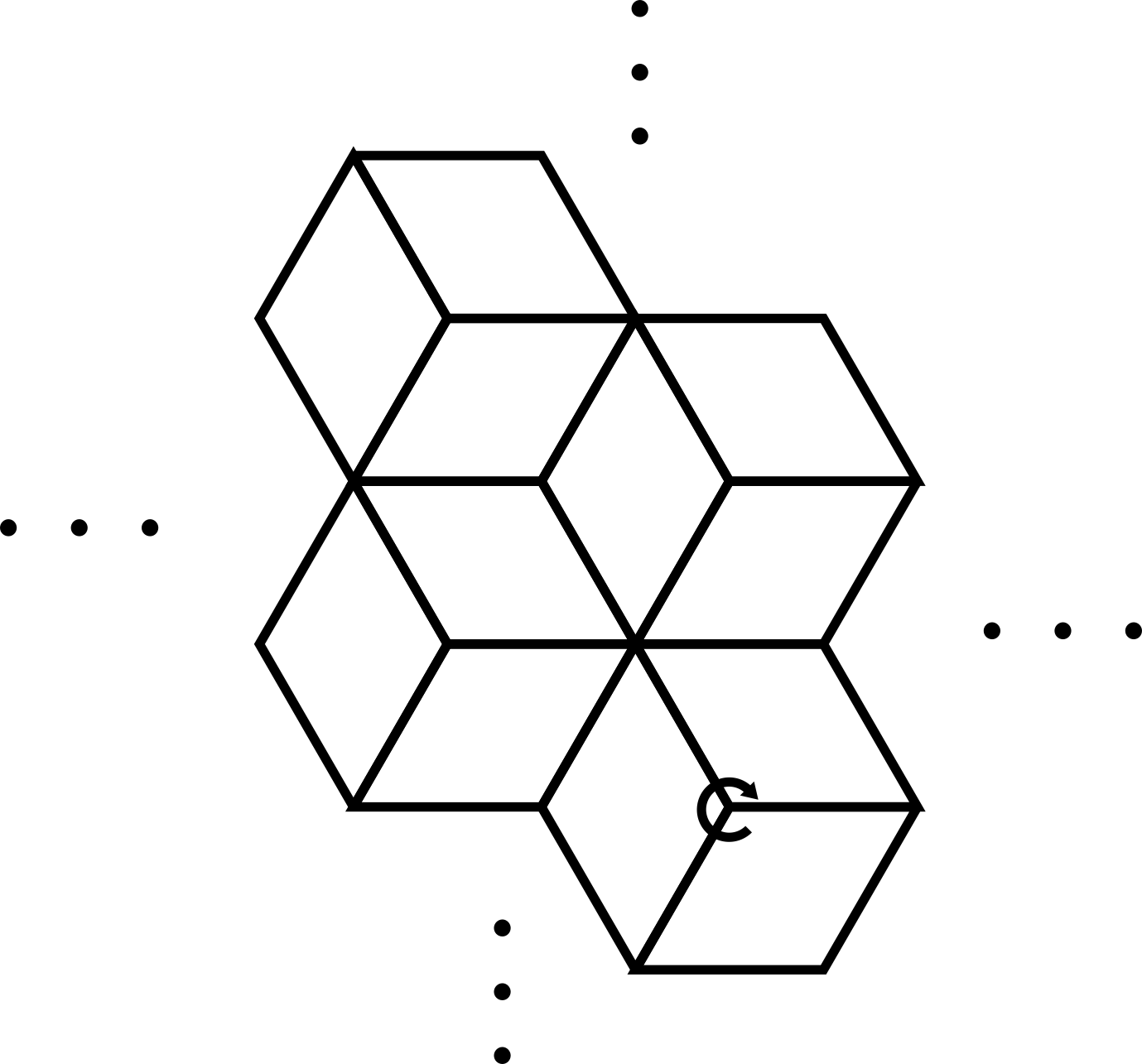}
  \subcaption{}
    \label{fig:p3_tiling_simple}
\end{minipage}
\begin{minipage}{.24\textwidth}
    \centering
    \includegraphics[height=3cm]{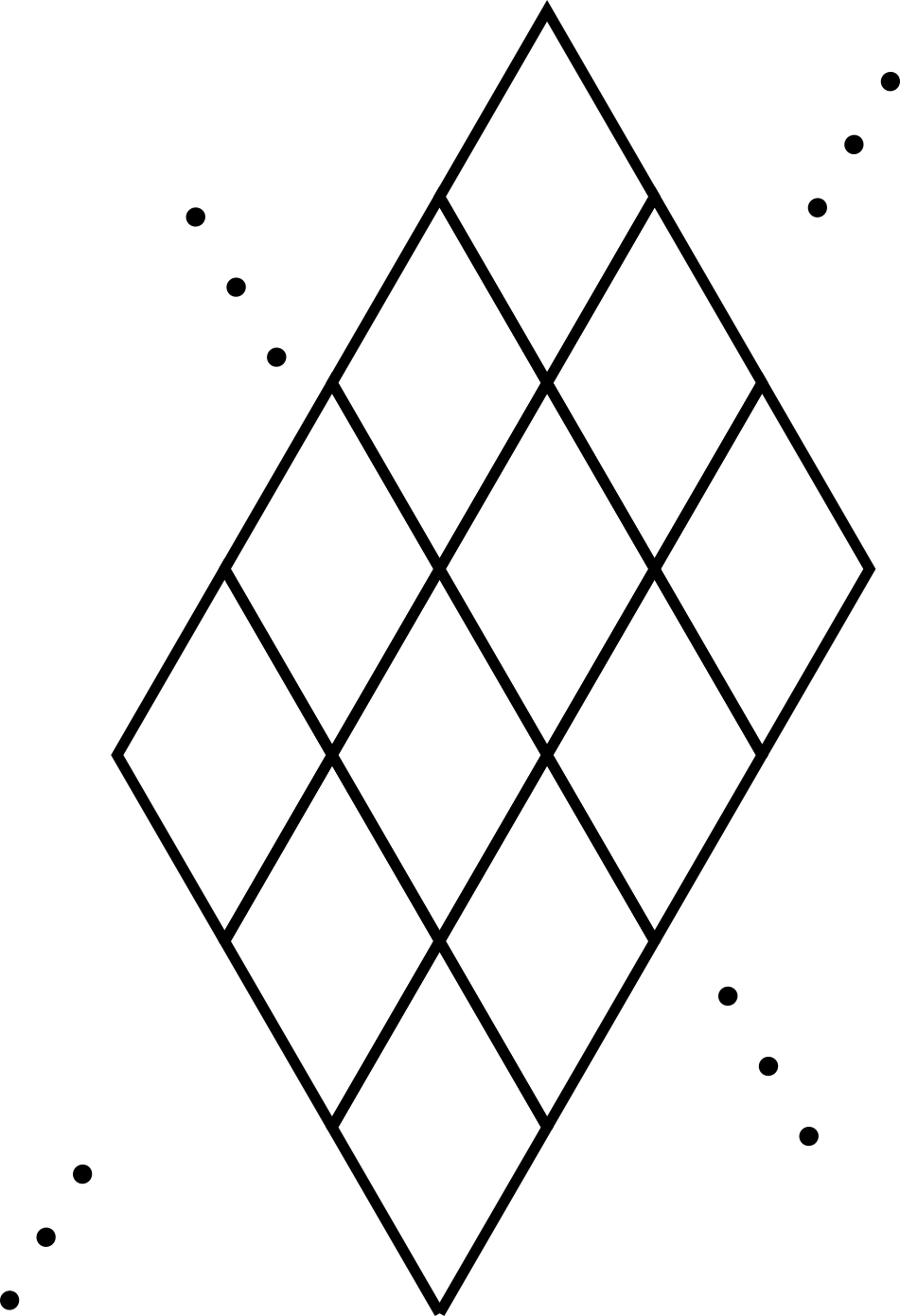}
\subcaption{}
\label{lozenges_p1}
\end{minipage}
\begin{minipage}{.24\textwidth}
    \centering
    \includegraphics[height=3cm]{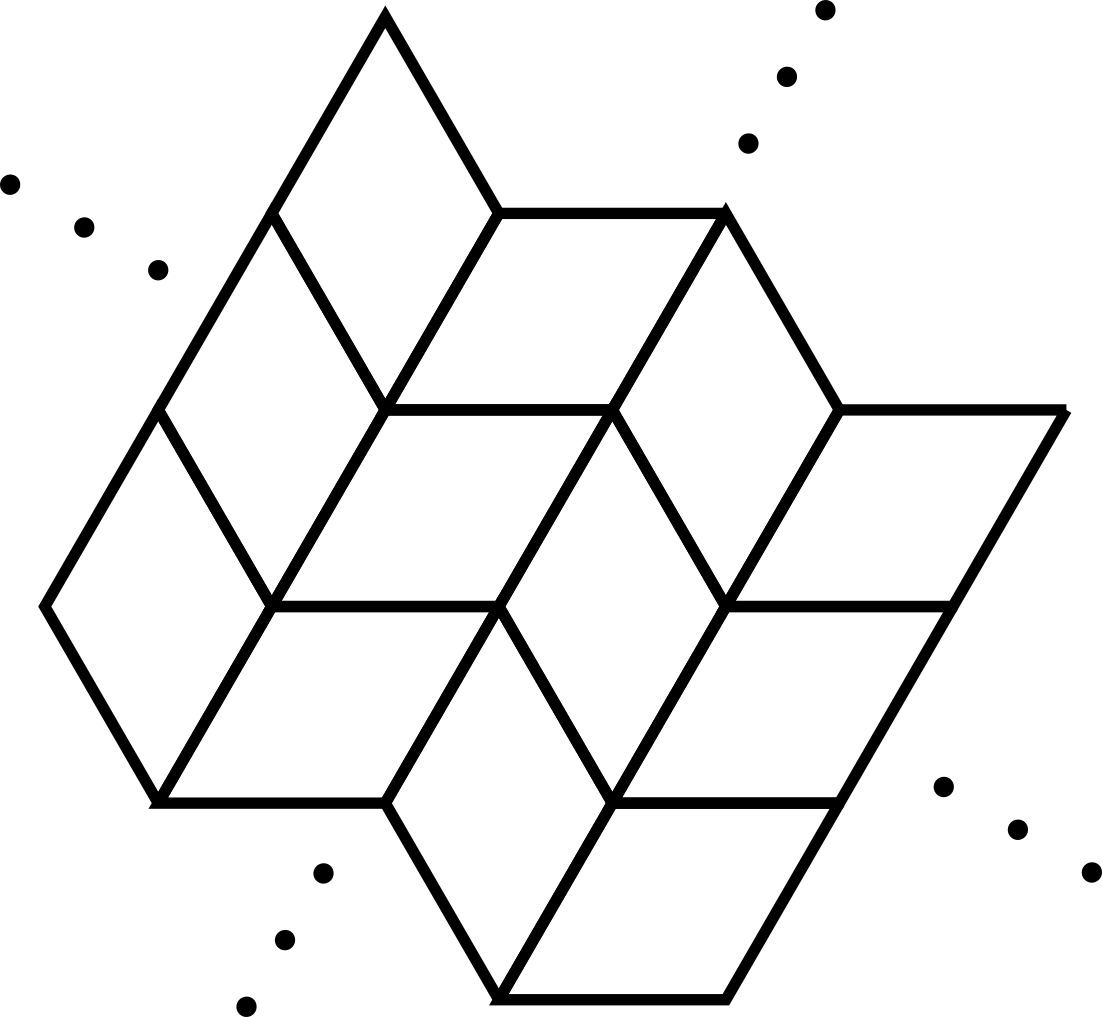}
\subcaption{}
\label{lozenges_pg}
\end{minipage}
\caption{Rhombi tiled periodically: (a) p4-symmetry ($\alpha=\beta=\frac{\pi}{2}$) (squares), (b) p3-symmetry with lozenges ($\alpha=\frac{\pi}{3},\beta=\frac{2\pi}{3}$) (c) p1-symmetry, (d) pg-symmetry.}
\label{p3_p4_tiling}
\end{figure}

The different periodic ways of tiling rhombi, shown in Figure  \ref{p3_p4_tiling}, can be combined to obtain aperiodic tilings.  In the next subsection, we show that we can generalise this construction to obtain fundamental domains leading to tiles with a rich combinatorial way of assembling them. 

\subsection{Constructing VersaTiles}

In this section, we formulate a versatility condition leading to the construction of  VersaTiles and blocks like the Versatile Block, introduced in \cite{GoertzenFIB}, that can be assembled in non-unique ways.

\subsubsection{The VersaTile Condition}

In this subsection, we provide a method for obtaining fundamental domains which can be assembled in non-unique ways. Starting with a parallelogram as in Figure \ref{fig:fundamental_angles} which is always a fundamental domain for a certain wallpaper groups of type p1 and pg. We identify the edges using the group action of a wallpaper group $G$ on the plane and deform them using two curves $\gamma_1,\gamma_2$ connecting the vertices of the two edge-representatives such that any two curves in the $G$-orbit of $\gamma_1,\gamma_2$ can touch but do not cross. This leads to a new fundamental domain defined by the two curves $\gamma_1,\gamma_2$, as described in Section \ref{sec:construcion}.
In order to create VersaTiles, we consider the following three sets of domains, which are fundamental domains for more than one wallpaper group:
\begin{align*}
\mathcal{F}_{1}=\mathcal{F}_{1}(\alpha,\beta)= & \{F|F\text{ is a Fundamental Domain for p1 and pg}\},\\
\mathcal{F}_{4}=\mathcal{F}_{4}(\alpha,\beta)= & \{F|F\text{ is a Fundamental Domain for p1, pg and p4}\}\text{ and}\\
\mathcal{F}_{3}=\mathcal{F}_{3}(\alpha,\beta)= & \{F|F\text{ is a Fundamental Domain for p1, pg and p3}\}.
\end{align*}
Note that we use compatible groups in the respective sets, i.e.\ the translational lattice has to be the same, defined by the angles $\alpha,\beta$ of the respective parallelogram (see Figure \ref{fig:fundamental_angles}) and for 
$\mathcal{F}_{4}(\alpha,\beta)$ it follows that $\alpha=\beta=\pi/2$ and for $\mathcal{F}_{3}(\alpha,\beta)$ it follows that $\alpha=\pi/3,\beta=2\pi/3$. With this choice of angles,the inclusions  $\mathcal{F}_3\subset \mathcal{F}_1$ and $\mathcal{F}_4 \subset \mathcal{F}_1$ hold. Moreover, a rhombus with certain angles is contained in each set.
\begin{remark}
    The idea of the VersaTiles construction is to apply the Escher Trick on the rhombus shown in Figure \ref{fig:fundamental_angles} while respecting the symmetries of several wallpaper groups simultaneously. This can be formulated for p1 and pg as the \emph{first versatility condition}, as follows:
    
    \begin{enumerate}
        \item[(1)]  The paths for the edges $e_1$ and $e_2$ (see Figure \ref{fig:fundamental_angles}) are symmetric around the perpendicular bisector of the corresponding edge, i.e.\ for $e\in \{e_1,e_2 \}$ and $\gamma_e$ a path with the same endpoints $v,w$ as $e$, the image of $\gamma_e$ in $\R^2$ is invariant under reflection along the perpendicular bisector of the edge $e$. Moreover, the paths for $e_3$ and $e_4$ are determined by the paths of $e_1$ and $e_2$ under the translations $v_4-v_1$ and $v_3-v_2$, respectively.
     \end{enumerate}  
    This condition is enforced by the two ways the edges are identified. For respecting the symmetries of p4 or p3, we also need to consider a dependence between the edges $e_1$ and $e_2$. This is formulated in the following \emph{second versatility condition}:
    \begin{enumerate}
        \item[(2)] The paths for $e_2$ is determined by the path of $e_1$ by rotating it around the angle $\alpha$.
    \end{enumerate}
    This way, we can construct infinitely many combinatorially equivalent tiles, which we call \emph{VersaTiles}. Furthermore, the path approach in Section \ref{sec:construcion}, Lemma \ref{lemma:curve_approximation}, yields three-dimensional blocks that give rise to candidates for interlocking assemblies. Both steps together yield a region that contains a path segment determining the whole boundary of the resulting tile, see red triangle in Figure \ref{fig:apple_man}.  In the special cases p4 and p3 these construction steps can be understood as operations inside the wallpaper groups p4gm and p3m1, respectively.
\end{remark}

In the following subsection, we consider the special cases $\mathcal{F}_4,\mathcal{F}_3$ of this construction to construct both VersaTiles and interlocking blocks.

\begin{figure}[H]
    \centering
    \resizebox{!}{1.5cm}{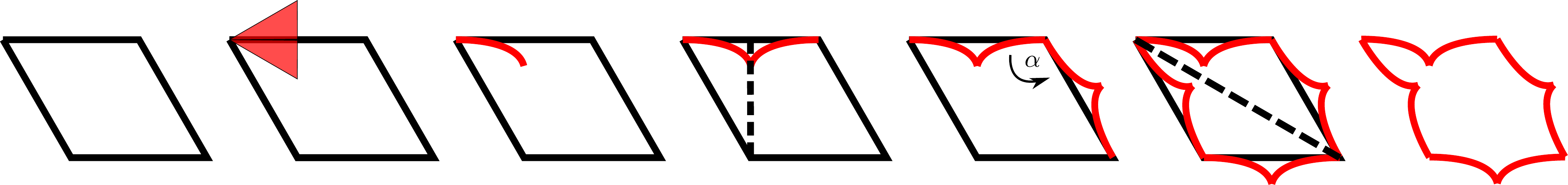}
 \caption{Construction steps for generalised Truchet tilings with versatility conditions enforced on paths visualised by dashed lines: 1. Start with a rhombus; 2. Identify regions to draw a path segment; 3. Draw a path segment; 4. Mirror it to enforce condition (1); 5. Rotate it to enforce condition (2); 6. Obtain the remaining paths by translations; 7. Put together all paths to obtain a VersaTile.}
    \label{fig:apple_man}
\end{figure}

\begin{figure}[H]
\centering
\begin{minipage}{.3\textwidth}
  \centering
  \includegraphics[height=2.5cm]{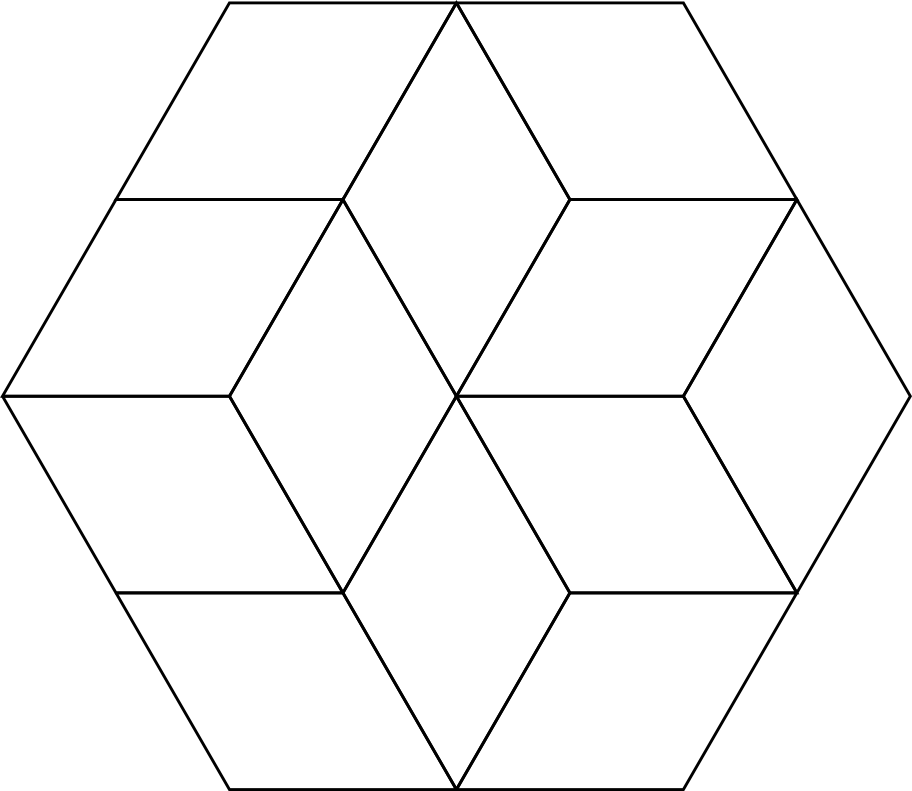}
  \subcaption{}
    \label{fig:lozenge_random}
\end{minipage}
\begin{minipage}{.3\textwidth}
  \centering
  \includegraphics[height=2.5cm]{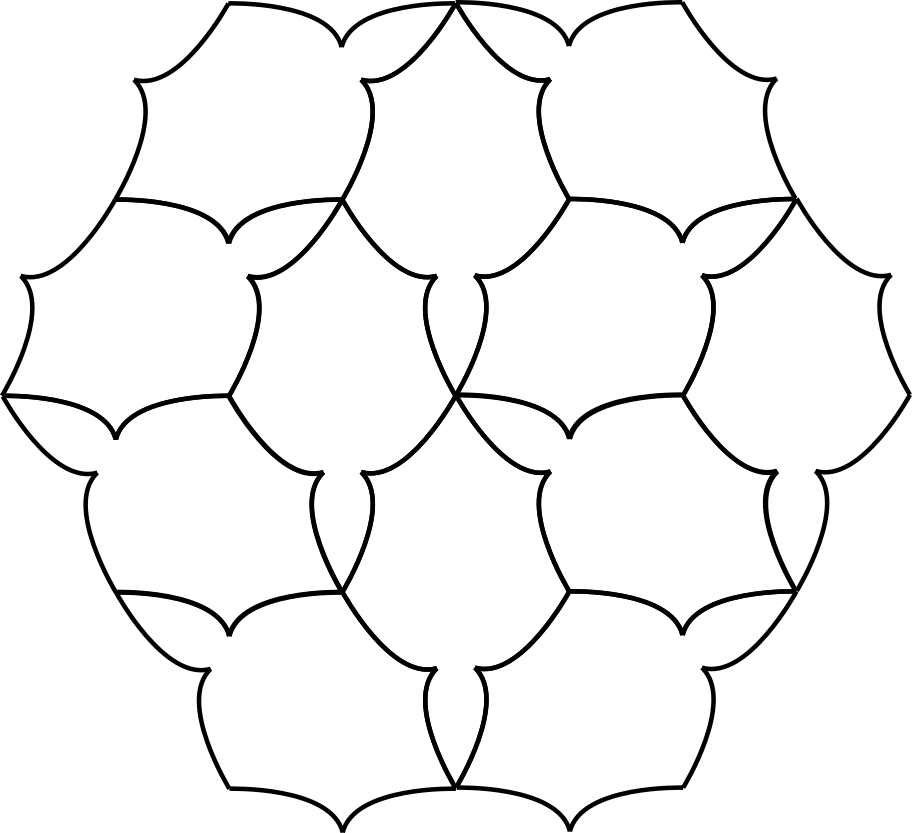}
  \subcaption{}
    \label{fig:apple_man_random}
\end{minipage}
\begin{minipage}{.3\textwidth}
  \centering
  \includegraphics[height=2.5cm]{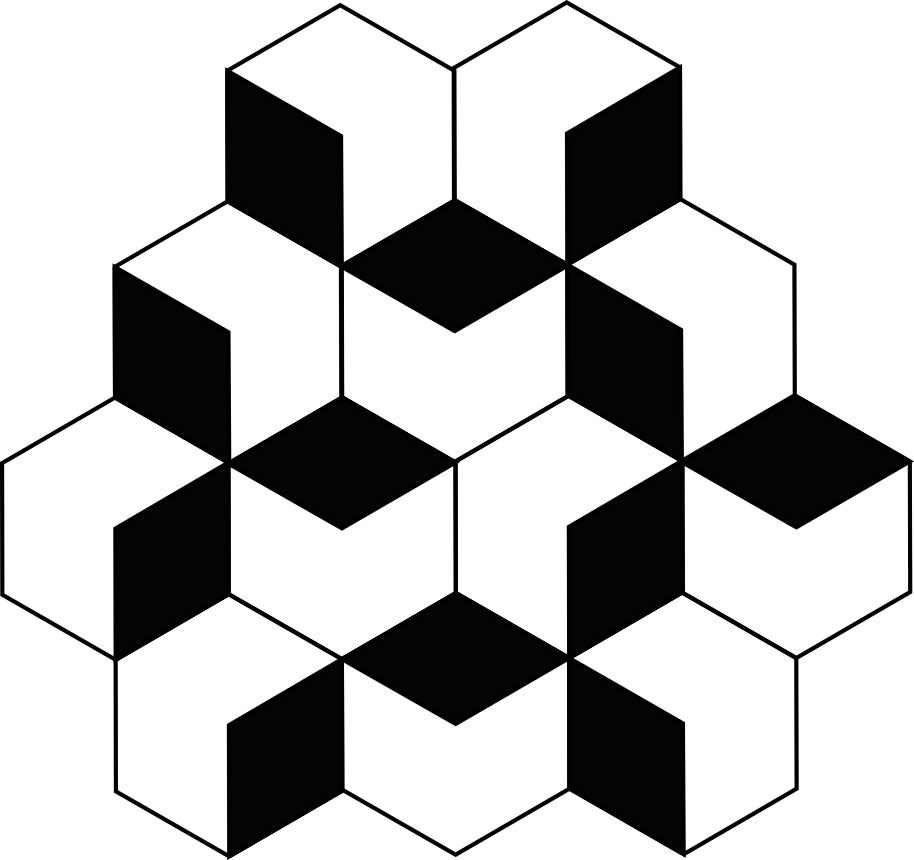}
  \subcaption{}
    \label{fig:hexagon_random}
\end{minipage}
\caption{(a) Tiling of hexagon with lozenge tiles (b,c), same tiling with generalised lozenge tiles. The black parts in (c) indicate the initial orientation of the equivalent tilings in (a,b).}
\end{figure}

\subsubsection{Generalised Truchet Tiles and Combinatorics}

In this subsection, we propose a generalised version of Truchet tiles made from parallelograms with unit side lengths in order to classify tilings with VersaTiles. This approach is strongly related to the theory of \emph{dimer models} which are polygons that consist of two \emph{atoms}, where an atom can be an equilateral triangle, a square, a cube or a similar shape. For further reading on dimer models, we refer to \cite{WhatIsDimerKenyonOkounkov} for an introduction and \cite{DimerLectures} for the fundamental theory.

In \cite{Truchet1704Memoir}, Truchet describes square tiles with a diagonal, where one triangle is coloured black and the other triangle is coloured black. Moreover, he proves the existence of infinitely many possible assemblies. Smith and Boucher explore the connection of Truchet tiles to other tiles and provides additional insights into their combinatorics in \cite{TruchetTilesSmith}. We can generalise Truchet tiles to rhombs in two canonical ways, see Figure \ref{fig:gen_truchet}.

\begin{figure}[H]
\centering
\begin{minipage}{.45\textwidth}
  \centering
  \includegraphics[height=1.5cm]{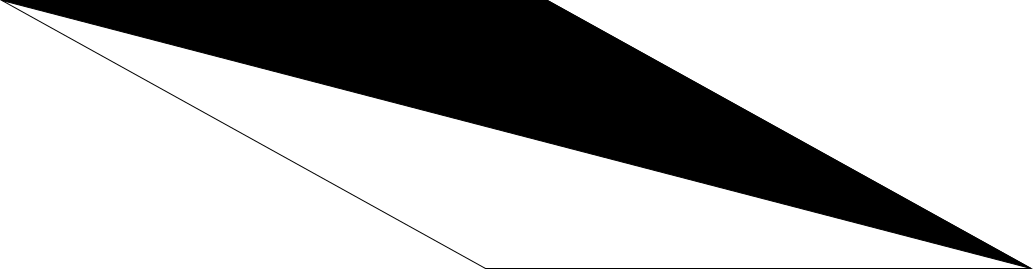}
  \subcaption{} 
  \label{fig:gen_truchet_1}
\end{minipage}
\begin{minipage}{.45\textwidth}
  \centering
  \includegraphics[height=1.5cm]{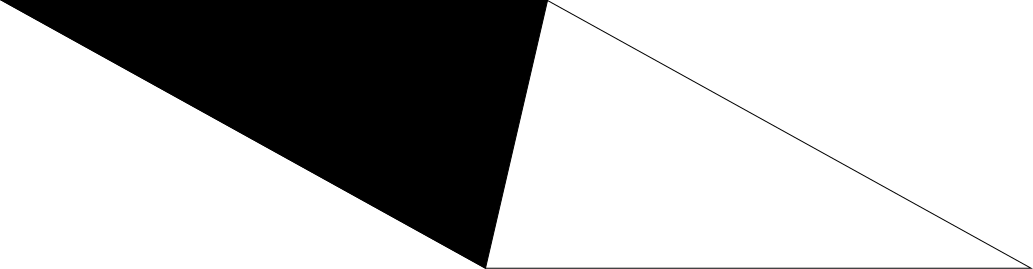}
  \subcaption{}
    \label{fig:gen_truchet_2}
\end{minipage}
\caption{(a,b) Two ways of generalising Truchet tiles.}
\label{fig:gen_truchet}
\end{figure}

The generalisation of Truchet tiles is motivated by the following tiling/assembling rule, see \cite{GoertzenBridges}:

$$\text{Two different Truchet tiles only touch at different colours.}$$

In Figure \ref{fig:gen_truchet_tiles_assemblies}, we see how we can assemble such tiles for different angles.

\begin{figure}[H]
\centering
\begin{minipage}{.4\textwidth}
  \centering
  \resizebox{!}{2cm}{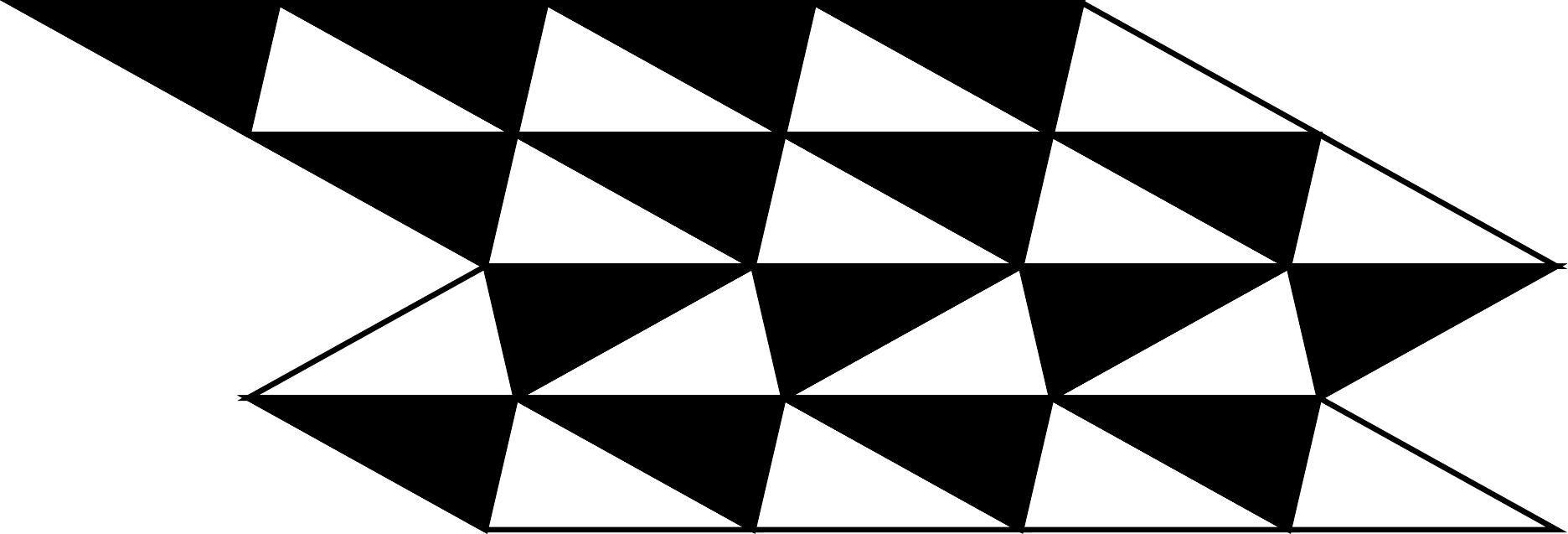}
  \subcaption{}
\end{minipage}%
\begin{minipage}{.2\textwidth}
  \centering
  \resizebox{!}{2cm}{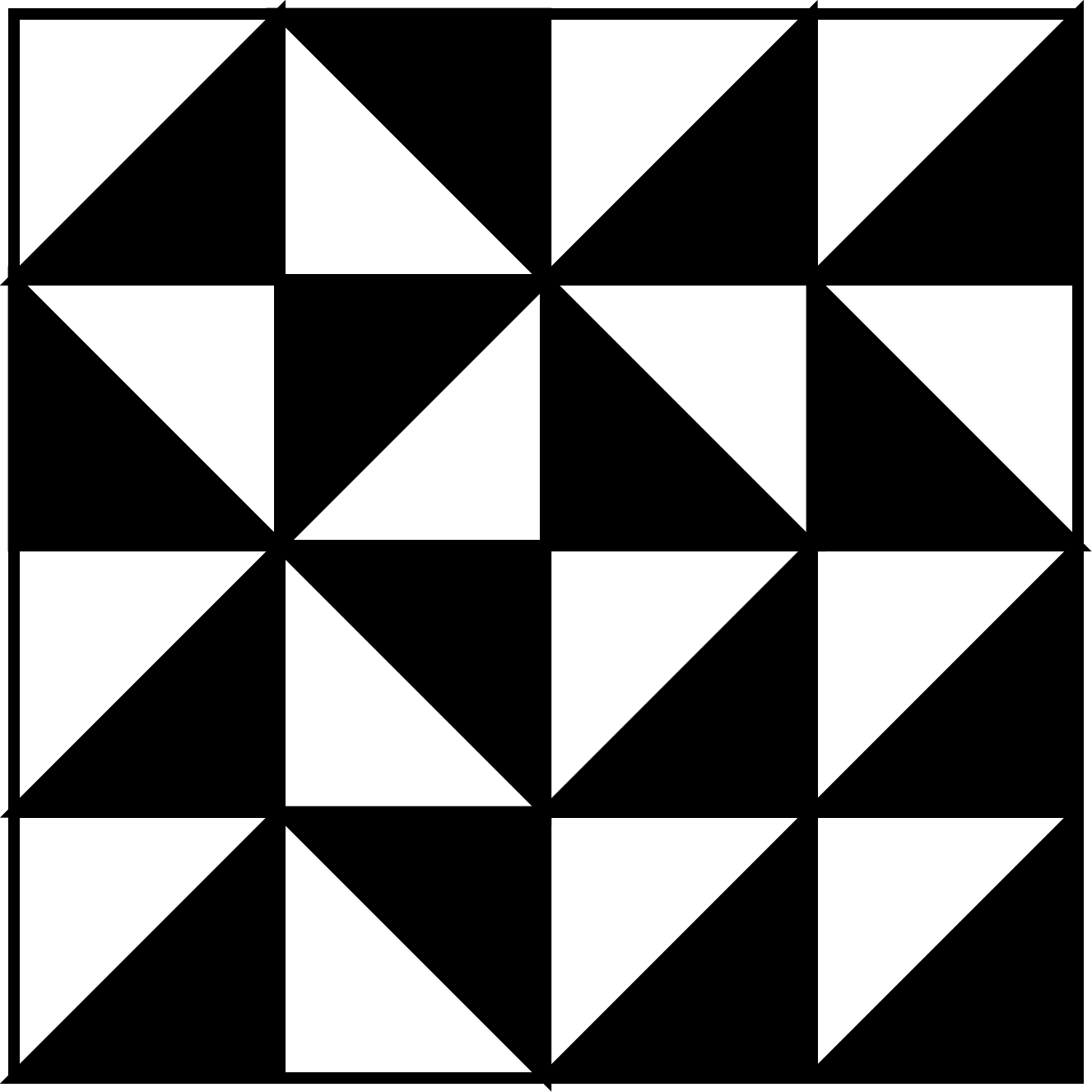}
  \subcaption{}
\end{minipage}
\begin{minipage}{.2\textwidth}
  \centering
  \resizebox{!}{2cm}{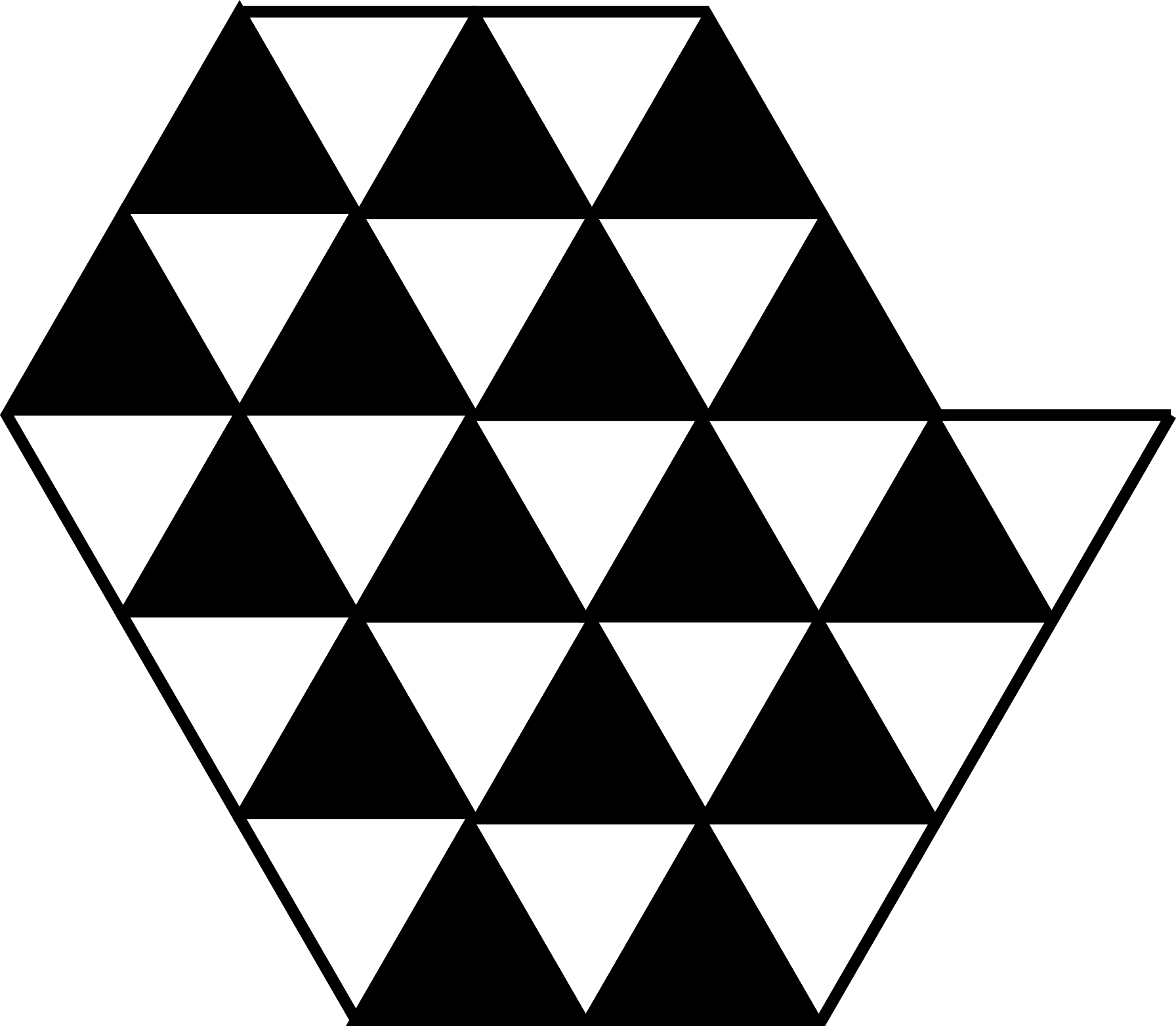}
  \subcaption{}
\end{minipage}
\caption{Generalised Truchet tilings based on wallpaper symmetries (a) p1 and pg, (b) p1,pg and p4, (c) p1,pg and p3.}
\label{fig:gen_truchet_tiles_assemblies}
\end{figure}

For $\alpha=\beta=\pi/2$, we obtain a square and for $\alpha=\pi/3,\beta=2\pi/3$ a lozenge as Truchet tile. These choices of angles lead to two more possible planar assemblies, corresponding to the wallpaper groups p4 and p3. In the extreme case p4 which is handled in \cite{GoertzenBridges} it is straightforward to understand the combinatorics behind Truchet tiles and the tiling rule given above.  

\begin{lemma}
    There are $2^{n+m}$ tilings in an $n\times m $ grid.
\end{lemma}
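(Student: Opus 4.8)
The plan is to exhibit an explicit bijection between the tilings of the $n\times m$ grid and the set $(\mathbb{Z}/2)^{n}\times(\mathbb{Z}/2)^{m}$ of pairs consisting of one bit per column and one bit per row. First I would pin down the finite combinatorial model in the square case $\alpha=\beta=\pi/2$: after the Escher Trick a VersaTile is a unit square carrying one path segment that meets the boundary only at two opposite corners and cuts it into a black region and a white region; taking into account the residual symmetry (these operations live inside the wallpaper group p4gm), there are exactly four such decorated squares, and each of them assigns a colour $N,E,S,W\in\mathbb{Z}/2$ (black $=1$, white $=0$) to its four sides (north, east, south, west respectively). The observation that drives everything is purely internal to the tile: for each of the four decorated squares one has $N+S=1$ and $E+W=1$; equivalently, sending a decorated square to the pair $(S,W)\in(\mathbb{Z}/2)^{2}$ of its south- and west-colours is a bijection onto $(\mathbb{Z}/2)^{2}$.

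Next I would encode the matching rule. Index the cells by $(i,j)$, $1\le i\le n$, $1\le j\le m$. Along the vertical edge shared by cells $(i,j)$ and $(i+1,j)$ the rule ``different Truchet tiles touch at different colours'' means that the region of $(i,j)$ and the region of $(i+1,j)$ adjacent to that edge are coloured oppositely, i.e.\ $E(i,j)+W(i+1,j)=1$; likewise $N(i,j)+S(i,j+1)=1$ for the horizontal shared edges. (One first checks that two neighbouring cells of the \emph{same} decorated type always touch at opposite colours, precisely because $E\neq W$ and $N\neq S$ inside every tile, so the rule as stated is equivalent to demanding opposite colours across \emph{every} shared edge.) Substituting the internal identities $E(i,j)=W(i,j)+1$ and $N(i,j)=S(i,j)+1$ into these equations yields $W(i,j)=W(i+1,j)$ and $S(i,j)=S(i,j+1)$: the west-colour is constant along each row, call it $w_{j}$, and the south-colour is constant down each column, call it $s_{i}$. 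By the bijection of the previous paragraph the cell $(i,j)$ must carry the unique decorated square with south-colour $s_{i}$ and west-colour $w_{j}$, so a valid tiling is completely determined by $(s_{1},\dots,s_{n})\in(\mathbb{Z}/2)^{n}$ and $(w_{1},\dots,w_{m})\in(\mathbb{Z}/2)^{m}$.

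Finally I would verify the converse: placing at $(i,j)$ the decorated square with south-colour $s_{i}$ and west-colour $w_{j}$, its north- and east-colours are $s_{i}+1$ and $w_{j}+1$, so $N(i,j)+S(i,j+1)=(s_{i}+1)+s_{i}=1$ and $E(i,j)+W(i+1,j)=(w_{j}+1)+w_{j}=1$ for all interior edges, and the rule imposes nothing along the outer boundary. Hence the tilings are in bijection with $(\mathbb{Z}/2)^{n}\times(\mathbb{Z}/2)^{m}$, a set of size $2^{n}\cdot 2^{m}=2^{n+m}$. The part I expect to be the real obstacle is the very first step: correctly identifying the four-element model and verifying $N+S=E+W=1$ for the generalised Truchet square actually produced by the construction of Section~\ref{sec:VersaTiles}, including the case where the separating path touches sides at interior points rather than corners (then each side carries two colours and one must still argue that the matching rule collapses the freedom to one bit per row and one bit per column). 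By contrast, once the model is fixed the counting is the short linear computation over $\mathbb{Z}/2$ sketched above; a minor point worth stating explicitly is that ``tilings in an $n\times m$ grid'' is meant with a free boundary, since a prescribed boundary colouring would pin down some of the $s_{i},w_{j}$ and lower the count.
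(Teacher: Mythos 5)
Your proposal is correct and follows essentially the same route as the paper, which simply asserts that a tiling is uniquely determined by its upper and left boundary colours (equivalently, your one bit per row plus one bit per column) and defers details to Lemma~1 of the cited Bridges paper. Your write-up fills in the two points the paper leaves implicit, namely the internal identities $N+S=E+W=1$ that propagate the boundary colours through the grid, and the converse check that every choice of the $n+m$ bits yields a valid tiling, which is needed for the count to be exactly $2^{n+m}$.
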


\begin{proof}
    This follows directly from the fact that the tiling is uniquely determined by the upper and left boundary colours. For an alternative proof, see Lemma~1 in \cite{GoertzenBridges}.
\end{proof}

In the other extreme case, corresponding to the wallpaper group p3, tilings are classified by the theory of lozenge tilings, see \cite{gorin_2021}.  For the theory of lozenges, it is much harder to classify tilings. However, the following result is well-known.

\begin{lemma}[\cite{gorin_2021}]\label{lemma:lozenge_hexagon}
    A hexagon with side lengths $a,b,c\in \mathbb{Z}_{>0}$ has $$ \prod_{i=1}^{a}\prod_{j=1}^{b}\prod_{k=1}^{c}\frac{i+j+k-1}{i+j+k-2}$$-many tilings with lozenges.
\end{lemma}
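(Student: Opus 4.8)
The statement is MacMahon's classical box formula for plane partitions, and the plan is to prove it through the standard chain of bijections to non-intersecting lattice paths followed by a determinant evaluation. \textbf{Setup.} Realise the hexagon $H=H_{a,b,c}$ on the triangular lattice, so that any lozenge tiling cuts $H$ into unit rhombi of three orientations. Singling out the lozenges of one fixed orientation and contracting each of them to an edge encodes the tiling as a family of $c$ monotone lattice paths $P_1,\dots,P_c$ with fixed start points $A_1,\dots,A_c$ on one side of $H$ and fixed end points $B_1,\dots,B_c$ on the opposite side; the tiling condition is exactly that these paths are pairwise vertex-disjoint. This bijection (equivalently, lozenge tilings of $H$ correspond to plane partitions fitting in an $a\times b\times c$ box) is classical, and I would cite \cite{gorin_2021} for the details, so the problem reduces to counting such path families.

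\textbf{From paths to a determinant.} Because the source/sink configuration is non-permutable — the identity is the only way to connect the $A_i$ to the $B_j$ by non-crossing paths — the Lindström--Gessel--Viennot lemma identifies the number of such path families with the determinant
$$ \det_{1\le i,j\le c}\bigl(N(A_i\to B_j)\bigr), $$
where $N(A_i\to B_j)$ counts monotone lattice paths from $A_i$ to $B_j$ and is therefore a binomial coefficient. After normalising coordinates this takes the form $\det_{1\le i,j\le c}\binom{a+b}{a+i-j}$ (the exact index shift depends on the chosen embedding of $H$). It then remains to show that this binomial determinant equals the triple product $\prod_{i=1}^{a}\prod_{j=1}^{b}\prod_{k=1}^{c}\frac{i+j+k-1}{i+j+k-2}$.

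\textbf{Evaluating the determinant.} This is the technical core and the step I expect to be the main obstacle, although it is entirely classical. I would proceed by one of: (i) recognising the binomial determinant, via the dual Jacobi--Trudi identity, as a principal specialisation $s_\lambda(1,1,\dots,1)$ of a Schur polynomial for a suitable (near-rectangular) partition $\lambda$, and then applying the hook-content formula for such specialisations, which directly produces the triple product; (ii) clearing common factors row by row and performing column operations to reduce the determinant to a Vandermonde determinant, whose value one then simplifies; or (iii) quoting the corresponding "box formula" determinant evaluation from the standard determinant calculus literature. Any of these routes, after routine simplification of the resulting ratio of factorials, yields the claimed product and completes the proof. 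As an alternative that avoids determinants entirely, one may instead set up a recurrence in $(a,b,c)$ for the number of lozenge tilings via graphical (Kuo) condensation, verify the Plücker-type identity it requires, and check that the triple product satisfies the same recurrence together with the obvious base cases.
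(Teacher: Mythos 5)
The paper does not prove this lemma at all: it is MacMahon's box formula, imported verbatim from the cited reference \cite{gorin_2021}, so there is no in-paper argument to compare against. Your outline is the standard proof of that classical result and is correct as a roadmap: the bijection between lozenge tilings of the $a\times b\times c$ hexagon and families of $c$ non-intersecting monotone lattice paths (equivalently, plane partitions in a box) is standard; the source/sink configuration is indeed non-permutable, so the Lindstr\"om--Gessel--Viennot lemma applies and reduces the count to $\det_{1\le i,j\le c}\binom{a+b}{a+i-j}$; and each of the evaluation routes you list (hook-content via a principal Schur specialisation, reduction to a Vandermonde determinant, or Kuo condensation) is a legitimate classical way to obtain the triple product. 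The only caveat is that, as written, the technical core --- the determinant evaluation --- is sketched rather than executed, so your text is a correct proof strategy rather than a complete proof; but since the paper itself delegates the entire statement to the literature, your proposal is at least as detailed as what the paper provides and is consistent with the proof found in the cited source.
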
 

\subsection{Limit Case p4 -- The Versatile Block}

We now describe the construction of the \emph{Versatile Block}, which was first introduced in \cite{GoertzenFIB}. We can tile the plane $\mathbb{R}^2$ with unit squares with edges identified based on the wallpaper group p4. Afterwards, we deform the square with side length $\sqrt{2}$ into a rectangle with side lengths $2$ and $1$, which are also known as \textit{Aztec tiles} in the literature, see \cite{gorin_2021}. We can still rotate the resulting block and put an assembly together by shifting groups of fours blocks using translations. Note that both the square and the rectangle given above are fundamental domains for the wallpaper groups pg, p4 and p1.

\begin{figure}[H]
\centering
\resizebox{!}{2cm}{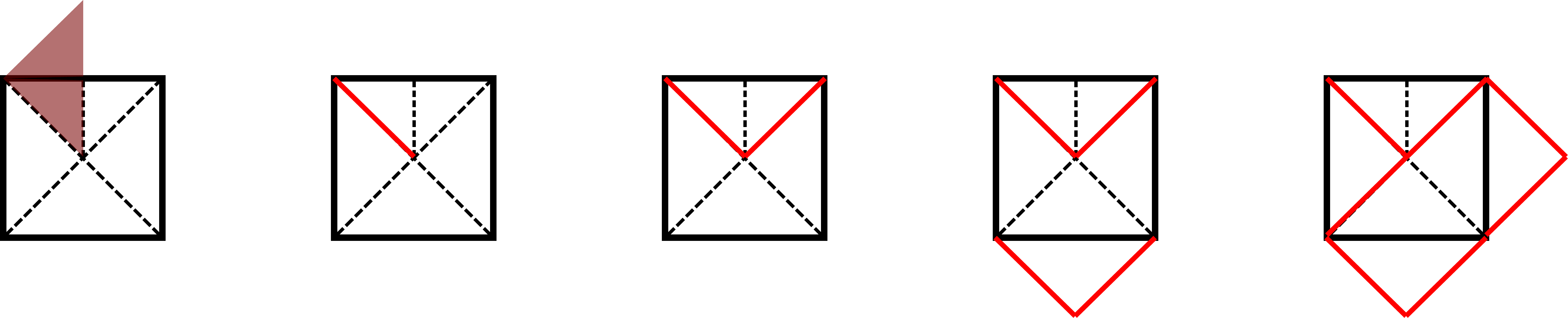}
\caption{Versatile Block construction.}
\label{fig:versatile_block_construction}
\end{figure}

Since, we are using piecewise linear paths with a single intermediate point in Figure \ref{fig:versatile_block_construction}, we can apply the methods presented in Section \ref{sec:construcion}. Thus, we can interpolate between the square and the rectangle to obtain intermediate domains, see Figure \ref{fig:versatile_block_combinatorial_equivalent}, together with a block, called \emph{Versatile Block}, see Figure \ref{fig:p4block}.

\begin{figure}[H]
\centering
\resizebox{!}{3cm}{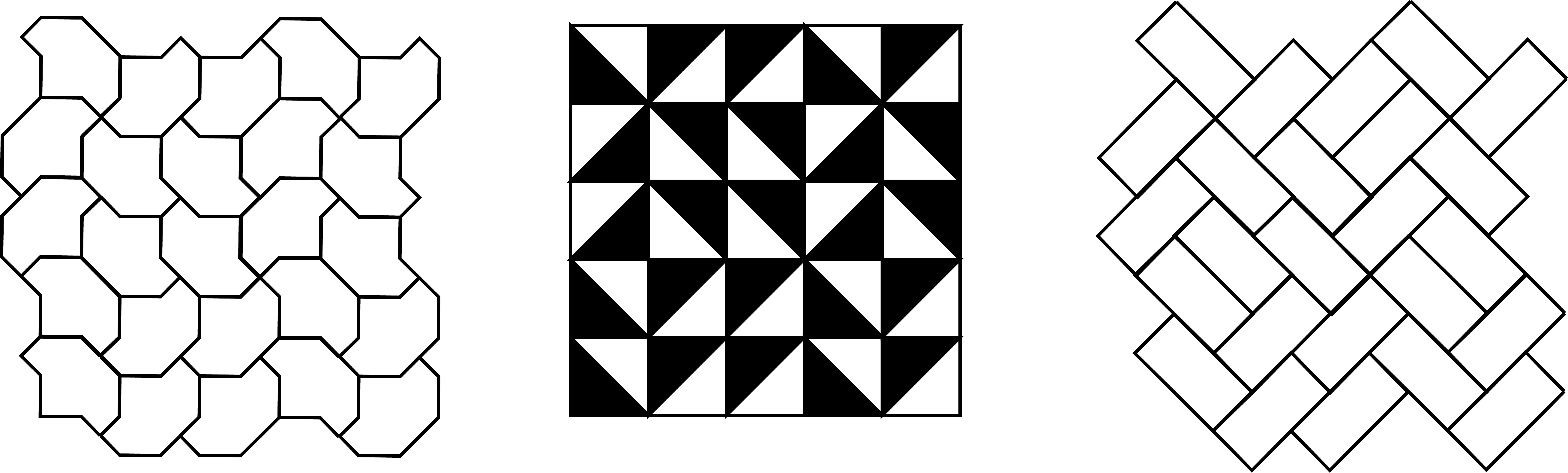}
\caption{Combinatorial equivalent tilings based on VersaTile construction in Figure \ref{fig:versatile_block_construction}.}
\label{fig:versatile_block_combinatorial_equivalent}
\end{figure}

\begin{figure}[H]
    \centering
    \includegraphics[height=2.5cm]{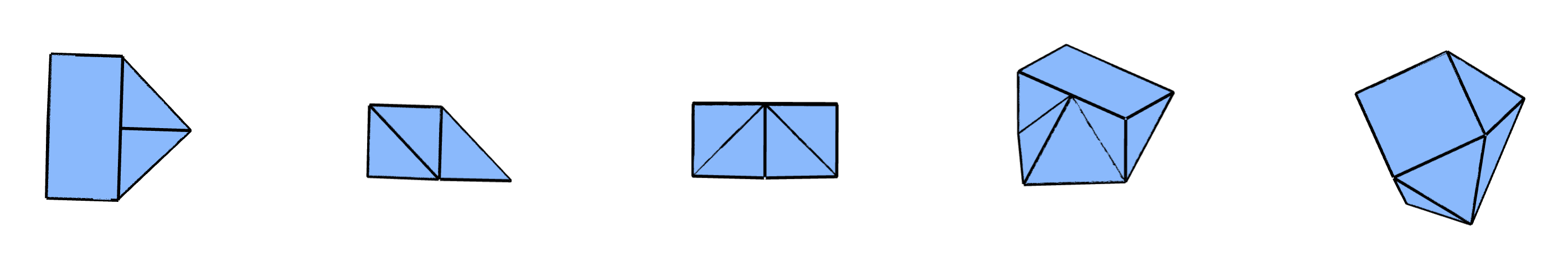}
    \caption{Different views of the Versatile Block constructed above, see \cite{GoertzenFIB}.}
    \label{fig:p4block}
\end{figure}

The Versatile Block described in Figure \ref{fig:versatile_block_construction} is called versatile as it admits different periodic assemblies based on the groups pg, p4 and p1 as shown in Figure \ref{fig:pg_assemblies}. 

\begin{figure}[H]
    \centering
    \includegraphics[height=2cm]{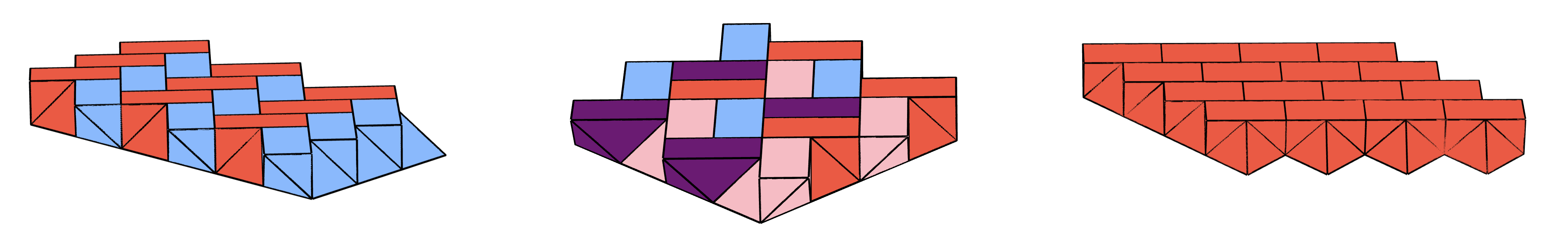}
    \caption{The Versatile Block described in Figure \ref{fig:versatile_block_construction} admits different assemblies with wallpaper symmetries. Blocks of the same colour are obtained by applying translations, see \cite{GoertzenFIB}.} 
    \label{fig:pg_assemblies}
\end{figure}

The list of coordinates in $\mathbb{R}^3$ for the nine vertices of the \textit{Versatile
Block} is given by:
\[
\begin{array}{ccccc}
v_{1} & v_{2} & v_{3} & v_{4} & v_{5}\\
(0,0,0)^{\intercal} & (1,1,0)^{\intercal} & (2,0,0)^{\intercal} & (1,-1,0)^{\intercal} & (0,1,1)^{\intercal}\\
v_{6} & v_{7} & v_{8} & v_{9}\\
(1,1,1)^{\intercal} & (1,0,1)^{\intercal} & (1,-1,1)^{\intercal} & (0,-1,1)^{\intercal}.
\end{array}
\]

We refer to a particular vertex by its position in this ordered list, indicated by the index of $v_i$ for $i=1,\dots,9$. For the underlying incidence structure of the triangulation, the faces are given by the following lists of incident vertices. {\small
\begin{align*}
    &[[ 1, 2, 3 ], [ 1, 2, 5 ], [ 1, 3, 4 ], [ 1, 4, 9 ], [ 1, 5, 9 ], [ 2, 3, 7 ],[ 2, 5, 6 ],\label{faces}\\&\phantom{[}  [ 2, 6, 7 ], [ 3, 4, 7 ], [ 4, 7, 8 ], 
  [ 4, 8, 9 ], [ 5, 6, 7 ], [ 5, 7, 9 ], [ 7, 8, 9 ]].\notag
\end{align*}} 

\begin{remark}
    The height of the Versatile Block is chosen to be $1$. In this way, we can also assemble two copies in a non-planar way. In total, there are six ways of assembling two copies of the Versatile Block up to symmetry, shown in Figure \ref{assemblies}.
\end{remark}

\begin{figure}[H]
\centering
        \centering
        \includegraphics[height=2cm]{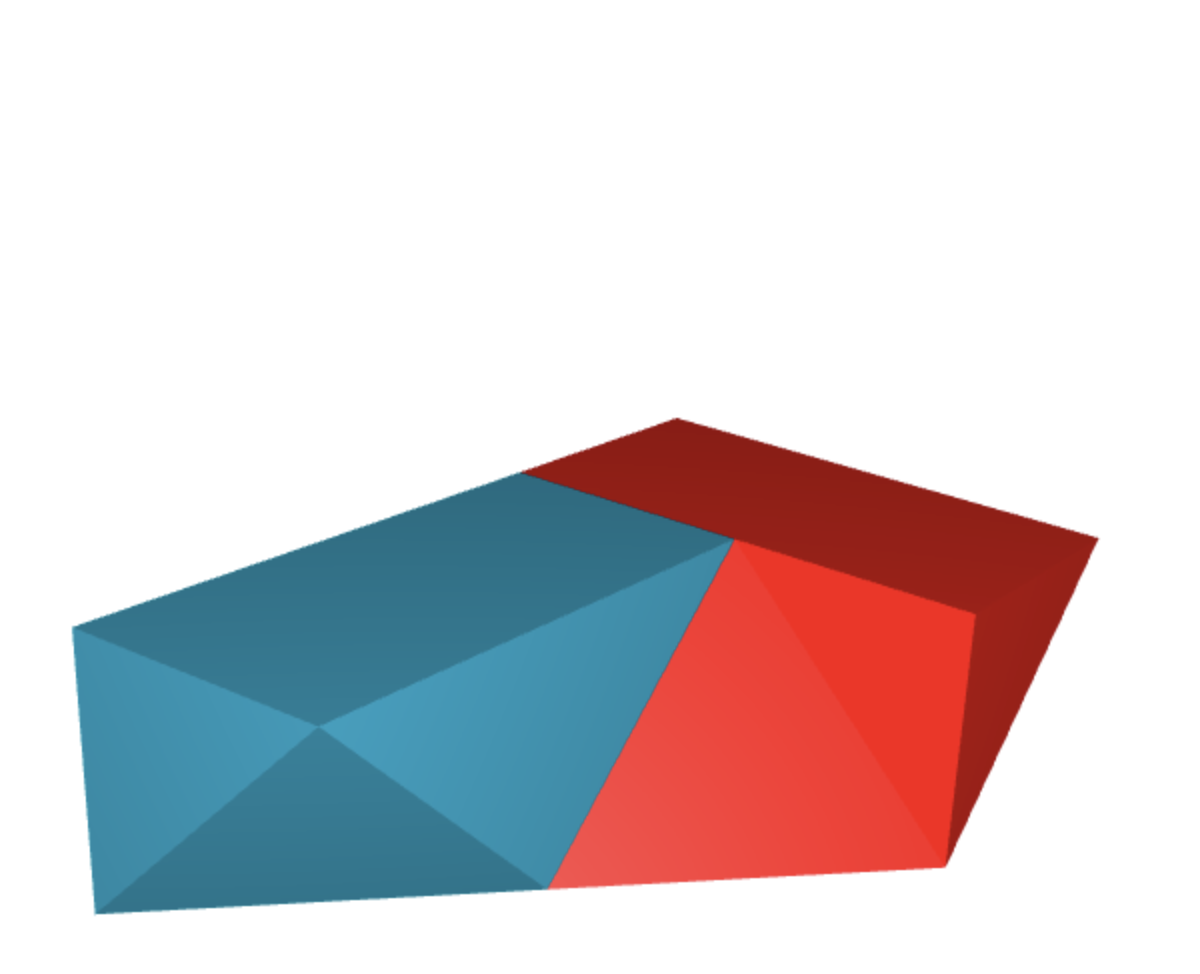}
        \includegraphics[height=2cm]{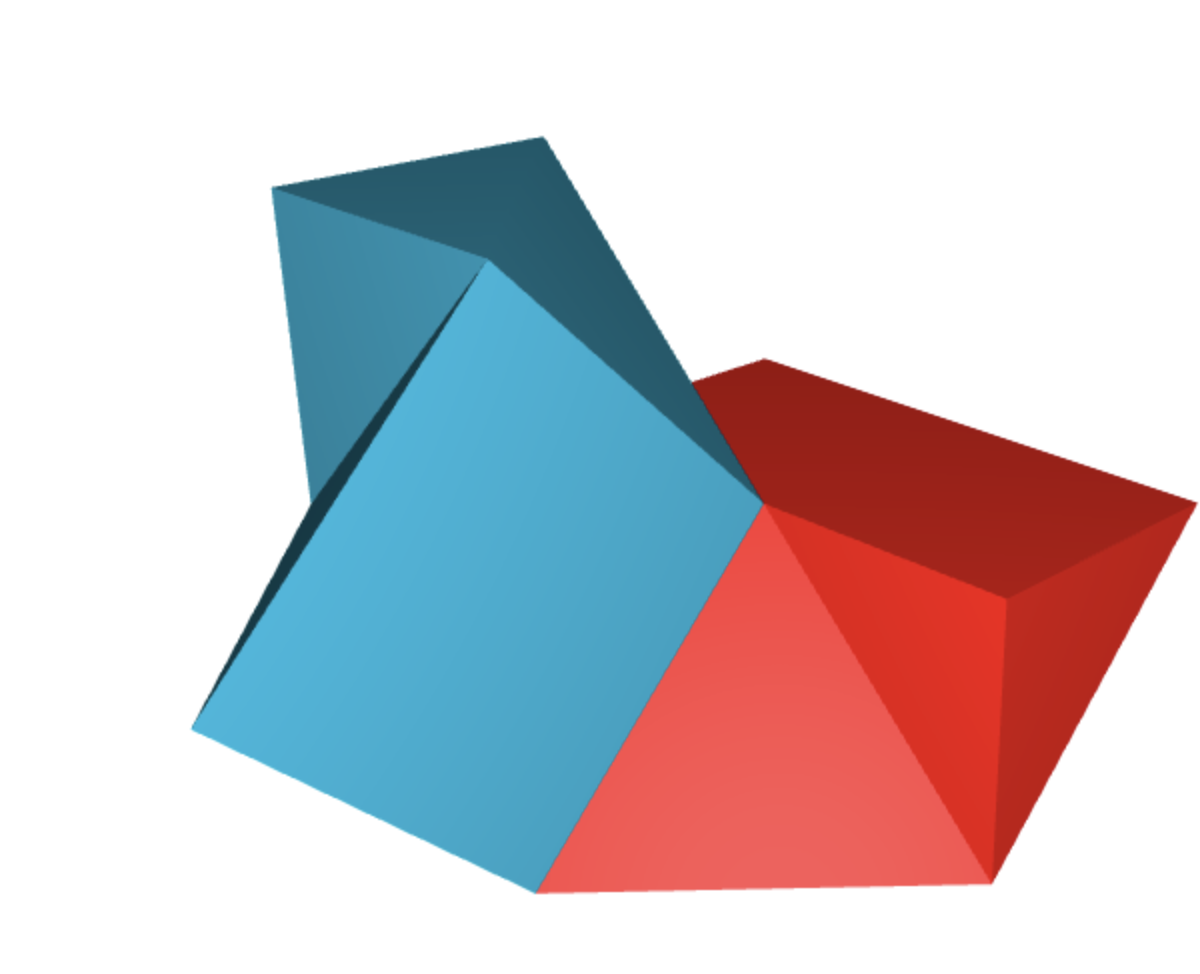}
        \includegraphics[height=2cm]{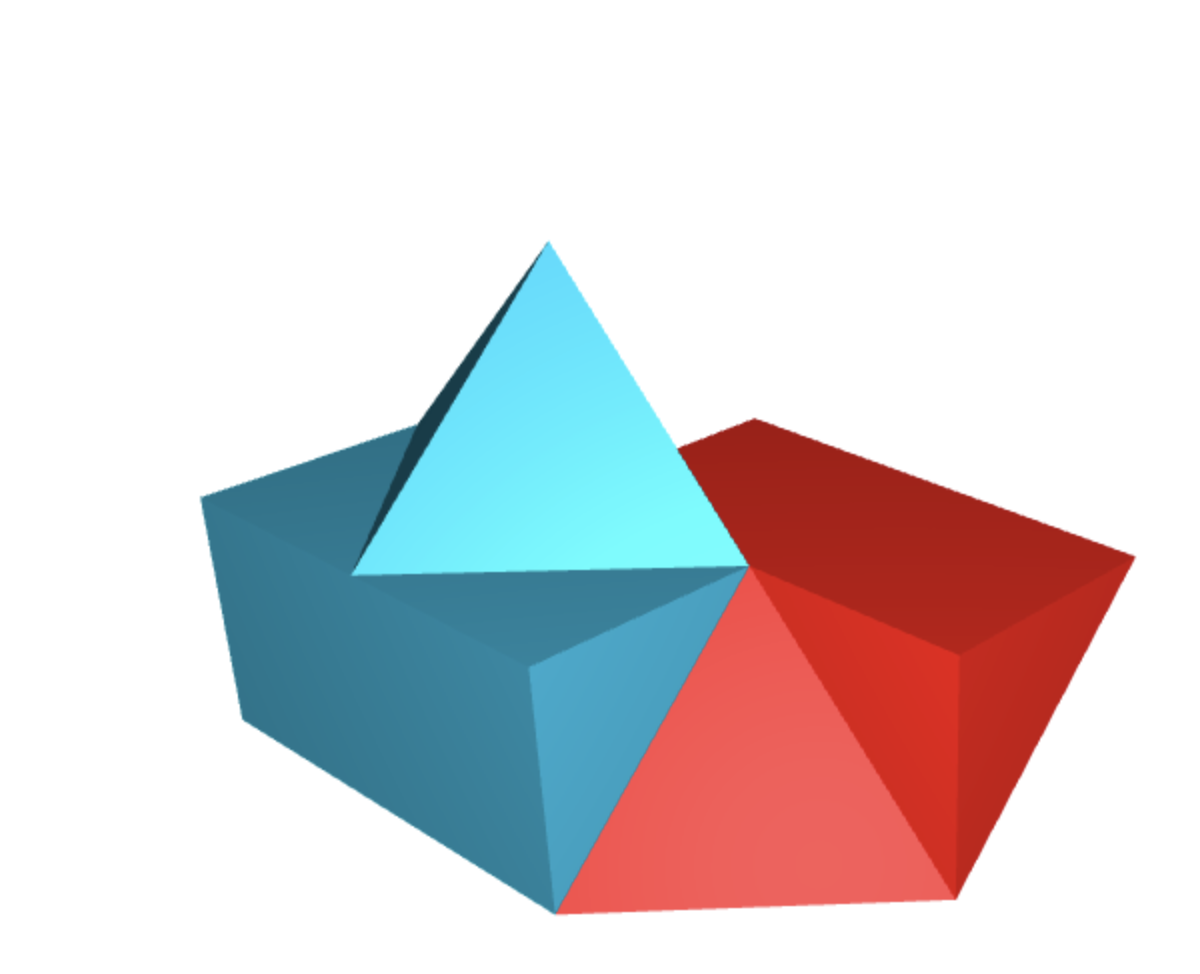}

        \includegraphics[height=2cm]{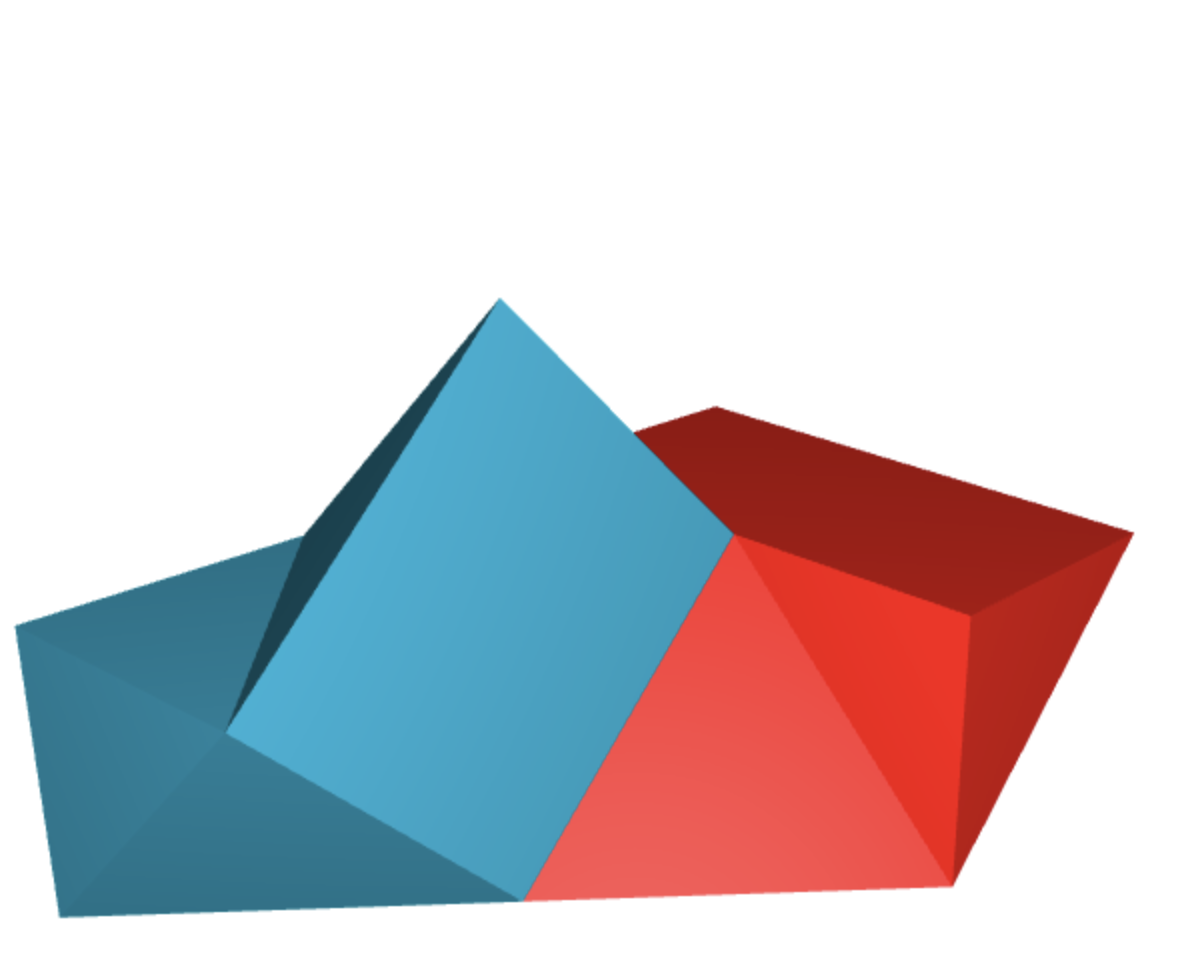}
        \includegraphics[height=2cm]{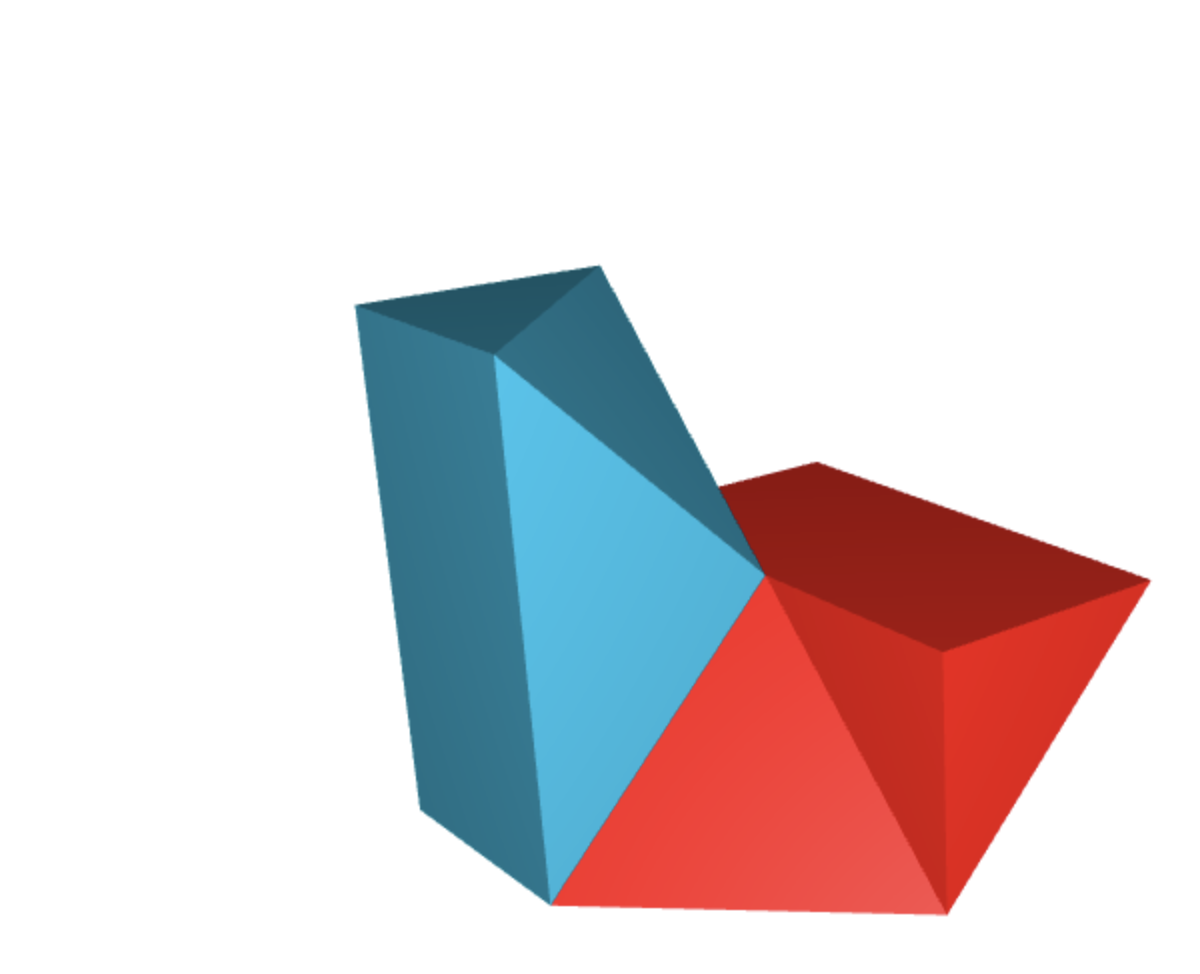}
        \includegraphics[height=2cm]{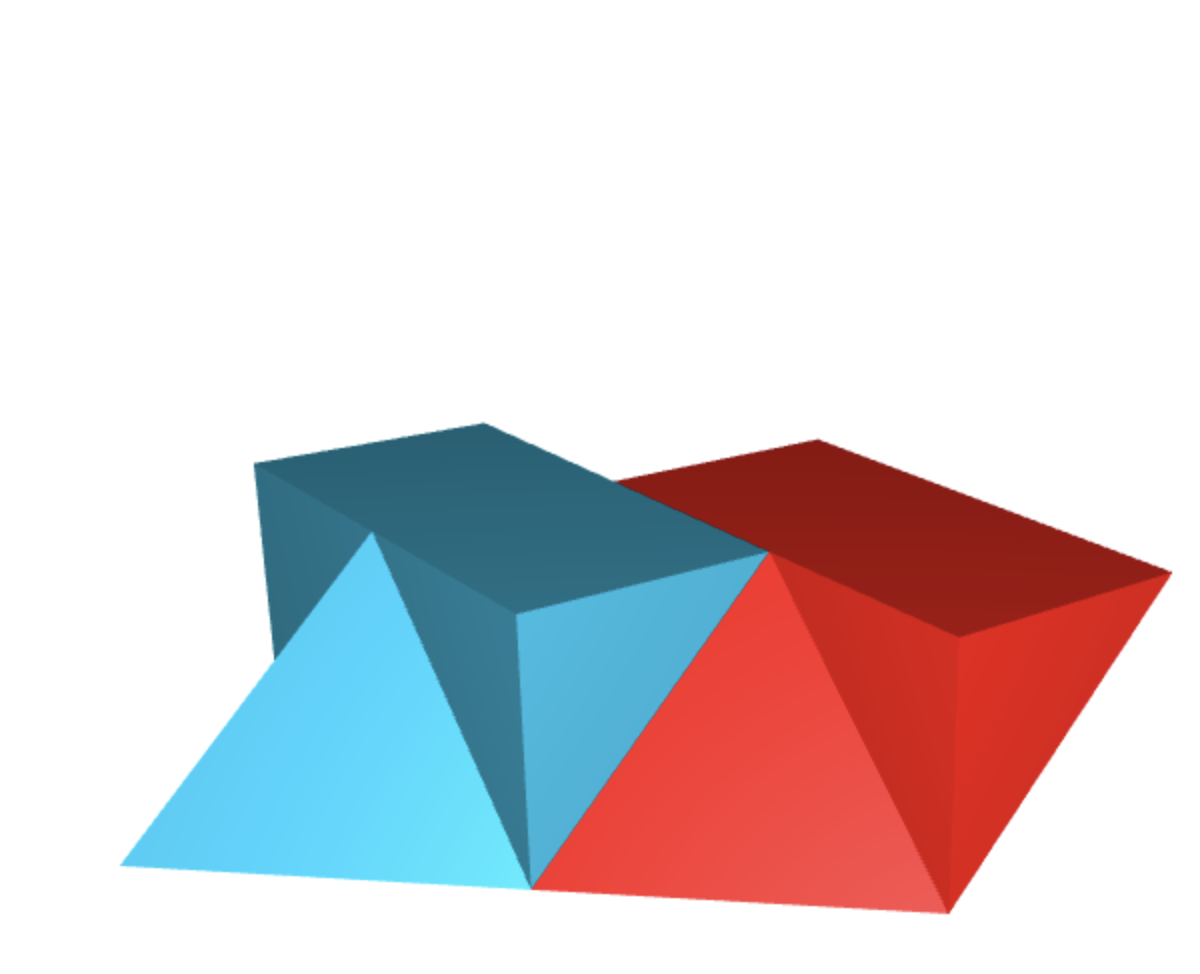}
\caption{There are exactly 6 possible ways, up to isomorphism, to assemble two copies of the Versatile Block such that inclined faces of one block meet declined faces of the other, see \cite{GoertzenBridges}.}
\label{assemblies}
\end{figure}

Assembling two copies of the Versatile Block between the planes  $z=0$ and $z=1$ given by $\langle(1,0,0),(0,1,0)\rangle$ and $\langle(1,0,0),(0,1,0)\rangle + (0,0,1)=\{(x,y,1) \mid x,y\in \R \}$ as displayed in the left-most pictures in Figure \ref{assemblies}, leads to a wide range of possible assemblies, called \textit{planar assemblies}. Since the Versatile Block construction is based on VersaTiles, the planar interlocking assemblies with the Versatile Block can be classified by square Truchet tiles. Apart from planar tilings, the Versatile Block admits different space-tessellations, see \cite{GoertzenBridges}.
Given that all coordinates of the Versatile Block are in $\mathbb{Z}^3$, the natural question arises regarding whether the coordinates of potential assemblies also belong to the integer lattice.

\begin{remark}
Let $A$ be an assembly with copies of the Versatile Block, such that contact faces are as shown in Figure \ref{assemblies}. For two blocks, it can be shown that all coordinates lie in $\mathbb{Z}^3$. By induction (removing blocks) from an assembly with $n>3$ blocks, it follows that all coordinates have integer components. 
\end{remark}

We can also employ the construction method presented in Section \ref{sec:construcion} to approximate a smooth surface. For this, we consider the VersaTile constructed in Figure \ref{fig:versatile_block_construction_smooth}. 

\begin{figure}[H]
    \centering
\resizebox{!}{2cm}{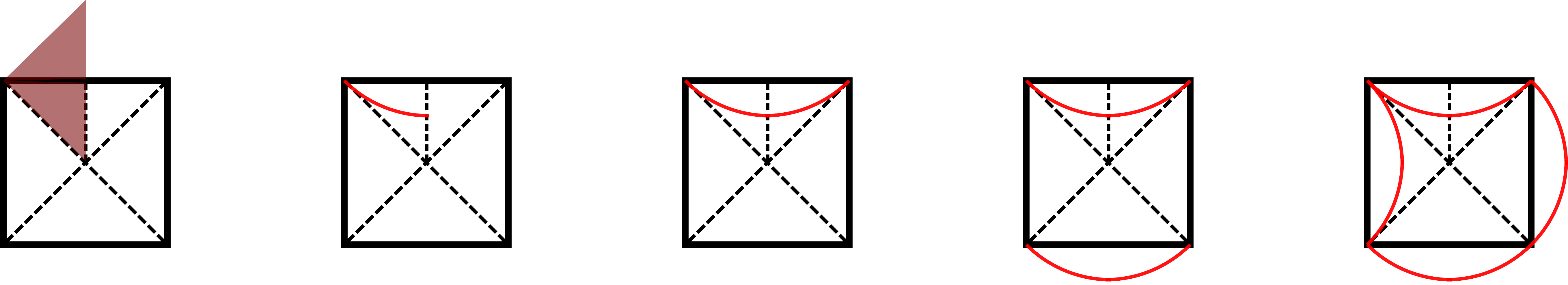}
    \caption{Smooth VersaTile construction.}
    \label{fig:versatile_block_construction_smooth}
\end{figure}

We can approximate the smooth curve by piecewise linear paths and iterate the block construction to obtain the block in Figure \ref{fig:smooth_versatile_block}.

\begin{figure}[H]
    \centering
    \begin{minipage}{.3\textwidth}
        \centering
        \includegraphics[height=2.5cm]{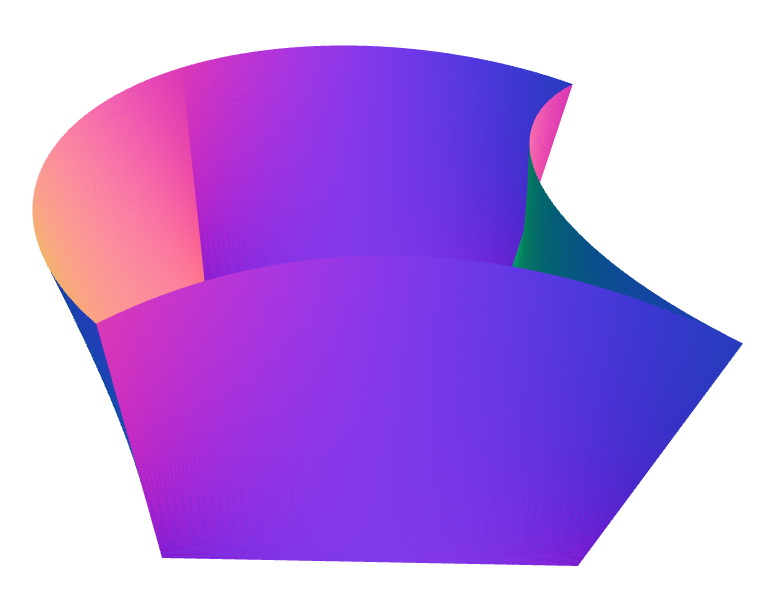}
        \subcaption{}
    \end{minipage}
    \begin{minipage}{.3\textwidth}
        \centering
        \includegraphics[height=2.5cm]{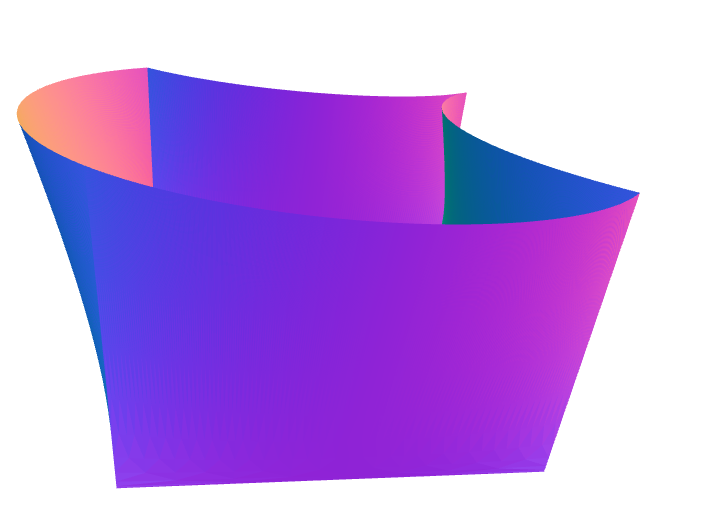}
        \subcaption{}
    \end{minipage}
    \begin{minipage}{.3\textwidth}
        \centering
        \includegraphics[height=2.5cm]{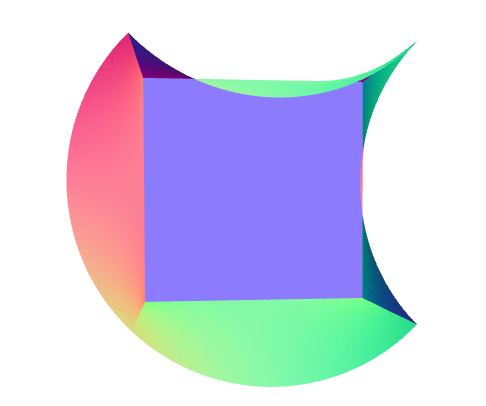}
        \subcaption{}
    \end{minipage}
    \caption{Several views of the resulting block from the VersaTile construction in Figure \ref{fig:versatile_block_construction_smooth}: (a) front view, (b) side view, and (c) top view.}
    \label{fig:smooth_versatile_block}
\end{figure}

\subsection{Limit Case p3 -- The RhomBlock}
In this subsection, we describe a block coming from the deformation of a lozenge tile (rhomb) into a hexagon and yielding a candidate to an interlocking block, thus the name \emph{RhomBlock} which can be also seen as a hexagonal version of the Versatile Block. The interlocking property of assemblies with copies of RhomBlocks is proved in the following sections.

\subsubsection{Construction}

Using the construction described in Section \ref{sec:construcion}, we can construct a block based on Figure \ref{fig:hexagon}.
\begin{figure}[H]
    \centering
    \resizebox{!}{1.5cm}{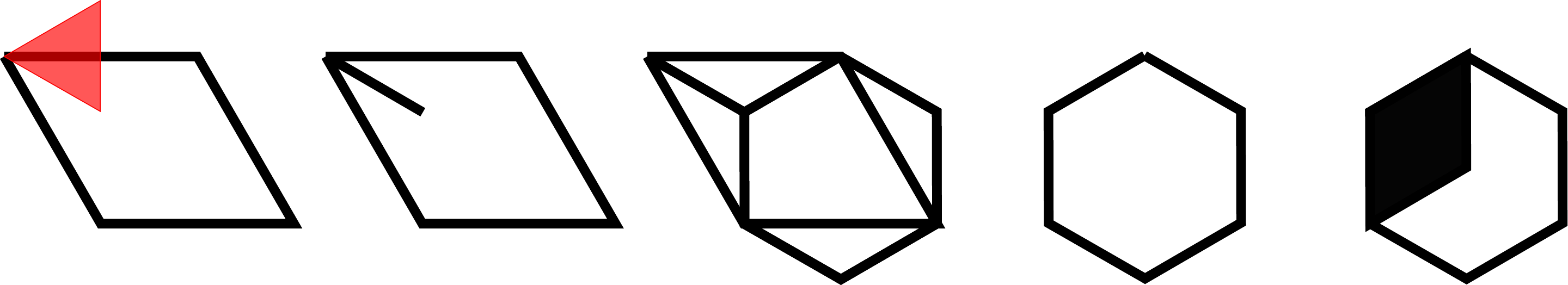}
 \caption{Construction steps for VersaTile with p3 symmetriy with linear path and black lozenge implying former orientation.}
    \label{fig:hexagon}
\end{figure}
The surface of the resulting block can be triangulated using the coordinates
\[
\begin{array}{cccccc}
v_{1} & v_{2} & v_{3} & v_{4} & v_{5} \\
\left(0,0,0\right)^\intercal & \left(\frac{1}{2},\frac{\sqrt{3}}{2},0\right)^\intercal & \left(1,0,0\right)^\intercal & \left(\frac{1}{2},-\frac{\sqrt{3}}{2},0\right)^\intercal & \left(0,0,\sqrt{\frac{2}{3}}\right)^\intercal \\
v_{6} & v_{7} & v_{8} & v_{9} & v_{10}\\
\left(\frac{1}{2},\frac{\sqrt{3}}{6},\sqrt{\frac{2}{3}}\right)^\intercal & \left(1,0,\sqrt{\frac{2}{3}}\right)^\intercal & \left(1,-\frac{\sqrt{3}}{3},\sqrt{\frac{2}{3}}\right)^\intercal & \left(\frac{1}{2},-\frac{\sqrt{3}}{2},\sqrt{\frac{2}{3}}\right)^\intercal & \left(0,-\frac{\sqrt{3}}{3},\sqrt{\frac{2}{3}}\right)^\intercal
\end{array}
\]
and corresponding vertices of faces:
\begin{align*}
    &[[ 1, 2, 3 ], [ 1, 3, 4 ], [ 1, 5, 6 ], [ 1, 2, 6 ], [ 2, 3, 6 ], [ 3, 6, 7 ], [ 3, 7, 8 ], [ 3, 4, 8 ], [ 4, 8, 9 ],  [ 4, 9, 10 ],\\&\phantom{[}  [ 1, 4, 10 ], [ 1, 5, 10 ], [ 5, 6, 7 ], [ 5, 9, 10 ], [ 7, 8, 9 ], [ 5, 7, 9 ]].\notag
\end{align*}
In Figure \ref{fig:RhomBlock} we see the resulting block.
\begin{figure}[H]
\centering
\begin{minipage}{.4\textwidth}
\begin{figure}[H]
    \centering
    \resizebox{!}{2.5cm}{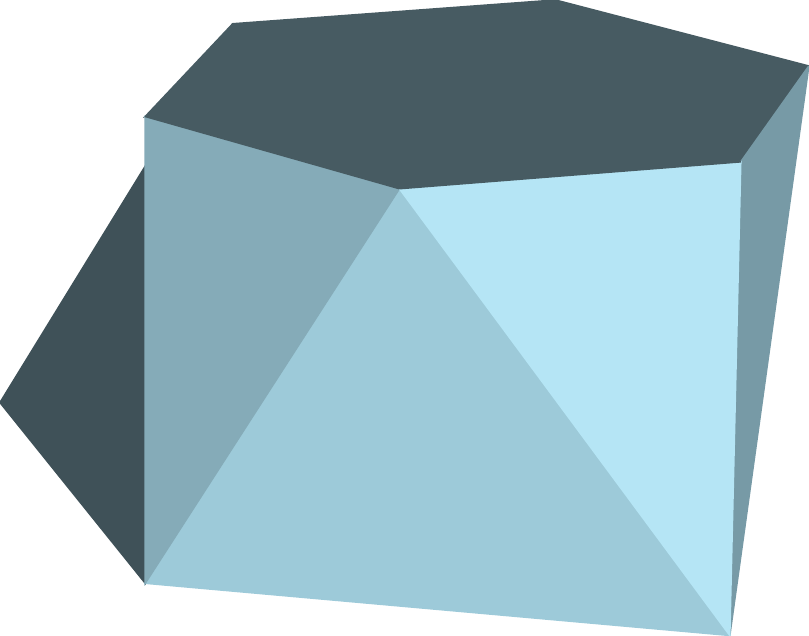}
\end{figure}    
\end{minipage}
\begin{minipage}{.4\textwidth}
\begin{figure}[H]
    \centering
    \resizebox{!}{2.5cm}{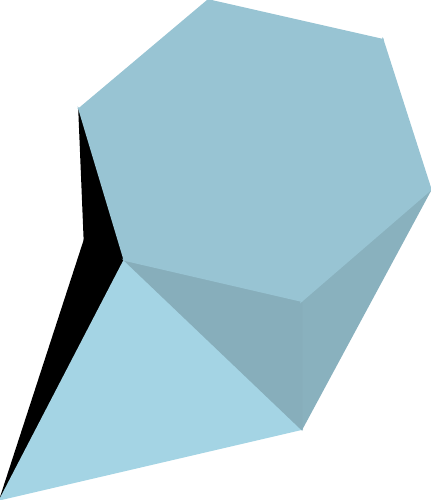}
\end{figure}    
\end{minipage}
\caption{Two views of the RhomBlock.}
\label{fig:RhomBlock}
\end{figure}

Similarly, as in the p4-case, we can construct a smooth version of the RhomBlock.

\begin{figure}[H]
    \centering
    \resizebox{!}{2cm}{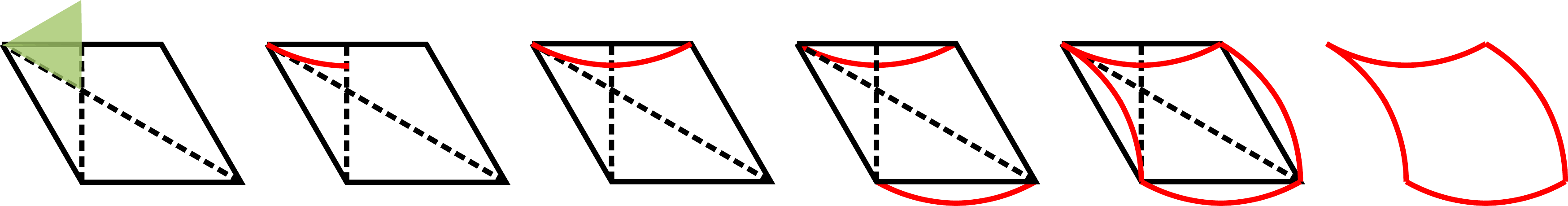}
    \caption{Smooth VersaTile construction for p3.}
    \label{fig:versatile_block_construction_smooth_p3}
\end{figure}

\begin{figure}[H]
    \centering
    \begin{minipage}{.3\textwidth}
        \centering
        \includegraphics[height=2.5cm]{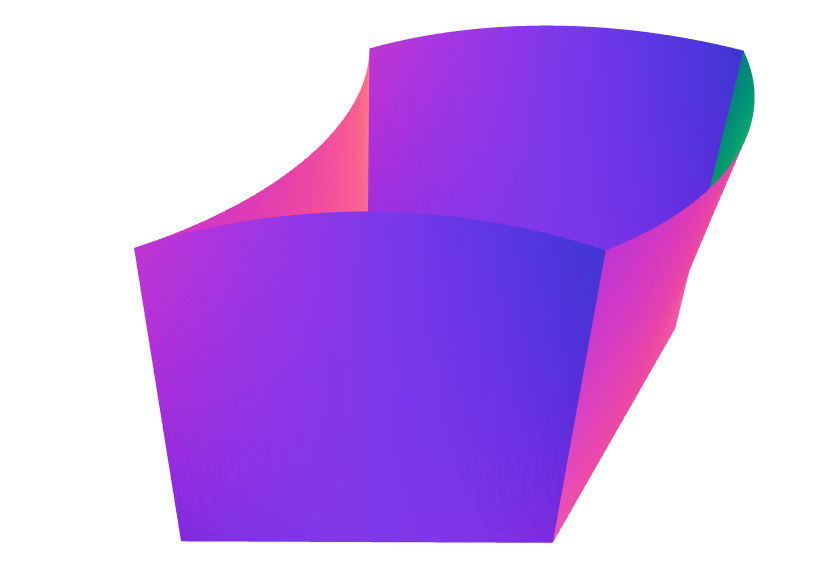}
        \subcaption{}
    \end{minipage}
    \begin{minipage}{.3\textwidth}
        \centering
        \includegraphics[height=2.5cm]{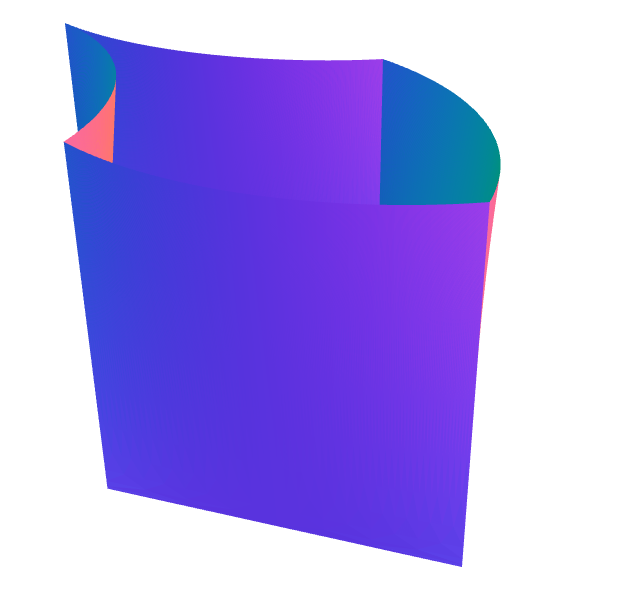}
        \subcaption{}
    \end{minipage}
    \begin{minipage}{.3\textwidth}
        \centering
        \includegraphics[height=2.5cm]{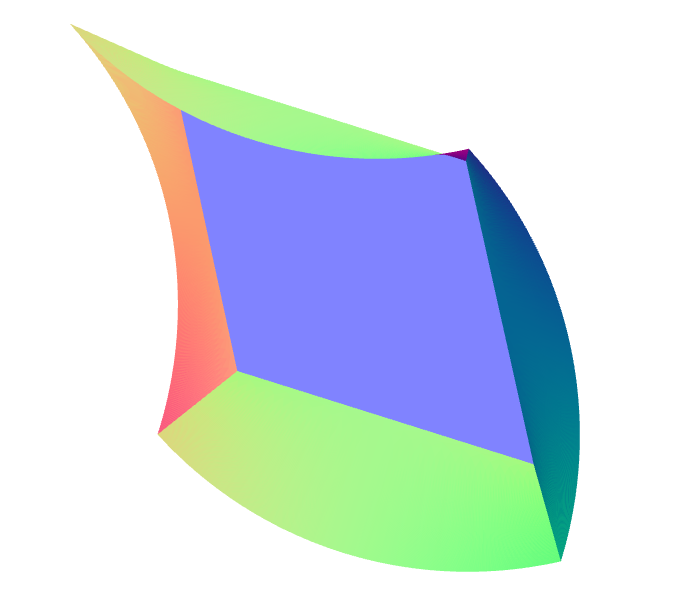}
        \subcaption{}
    \end{minipage}
    \caption{Several views of the resulting block from the VersaTile construction in Figure \ref{fig:versatile_block_construction_smooth_p3}: (a) front view, (b) side view, and (c) top view.}
    \label{fig:smooth_versatile_block_p3}
\end{figure}

\subsubsection{Combinatorics of the RhomBlock}

Since lozenges carry a $C_2\times C_2$ symmetry, we can associate a lozenge tiling to a VersaTile tiling in two ways by switching between the two bipartite colouring of the hexagonal plane. 

\begin{lemma}\label{lemma:VersaTiles_lozenges}
    Each lozenge tiling leads to two VersaTile tilings.
\end{lemma}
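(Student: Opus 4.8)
The plan is to translate the lozenge tiling into the language of dimer coverings and exploit the fact that a lozenge, while having a Klein four-group of symmetries, only admits a two-element group of colour-preserving symmetries, so that decorating it with a generalised Truchet structure involves exactly one binary choice. Concretely, a lozenge tiling of a (connected) region $R$ built from unit equilateral triangles is the same datum as a perfect matching of the honeycomb graph $\Lambda$ dual to the triangular lattice: each lattice triangle becomes a vertex of $\Lambda$, adjacent triangles become adjacent vertices, and a lozenge becomes an edge of $\Lambda$, i.e.\ a dimer. The graph $\Lambda$ is bipartite with colour classes the ``up'' and the ``down'' triangles, and up to interchange these are its only two proper $2$-colourings; this $C_2$ of colour-preserving moves is the reflection through the long diagonal sitting inside the $C_2\times C_2$ symmetry group of the lozenge, and the two cosets correspond to the two bipartite colourings $\chi,\bar\chi$.

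First I would fix one of the two colourings $\chi$ and use it to decorate the tiling: each lozenge $\ell$ covers exactly one up-triangle and one down-triangle, and I colour its two atoms black and white according to $\chi$. The next step is to verify the generalised Truchet rule that two distinct tiles meet only along opposite colours. If $\ell$ and $\ell'$ share an edge $e$, then $e$ is an edge of the triangular lattice; the atom of $\ell$ lying along $e$ is one of the two triangles incident to $e$ and the atom of $\ell'$ lying along $e$ is the other, so they lie in different classes of $\Lambda$ and receive opposite colours under $\chi$. Hence the decorated tiling is a valid generalised Truchet tiling in the sense of Section~\ref{sec:VersaTiles}.

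Then I would lift this combinatorial decoration to an honest geometric tiling by copies of the VersaTile via the Escher Trick of Section~\ref{sec:construcion} subject to the first and second versatility conditions: replacing each straight edge of a lozenge by the chosen path $\gamma$ turns every decorated lozenge into a \emph{congruent} copy of the VersaTile. The point is that the versatility conditions are precisely what makes one tile shape serve for all lozenges irrespective of orientation: condition~(1) makes $\gamma$ symmetric about the perpendicular bisector of each edge, and condition~(2) couples the two edge paths by the rotation through $\alpha$, so the whole boundary of every decorated lozenge is determined by a single path segment up to the symmetry of the ambient wallpaper group (the red triangle of Figure~\ref{fig:apple_man}). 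Because the decorated tiling obeys the Truchet rule and the orbit of $\gamma$ under the group action does not cross (Definition~\ref{def:crossing_piecewise}, Lemma~\ref{lemma:new_domain}), the bulge produced on one side of a shared edge is exactly the complementary indentation on the other side, so the deformed tiles interlock without overlaps or gaps and assemble into a VersaTile tiling. Carrying this out for $\chi$ and for $\bar\chi$ yields two VersaTile tilings, distinct whenever the lozenge tiling is nontrivial, which is the assertion of the lemma.

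I expect the combinatorial bookkeeping in the first two steps to be routine once the dimer/honeycomb dictionary is set up; the main obstacle is the geometric lifting in the third step, namely checking rigorously that a combinatorially valid generalised Truchet tiling produces a genuine face-to-face tiling by congruent VersaTiles. This amounts to confirming that the perpendicular-bisector symmetry forces matching deformations across every shared edge and that the non-crossing hypothesis excludes collisions with non-adjacent tiles; this is where one must be careful about the interplay between the two versatility conditions and the admissible class of paths.
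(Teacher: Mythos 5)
Your proposal is correct and follows essentially the same route as the paper, which offers no formal proof but justifies the lemma in the preceding sentence by the $C_2\times C_2$ symmetry of the lozenge and the two bipartite colourings of the hexagonal plane. Your write-up simply makes this explicit: the dimer/honeycomb dictionary, the verification of the Truchet adjacency rule for each colouring $\chi,\bar\chi$, and the geometric lifting via the versatility conditions are exactly the details the paper leaves implicit.
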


There are various ways of associating a graph to a lozenges tiling. One way is to consider the regular hexagonal tiling of the plane, where each hexagon can be subdivided into six equilateral triangles. Then, a lozenge tiling corresponds to a perfect matching when considering the \emph{face graph}, also called \emph{standard graph} (see below), of this equilateral triangle tiling. In the following definition, we present some graph construction with examples in Figure \ref{fig:lozenges_graphs}.

\begin{definition}
    Given a lozenge tiling $T$, we define the following graphs with additional information:
    \begin{enumerate}
        \item The \emph{standard graph} $\mathcal{G}_T$ of $T$ is defined to be the undirected graph with nodes corresponding to equilateral triangles contained in the lozenges tiling (place $T$ inside the hexagonal plane) and the edges of $\mathcal{G}_T$  correspond to edges of $T$ connecting two triangles. In order to recover $T$ from  $\mathcal{G}_T$, we can define a perfect matching connecting two triangles whenever they belong to the same lozenge tile. In the language of simplicial surfaces, a lozenge tiling can be recovered as a perfect matching of parts of the face graph of the hexagonal plane.
        \item The \emph{edge graph} $\mathcal{E}_T$ of $T$ is defined to be the undirected graph with nodes corresponding to the edges of $\mathcal{G}_T$ and two nodes are connected by an edge if they belong to the same triangle. The tiling $T$ can be recovered with a maximal independent set which correspond to the perfect matching of $\mathcal{G}_T$.
        \item The \emph{directed graph} $\mathcal{D}_T$ of $T$ is defined to be the directed graph with nodes corresponding to the lozenges and arcs corresponding to neighbouring tiles. The direction is based on the underlying bipartite hexagonal lattice, where two tiles $t_1,t_2$ are connected with an arc $(t_1,t_2)$ if the black part of $t_1$ neighbours the white part of $t_2$.
    \end{enumerate}
\end{definition}

Here, we associate a lozenge to a single hexagon, according to the RhomBlock construction whose upper face is given by a hexagon, and it only remains to say how it is oriented regarding it bottom face. In Figure \ref{fig:ExampleLozenges}, we see a lozenge tiling based on a substitution rule and its embedding in the bipartite hexagonal plane. Here, the substitution starts with an initial placement of the $6$ lozenges in the middle and then lozenges are added for each tile iteratively at their tip.  

\begin{figure}[H]
\centering
\begin{minipage}{.49\textwidth}
  \centering
  \includegraphics[height=3cm]{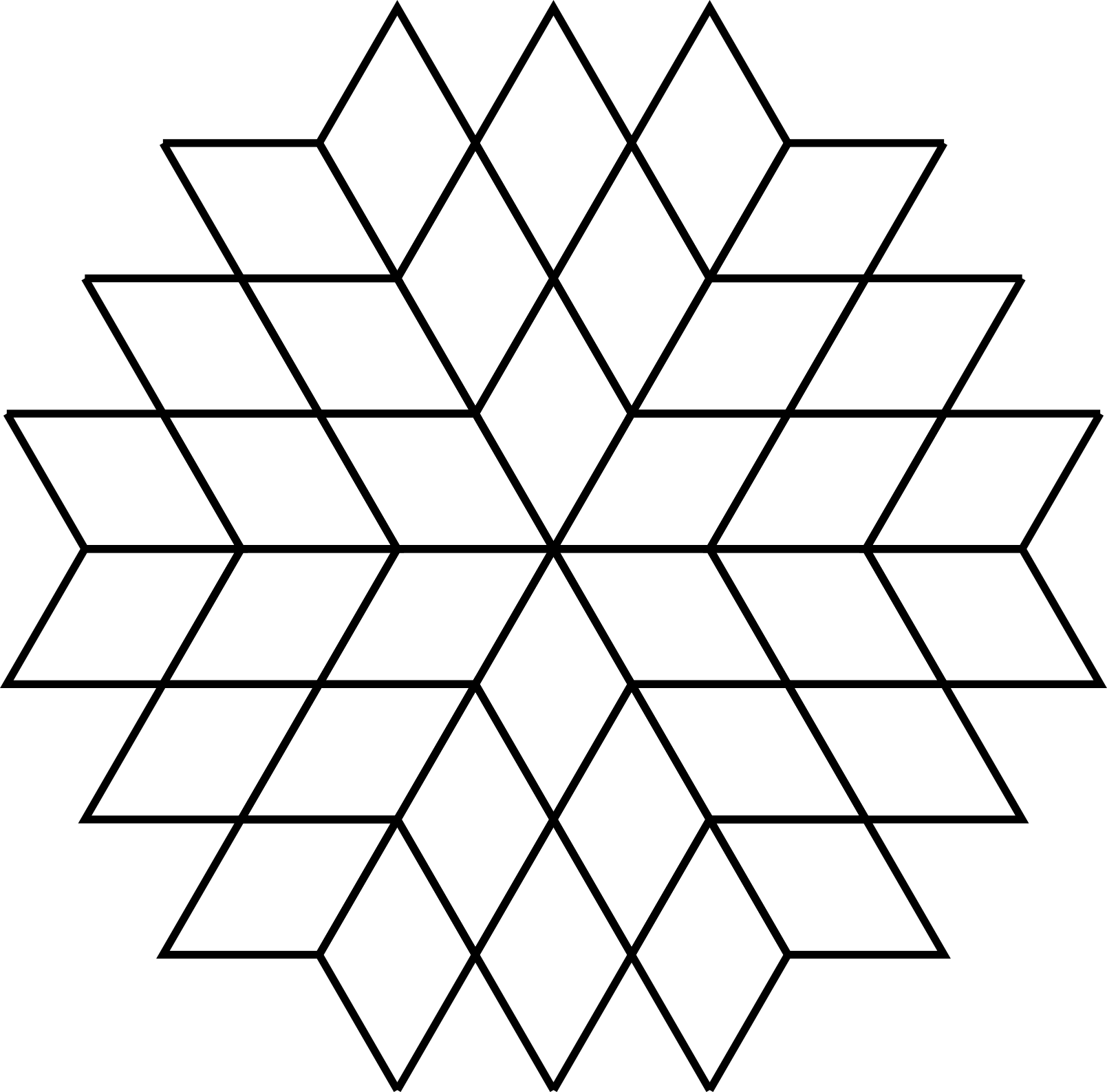}
  \subcaption{}
    \label{fig:ExampleLozenges}
\end{minipage}
\begin{minipage}{.49\textwidth}
  \centering
  \includegraphics[height=3cm]{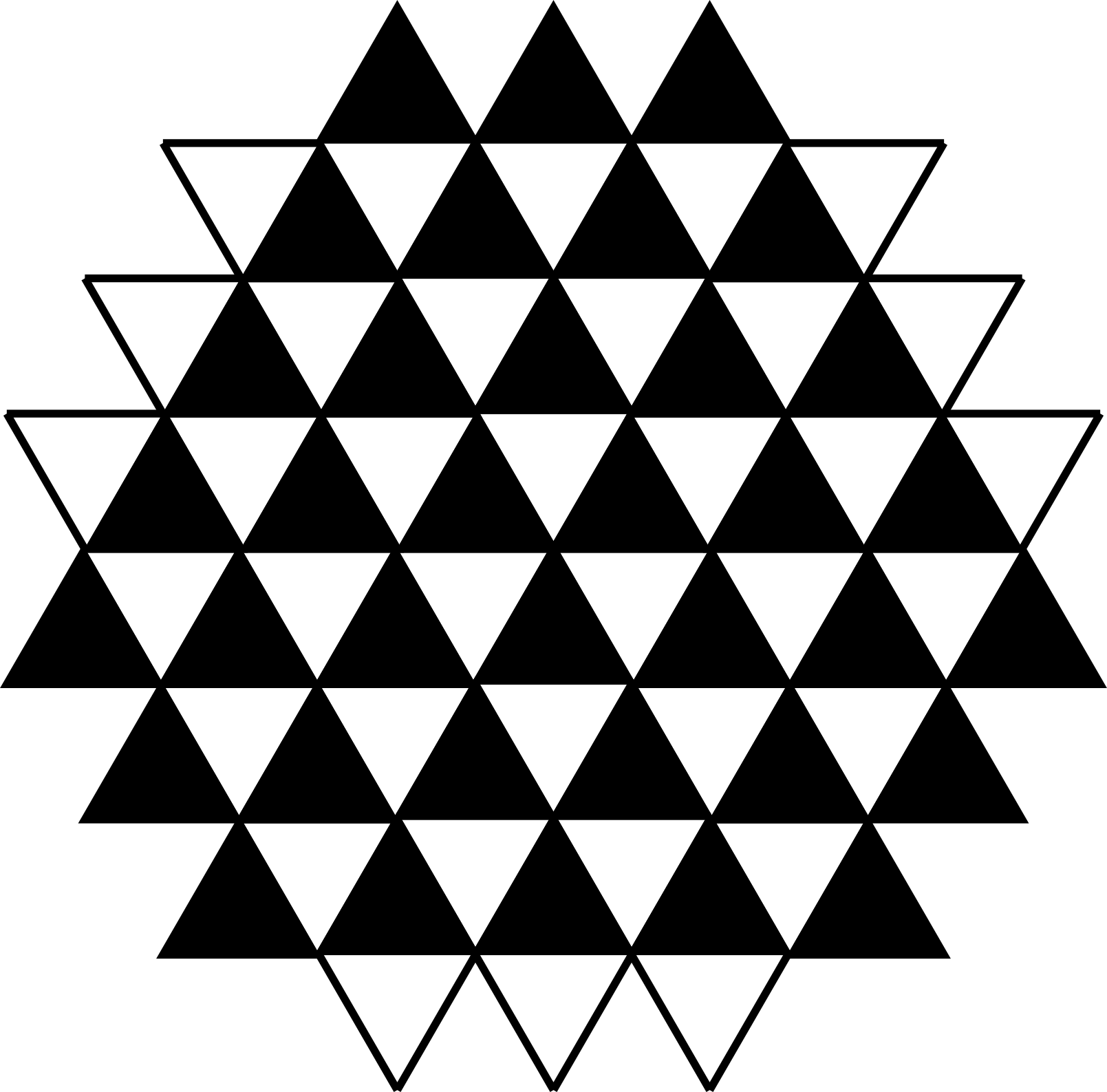}
  \subcaption{}
    \label{fig:ExampleLozengesHexagonal}
\end{minipage}
\caption{(a) Lozenge tiling, (b) lozenge tiling embedded bicoloured.}
\end{figure}

In Figure \ref{fig:lozenges_graphs}, we see three graphs with additional information that encode the tiling from Figure \ref{fig:ExampleLozenges}. 

\begin{figure}[H]
\centering
\begin{minipage}{.3\textwidth}
  \centering
  \includegraphics[height=3cm]{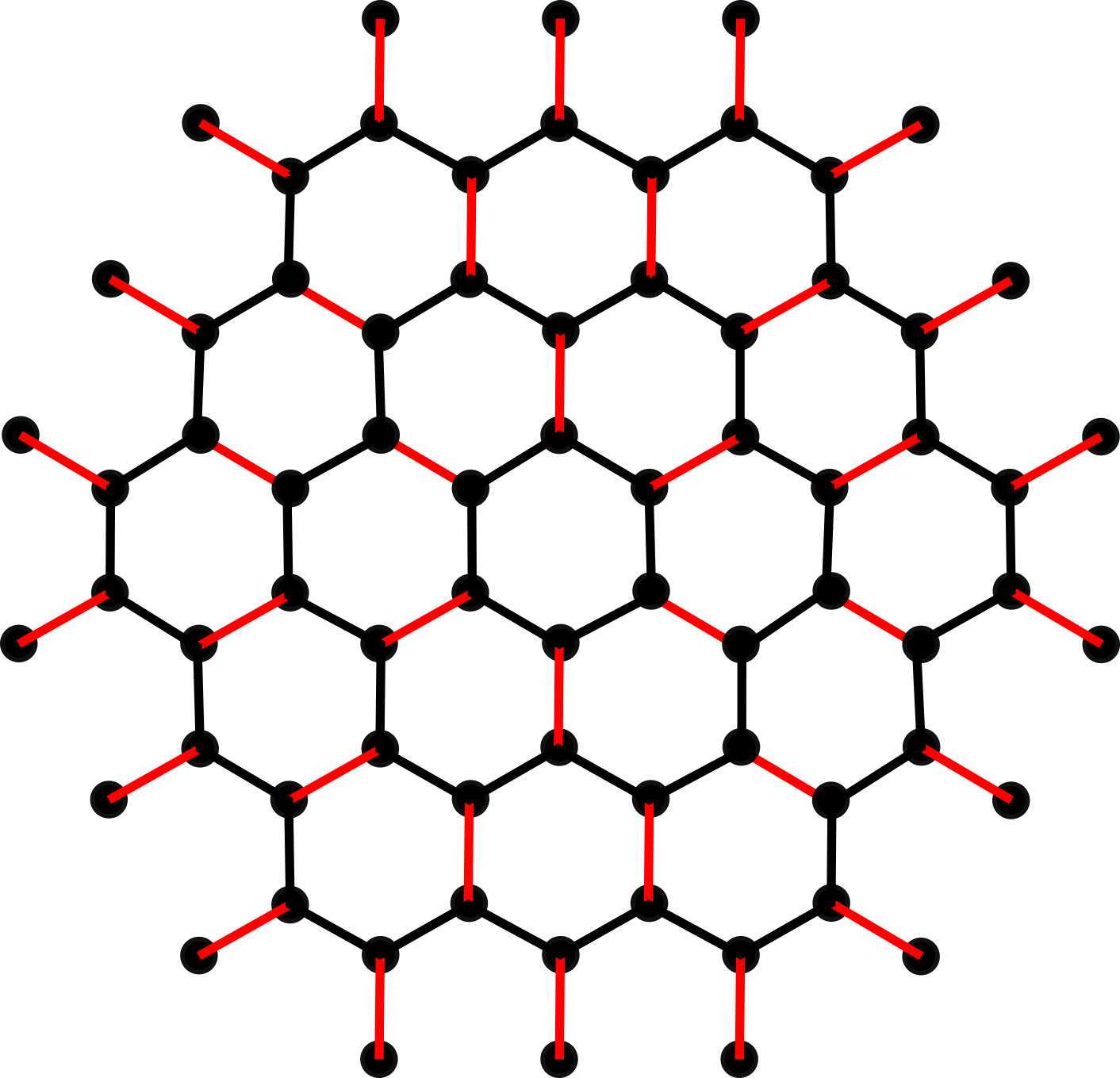}
  \subcaption{}
    \label{fig:ExampleStandardGraph}
\end{minipage}
\begin{minipage}{.3\textwidth}
  \centering
  \includegraphics[height=3cm]{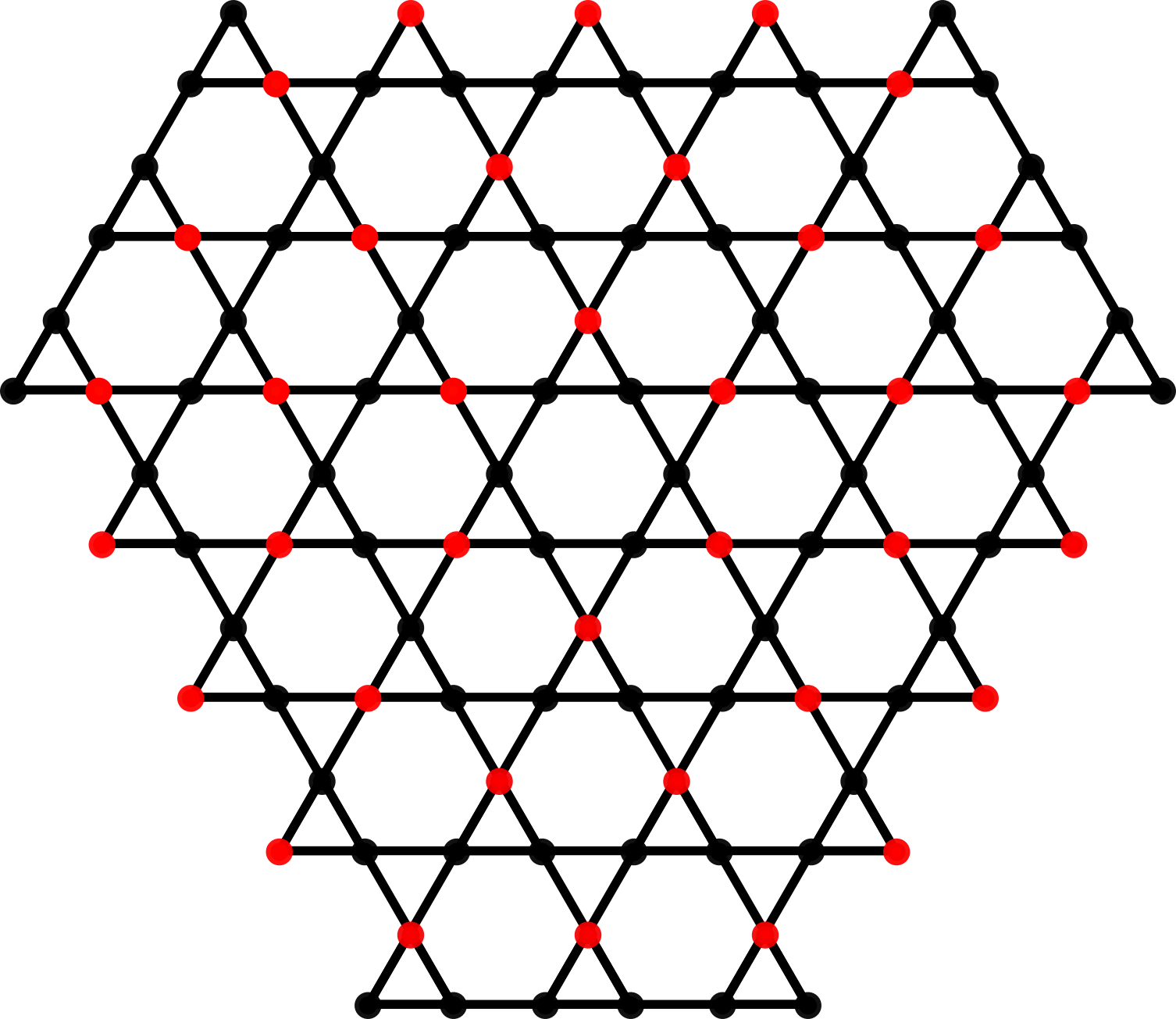}
  \subcaption{}
    \label{fig:ExampleVertexGraph}
\end{minipage}
\begin{minipage}{.3\textwidth}
  \centering
  \includegraphics[height=3cm]{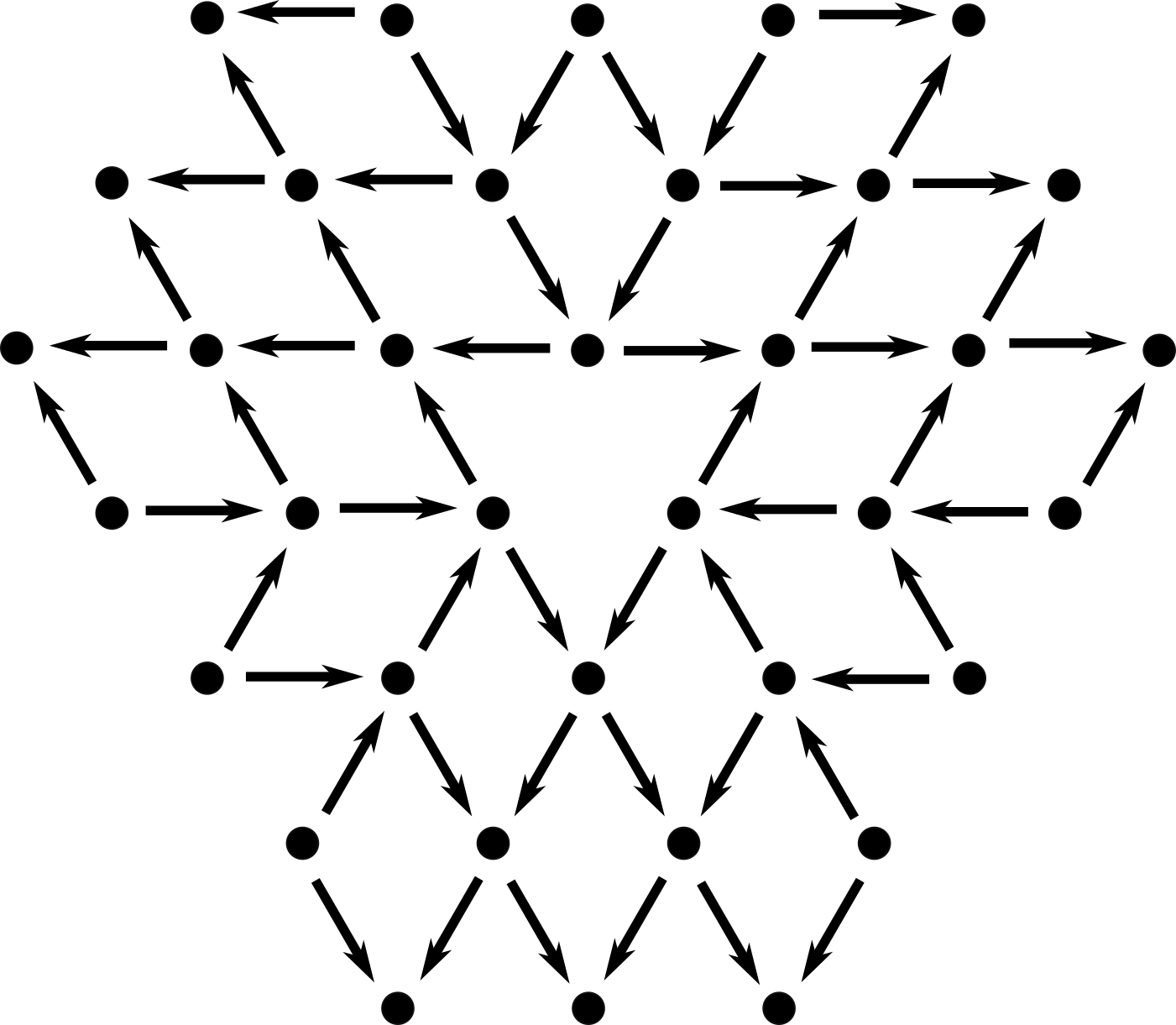}
  \subcaption{}
    \label{fig:ExampleDirectedGraph}
\end{minipage}
\caption{Three graphs representing tiling shown in Figure \ref{fig:ExampleLozenges} (a) standard graph with perfect matching (in red), (b) edge graph with maximal independent set (in red), (c) directed graph.}
\label{fig:lozenges_graphs}
\end{figure}

The graph, in Figure \ref{fig:ExampleStandardGraph}, has nodes corresponding to the equilateral triangles and edges corresponding to the edges of the hexagonal plane. The lozenge tiling then corresponds to a so-called \textit{perfect matching}, i.e.\ a subset of the edges such that each node is contained in exactly one edge. The graph, in Figure \ref{fig:ExampleVertexGraph}, has three nodes for each white face of the bipartite hexagonal plane corresponding to the edges. Here, a \textit{maximal independent set} of nodes directly corresponds to the perfect matching of the graph shown in Figure \ref{fig:ExampleStandardGraph} yielding the corresponding lozenge tiling. The directed graph in Figure \ref{fig:ExampleDirectedGraph}, can be obtained from the graph in Figure \ref{fig:ExampleVertexGraph} as follows: for each white triangle we have exactly one node and arcs pointing from black towards red nodes.

Below, we summarise the interplay between the defined graphs above.

\begin{proposition}
    Each graph together with its additional data yields a unique tiling with lozenges up to isomorphism.
\end{proposition}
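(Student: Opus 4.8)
The plan is to show that the three decorated graphs $\mathcal{G}_T$, $\mathcal{E}_T$ and $\mathcal{D}_T$ all encode one and the same combinatorial datum attached to the tiling $T$ — namely the partition of the elementary triangles of the hexagonal plane into adjacent pairs forming the lozenges of $T$ — and hence that each of them reconstructs $T$ up to the symmetries of the underlying lattice. Concretely, I would establish a chain of bijections of decorated objects
\[
\{\text{lozenge tilings } T\}/{\cong}\;\longleftrightarrow\;\{(\mathcal{G}_T,M)\}/{\cong}\;\longleftrightarrow\;\{(\mathcal{E}_T,S)\}/{\cong}\;\longleftrightarrow\;\{\mathcal{D}_T\}/{\cong},
\]
where $M$ runs over perfect matchings, $S$ over (maximum) independent sets, and $\cong$ denotes isomorphism respecting the extra data; uniqueness of the reconstructed tiling is then immediate.

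First I would treat the standard graph. After placing $T$ in the triangular subdivision of the hexagonal plane, every lozenge is the union of exactly one upward and one downward elementary triangle glued along a common edge; so $T$ determines the perfect matching $M_T$ of $\mathcal{G}_T$ pairing the two triangles of each lozenge, and conversely, given $(\mathcal{G}_T,M)$, re-gluing each matched pair of faces along its shared edge produces a lozenge tiling of the same region. These two operations are mutually inverse, and both $\mathcal{G}_T$ and $M_T$ depend on $T$ only up to a lattice symmetry, which gives the leftmost bijection and in particular the uniqueness claim for $(\mathcal{G}_T,M)$.

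Next, the edge graph. I would identify $\mathcal{E}_T$ with the graph whose vertices are the edges of $\mathcal{G}_T$, two being adjacent iff they bound a common elementary triangle (this is the description in the text via a $K_3$ for each white triangle and a $K_3$ for each black triangle). Then an independent set of $\mathcal{E}_T$ is precisely a matching of $\mathcal{G}_T$, and such a matching is perfect exactly when it is of maximum cardinality; hence perfect matchings of $\mathcal{G}_T$ correspond bijectively to the maximum independent sets of $\mathcal{E}_T$, and via the previous step we recover a unique lozenge tiling from $(\mathcal{E}_T,S)$. For the directed graph I would argue analogously: its vertices are the lozenges and its arcs, oriented by the fixed bipartite colouring, record for each pair of neighbouring lozenges which black half abuts which white half. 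This assigns a rigid motion to every arc; I would check that these motions are globally consistent — their product around every cycle of $\mathcal{D}_T$ is the identity — so a developing-map argument produces a planar tiling, which the colouring conventions pin down to be the unique lozenge tiling realising $\mathcal{D}_T$; more economically, one may instead exhibit $\mathcal{D}_T$ directly as a re-encoding of $(\mathcal{G}_T,M_T)$ and invoke the steps above.

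The pair-of-triangles description of a lozenge and the line-graph-style translation between matchings and independent sets are routine. The main obstacle is the directed-graph case: one must verify that the purely local orientation and adjacency data carried by $\mathcal{D}_T$ glue to a globally consistent planar tiling (no monodromy around cycles), rather than merely determining the tiling in a neighbourhood of each lozenge. I would also be careful to state exactly which independent sets of $\mathcal{E}_T$ are admissible, since a merely maximal (as opposed to maximum) independent set need not correspond to a perfect matching, and to fix at the outset the symmetry group — of the triangular/hexagonal lattice, or of the ambient plane — relative to which \emph{up to isomorphism} is meant.
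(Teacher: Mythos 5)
Your proposal is correct and follows essentially the same route as the paper's (very terse) proof: reduce all three decorated graphs to the standard graph with its perfect matching and recover the tiling by embedding in the triangulated hexagonal plane. You work out the details considerably more carefully than the paper does --- in particular the maximal-versus-maximum independent set distinction for $\mathcal{E}_T$ and the global consistency check for $\mathcal{D}_T$, both of which the paper's two-sentence argument glosses over.
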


\begin{proof}
    Each graph is uniquely determined by the other graphs. We can embed the vertices of the standard graph inside the hexagonal plane, such that two neighbouring triangles are connected by an edge. In this way, we obtain a lozenge tiling via a perfect matching.
\end{proof}

\begin{remark}
    The standard graph $\mathcal{G}_T$ is often used to enumerate lozenges in the literature, see \cite{gorin_2021} as it links lozenge tilings to perfect matchings of certain planar graphs. Kasteleyn theory, see \cite{gorin_2021}, gives a way of counting perfect matchings in a planar graph and enables the enumeration of lozenge tilings of certain domains, such as given in Lemma \ref{lemma:lozenge_hexagon}.
\end{remark}

\section{Acknowledgement}
This work was funded by the Deutsche Forschungsgemeinschaft (DFG, German Research Foundation) – SFB/TRR 280. Project-ID: 417002380. 

\newpage

\end{document}